\title{Shortest Beer Path Queries in Interval Graphs} 
\author{Rathish Das}{Department of Computer Science, University of Liverpool, UK\and \url{https://cs.uwaterloo.ca/~r27das/}}{rathish.das@liverpool.ac.uk}{}{}
\author{Meng He}{Faculty of Computer Science, Dalhousie University, Halifax, Canada\and \url{https://web.cs.dal.ca/~mhe/}}{mhe@cs.dal.ca}{}{}
\author{Eitan Kondratovsky}{Cheriton School of Computer Science, University of Waterloo, Canada \and \url{https://uwaterloo.ca/scholar/e2kondra/home}}{eitan.kondratovsky@uwaterloo.ca}{}{}
\author{J. Ian Munro}{Cheriton School of Computer Science, University of Waterloo, Canada \and \url{https://cs.uwaterloo.ca/~imunro/}}{imunro@uwaterloo.ca}{}{}
\author{Anurag Murty Naredla}{Cheriton School of Computer Science, University of Waterloo, Canada \and \url{}}{amnaredla@uwaterloo.ca}{}{}
\author{Kaiyu Wu}{Cheriton School of Computer Science, University of Waterloo, Canada}{k29wu@uwaterloo.ca}{https://orcid.org/0000-0001-7562-1336}{}
\authorrunning{Rathish Das, Meng He, Eitan Kondratovsky, J. Ian Munro, Anurag Murty, Kaiyu Wu}
\keywords{Beer Path, Interval Graph} 
\begin{document}
\newcommand{\qbsp}{\texttt{beer\_shortest\_path}}
\newcommand{\qbdist}{\texttt{beer\_dist}}
\newcommand{\qsp}{\texttt{shortest\_path}}
\newcommand{\qdist}{\texttt{dist}}
\newcommand{\qadj}{\texttt{adjacent}}
\newcommand{\qdeg}{\texttt{degree}}
\newcommand{\qnb}{\texttt{neighbourhood}}
\newcommand{\qrank}{\texttt{rank}}
\newcommand{\qselect}{\texttt{select}}
\newcommand{\qaccess}{\texttt{access}}
\newcommand{\qdepth}{\texttt{depth}}
\newcommand{\qpost}{\texttt{post}}
\newcommand{\qlast}{\texttt{last}}
\newcommand{\qpred}{\texttt{pred}}

\maketitle

\begin{abstract}
Our interest is in paths between pairs of vertices that go through at least one of a subset of the vertices known as beer vertices. Such a path is called a beer path, and the beer distance between two vertices is the length of the shortest beer path.

We show that we can represent unweighted interval graphs using $2n \log n + O(n) + O(|B|\log n)$ bits where $|B|$ is the number of beer vertices. This data structure answers beer distance queries in $O(\log^\varepsilon n)$ time for any constant $\varepsilon > 0$ and shortest beer path queries in $O(\log^\varepsilon n + d)$ time, where $d$ is the beer distance between the two nodes. We also show that proper interval graphs may be represented using $3n + o(n)$ bits to support beer distance queries in $O(f(n)\log n)$ time for any $f(n) \in \omega(1)$ and shortest beer path queries in $O(d)$ time. All of these results also have time-space trade-offs.

Lastly we show that the information theoretic lower bound for beer proper interval graphs is very close to the space of our structure, namely $\log(4+2\sqrt{3})n - o(n)$ (or about $ 2.9 n$) bits.
\end{abstract}
\section{Introduction}
The concept of a beer path was recently introduced by Bacic et al. \cite{DBLP:conf/isaac/Bacic0S21}. The premise is simple, suppose you wish to visit a friend, and wish to pick up some beer along the way because you don't want to show up empty handed, what is the fastest way to do so?
More formally, for a graph, we specify a set of vertices, which will act as beer stores. A beer path is one which passes through at least one of these designated vertices. We will say a beer graph is one where we have designated a subset of the vertices to be beer stores.
Though this premise may be somewhat silly, it can have many applications. For example, suppose you are going on a road trip and to be efficient, want to drop something off at a post office on the way. Or perhaps on your trip, you realize that you currently don't have enough gas, so you must visit a gas station somewhere along the way. Another hypothetical situation would be if a package needs to be transported, but due to regulations, one of the stops must be equipped for an inspection.

It is easily seen that a shortest beer path may not be simple, but will always consist of two shortest paths: from the beer store to the source, and to the destination.

In this paper, we study the shortest beer path problem on unweighted interval graphs: intersection graphs of intervals on the real line. Interval graphs are a well-known class of graphs and have applications in operations research \cite{DBLP:journals/jacm/Bar-NoyBFNS01} and bioinformatics \cite{DBLP:journals/bioinformatics/ZhangSFCWKB94}. For a more indepth treatment of interval graphs and their applications, see the book of Golumbic \cite{golumbic2004algorithmic}.

\textbf{Related Work:} 
Bacic et al. \cite{DBLP:conf/isaac/Bacic0S21} studied the problem on weighted outerplanar graphs. They showed that on an outerplanar graph of $n$ vertices, a data structure of size $O(m)$ words for any $m \ge n$ can be constructed in $O(m)$ time to support shortest beer path and beer distance - the length of the shortest beer path in $O(\alpha(m,n))$ time, where $\alpha$ is the inverse Ackermann function.

On the more general problem of graph data structures, Acan et al. \cite{DBLP:journals/algorithmica/AcanCJS21} showed that interval graphs may be represented succinctly using $n\log n +O(n)$ bits of space to answer basic navigational queries: $\qadj, \qdeg, \qnb$ and $\qsp$ in optimal time: $O(1)$ or $O(1)$ for each vertex in the output. 
Building on Acan et al.'s work, He et al. \cite{DBLP:conf/isaac/0001MNWW20} added the $\qdist$ query for interval graphs. Their data structure has the same space $n\log n + O(n)\footnote{We will use $\log$ to denote $\log_2$}$ bits, the same run time of the old operations but also supports $\qdist$ in $O(1)$ time.

\subsection{Our Results and Paper Layout}
We give data structures for beer interval graphs and beer proper interval graphs that have time-space trade offs. 
An interval graph is a graph where we may assign an interval on the real line to each vertex - $v \mapsto [ l_v,r_v ]$ such that two vertices $u,v$ are adjacent exactly when the corresponding intervals intersect \cite{hajos1957art,lekkeikerker1962representation}.
A proper interval graph is an interval graph where the intervals must be chosen so that no two intervals nest. Furthermore, we prove a lower bound result on the space required for beer proper interval graphs.

The main obstacle in constructing the data structures for the beer path queries is that the set of paths between two vertices (which are normally the feasible solutions to the shortest path problem) are arbitrarily filtered by the beer nodes into a smaller set of feasible solutions to the beer shortest path problem - by whether a beer node exists on the path or not. In the case that a beer node exists on one of the shortest paths, then it is clearly optimal and we must be able to detect this. Thus we must be able conduct this filtering process as well, and we achieve this by using orthogonal range search on the previously established data structures which looks at all paths. 


In section \ref{s:ds-proper} we study the beer distance problem in proper interval graphs. We first outline the steps that we need to implement in our data structures. Then in subsection \ref{ss:ds-proper-1} we give a data structure occupying $3n + o(n) + O(|B|\log n)$ bits of space, supporting all regular operations in the same complexity as the previous works of Acan et al. and He et al., and the queries related to beer distance:
\begin{itemize}
    \item $\qbdist$ in $O(\log^\varepsilon n)$ time, for any constant $\varepsilon > 0$
    \item $\qbsp$ in $O(1)$ time per vertex on the path.
\end{itemize}

We may also utilize a trade-off provided by our auxiliary data structures, which decrease the query times by substituting $\log^\varepsilon n$ with $\log\log n$ at the cost of increasing one of the space cost terms from $O(|B|\log n)$ to $O(|B|\log n\log\log n)$ bits. We note that in this data structure, the space is dependent on $|B|$ the number of beer vertices, and is therefore undesirable if $|B|$ is large. In the case that there are many beer vertices, the above data structure can use $O(n\log n)$ bits of space. 

In subsection \ref{s:improved-ds}, we use the tree structure of the distance queries to eliminate the dependence on $|B|$ at the cost of slightly increasing the run time. For beer proper interval graphs, we have a data structure using $3n + o(n)$ bits of space, which supports all the regular queries in their original optimal complexities, and the beer queries:
\begin{itemize}
    \item $\qbdist$ in $O(f(n)\log n)$ time for any function $f(n) \in \omega(1)$. Different $f(n)$ will impact the lower order term $o(n)$ in the space complexity.
    \item $\qbsp$ in $O(1)$ per vertex on the path.
\end{itemize}
In section \ref{s:ds-interval} we study the beer distance problem in interval graphs. For this, we give a data structure using $2n\log n + O(n) + O(|B|\log n)$ bits, where $|B|$ is the number of beer vertices in the graph, and supports all regular operations in the same time complexity as above, and 
\begin{itemize}
    \item $\qbdist$ in $O(\log^\varepsilon n)$ time, for any constant $\varepsilon > 0$.
    \item $\qbsp$ in $O(\log^\varepsilon n + d)$ time, where $d$ is the beer distance between the two vertices. 
\end{itemize}

Again we may utilize the same trade off to replace the $\log^\varepsilon n$ term by $\log\log n$ at the cost of increase the space term $|B|\log n$ to $|B|\log n\log\log n$.

Finally in section \ref{s:lower-bound} we count the number of non-isomorphic beer proper interval graphs and use this to give an information theoretic lower bound on the space required for any data structure for beer proper interval graphs that can support $\qadj$ and $\qbdist$. It may seem natural that to store which vertices are beer nodes will require an additional $n$ bits but we show that the lower bound is actually asymptotically $\log(4+2\sqrt{3})n \approx 2.9n$ bits. The main insight into seeing why an additional $n$ bits is not required is that for a clique, it suffices to only store $\log n$ bits for the count of how many beer nodes. For general interval graphs, the space required for the beer nodes is at most $n$ and is a lower order term to the lower bound of $n\log n$ bits. Thus there is nothing study in this case.

Finally in Appendix \ref{s:bounding}, we give an obsolete way to bound the number of beer proper interval graphs, which is superseded by the more exact analysis in Section \ref{s:lower-bound}. However the techniqued used may be useful in it own right and may also be useful for the construction of a data structure for beer proper interval graphs that uses fewer than $3n$ bits of space.

\section{Preliminaries}
\label{s:prelim}
In this paper, we will use the standard graph theoretic notation. We will use $G = (V,E)$ to denote a graph with vertex set $V$ and edge set $E$. We will use $n=|V|$ and $m=|E|$ to denote the number of vertices and edges. All of our graphs will be unweighted.

As we will be discussing both trees and graphs in general, we will use vertices to denote the vertices of a graph which may or may not be a tree, and nodes to denote the vertices of a tree.
In the paper, we assume the word-RAM model with $\Theta(\log n)$-size words. We use $\log(
\cdot)$ to denote $\log_2(
\cdot)$.

In a beer graph, we take any underlying graph $G$ together with a set $B \subseteq V$ of \emph{beer vertices}. This allows us to define the following queries:

\begin{itemize}
\item \qbsp$(u,v)$: return a shortest path between the vertices $u$ and $v$ such that at least one of the beer vertices appears on the path.

\item \qbdist$(u,v)$: return the length of the shortest path between vertices $u$ and $v$ such that at least one of the beer vertices appears on the path.
\end{itemize}

These are the restricted queries to the ordinary $\qsp$ and $\qdist$ queries, which do not have the constraint that it must pass through a beer vertex.

For example, if $B = V$, then the two queries reduces to ordinary shortest path or distance in the graph. On the other extreme, if $B = \{b\}$ is a singleton, then the query reduces to two ordinary shortest path or distance queries in the graph.

\subsection{Interval Graphs}
An interval graph $G$ is a graph where we may assign an interval on the real line to each vertex - $v \mapsto [ l_v,r_v ]$ such that two vertices $u,v$ are adjacent exactly when the corresponding intervals intersect \cite{hajos1957art}. In particular, we may sort the endpoints so that the values are integers between $1$ and $2n$. We sort the vertices based on their left endpoints, so that when we refer to vertex $v$, we are referring to a number (the rank of the vertex in the sorted order), and thus, a statement such as $u<v$ makes sense.

Acan et al. \cite{DBLP:journals/algorithmica/AcanCJS21} showed that interval graphs can be represented using $n\log n + O(n)$ bits to support \qadj, \qdeg, \qnb, {\qsp} queries in optimal time. $\qadj$ answers whether two given vertices are adjacent, {\qdeg} answers the degree of the given vertex, {\qnb} gives a list of the neighbours of the given vertex, and {\qsp} gives a shortest path between the two given vertices.

He et al. \cite{DBLP:conf/isaac/0001MNWW20} showed that we can also answer the $\qdist$ query in optimal time. Therefore we have the following theorem on interval graphs:

\begin{lemma} An interval graph $G$ on $n$ vertices can be represented succinctly using $n\log n + O(n)$ bits to support $\qadj, \qdeg$ and $\qdist$ queries in $O(1)$ time, $\qnb$ in $O(1)$ time per neighbour and $\qsp$ in $O(1)$ time per vertex on the path.
\end{lemma}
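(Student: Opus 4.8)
The plan is to obtain this statement essentially by assembling the two prior constructions cited above: the succinct representation of Acan et al.\ \cite{DBLP:journals/algorithmica/AcanCJS21}, which handles $\qadj,\qdeg,\qnb$ and $\qsp$, and the augmentation of He et al.\ \cite{DBLP:conf/isaac/0001MNWW20}, which adds $\qdist$ on top of it within the same space bound. Since no genuinely new claim is being made, the work lies in recalling why a single $n\log n + O(n)$-bit encoding can support all of these queries simultaneously.

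First I would fix the vertex ordering. Sorting the intervals by left endpoint and relabelling them $1,\dots,n$, I would rely on the monotonicity fact that for $i<j$ the intervals of $i$ and $j$ intersect exactly when $l_j \le r_i$ (since $l_i \le l_j$ already). Hence all adjacencies ``to the right'' of a vertex are captured by a single farthest-reach value $R(i)=\max\{\, j : l_j \le r_i \,\}$, and I would store the sequence $R(1),\dots,R(n)$, each entry in $[1,n]$, as the $n\log n$-bit core of the structure. On top of this I would place $O(n)$-bit succinct machinery: rank/select over bitvectors, a range-maximum index over $R$, and a wavelet-tree-style index, so that both the leftward neighbours (those $j<i$ with $R(j)\ge i$) and the rightward neighbours (the contiguous range $i+1,\dots,R(i)$) can be counted and enumerated efficiently.

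With this in place the navigational queries are immediate. The query $\qadj(i,j)$ for $i<j$ is the comparison $j\le R(i)$; $\qdeg(i)$ is $R(i)-i$ plus a count of leftward neighbours obtained from the auxiliary index; $\qnb(i)$ reports the rightward range directly and the leftward neighbours through the index, spending $O(1)$ per neighbour. For $\qsp(u,v)$ with $u<v$, I would use the standard greedy that repeatedly jumps to the reachable vertex maximizing $R(\cdot)$, located in $O(1)$ via the range-maximum index; this reaches $v$ in the minimum number of hops and costs $O(1)$ per reported vertex.

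The delicate part, and the one I would lean on He et al.\ for, is answering $\qdist$ in $O(1)$ rather than in time proportional to the path length. Here the plan is to precompute from $R$ a structure encoding how many greedy jumps separate any two vertices, conceptually a forest in which each vertex points to its farthest reachable successor, so that a distance reduces to a constant-time level-difference or weighted-ancestor computation on this structure, all inside the $O(n)$ auxiliary budget. I expect the main obstacle to be precisely this: verifying that the distance augmentation coexists with the other operations without pushing the total past $n\log n + O(n)$ bits, while still collapsing a possibly long greedy walk into an $O(1)$ query. The remaining queries follow routinely from the monotone structure of interval graphs.
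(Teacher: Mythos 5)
Your proposal is correct and takes essentially the same route as the paper, which gives no standalone proof of this lemma but obtains it exactly as you do: by combining Acan et al.'s $n\log n + O(n)$-bit representation (supporting adjacency, degree, neighbourhood and shortest-path queries) with He et al.'s distance augmentation, which adds constant-time distance queries within the same space bound. The only caveat is your passing mention of a ``wavelet-tree-style index'' as part of the $O(n)$-bit auxiliary machinery --- a wavelet tree over the reach sequence $R$ would itself occupy $n\log n$ bits and blow the budget, but it is also unnecessary, since leftward-neighbour counting follows from rank/select over the $2n$-bit endpoint bitvector (and enumeration from a succinct range-maximum index), exactly as in the cited constructions.
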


An interval graph $G$ is \emph{proper} (or a \emph{proper interval graph}) if we can choose the intervals corresponding to vertices such that no two intervals are nested.

Acan et al. \cite{DBLP:journals/algorithmica/AcanCJS21} also showed that proper interval graphs can be represented using $2n + o(n)$ bits to support $\qadj, \qdeg, \qnb, \qsp$ queries in optimal time. He et al. \cite{DBLP:conf/isaac/0001MNWW20} showed that we may also support the $\qdist$ query in optimal time. Thus we have the following theorem on proper interval graphs:

\begin{lemma}
	A proper interval graph $G$ on $n$ vertices can be represented succinctly using $2n + o(n)$ bits to support $\qadj, \qdeg$ and $\qdist$ queries in $O(1)$ time, $\qnb$ in $O(1)$ time and $\qsp$ in $O(1)$ time per vertex on the path.
\end{lemma}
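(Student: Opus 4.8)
The plan is to assemble the structure from the two ingredients behind the lemma (the succinct proper interval graph of Acan et al.\ and the distance oracle of He et al.), and I would begin with the structural fact that makes a $2n$-bit encoding possible. Sorting the vertices by left endpoint, the properness condition forces the right endpoints to appear in the same order, so $l_1 < l_2 < \dots < l_n$ and $r_1 < r_2 < \dots < r_n$. Reading the $2n$ sorted endpoints from left to right and writing a $0$ for each left endpoint and a $1$ for each right endpoint produces a bit string $S$ of length $2n$, in which vertex $i$'s left and right endpoints are the $i$-th $0$ and the $i$-th $1$. Equipping $S$ with a standard rank/select index adds only $o(n)$ bits, matching the $2n + o(n)$ budget while answering $\qrank$ and $\qselect$ in $O(1)$ time.

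Next I would implement the navigational queries directly on $S$. For $u < v$ we have $l_u < l_v$ and $r_u < r_v$, so the intervals intersect exactly when $l_v \le r_u$, which is the $O(1)$ test $\qselect_0(v) < \qselect_1(u)$; this gives $\qadj$. The degree of $v$ splits into its right-neighbours (vertices $w>v$ whose left endpoint precedes $r_v$) and its left-neighbours (vertices $u<v$ whose right endpoint follows $l_v$), and each count is obtained from a constant number of $\qrank$/$\qselect$ calls, yielding $\qdeg$ in $O(1)$. Because the graph is proper, the neighbours of a vertex occupy a contiguous range of indices, so $\qnb$ enumerates that range in $O(1)$ per neighbour. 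For $\qsp$ I would use the greedy rule that, heading toward the target, one always advances to the neighbour of largest (resp.\ smallest) index; each hop is read off $S$ in $O(1)$, so the path is reported in $O(1)$ per vertex.

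The substantive step, and the one I expect to be hardest, is supporting $\qdist$ in $O(1)$ time within the remaining $o(n)$ slack, which is precisely He et al.'s contribution. The greedy hop above defines a monotone ``farthest-reach'' function $R$ sending each vertex to the largest-indexed vertex reachable in one step (with $R(v) \ge v$), and the BFS layers out of a source $u$ are contiguous index ranges whose right boundaries are exactly the iterates $R(u), R^2(u), \dots$ up to the fixed point $n$; the distance from $u$ to $v \ge u$ is therefore $\min\{k : R^{k}(u) \ge v\}$, with the symmetric quantity used when $v < u$. The obstacle is that storing iterated-reach information per source is far too expensive, so the crux is an $o(n)$-bit structure that answers this ``iterate a monotone function until it crosses a threshold'' query in constant time. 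I would exploit the monotonicity of $R$ — which makes the iterate path of each vertex a root-to-node path in the forest defined by $R$ (rooted at $n$) and makes these paths nest — to reduce the query to a weighted/threshold ancestor query, and then reconstruct He et al.'s constant-time realization of it, verifying that the sampling and landmark tables it requires fit in $o(n)$ bits.

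Finally I would combine the pieces: $S$ with its rank/select index supports $\qadj$, $\qdeg$, $\qnb$, and $\qsp$ in the stated complexities, and the additional $o(n)$-bit reach structure supplies $\qdist$ in $O(1)$, all within $2n + o(n)$ bits, which establishes the lemma.
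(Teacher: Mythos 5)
Your proposal is correct and follows essentially the same route as the paper: the paper justifies this lemma purely by combining Acan et al.'s $2n+o(n)$-bit endpoint-string representation (which yields adjacency, degree, neighbourhood and shortest path exactly as you describe) with He et al.'s distance oracle, and your greedy farthest-reach tree is just the mirror image of their distance tree (parent $=$ smallest adjacent vertex), with your ``threshold'' query resolved in $O(1)$ time as a depth difference plus a single level-ancestor comparison rather than a general weighted-ancestor query. The one step you leave schematic --- fitting the depth/level-ancestor support into $o(n)$ bits on top of the string --- is precisely what the cited construction of He et al. supplies, so nothing is missing relative to the paper's own citation-based justification.
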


\subsection{Dyck Paths}
For our lower bound, we will be discussing Dyck paths. A Dyck path of length $2n$ is a path from $(0,0)$ to $(2n,0)$ using $2n$ steps, $n$ of which are $(1,1)$ steps which are referred to as up-steps, and $n$ of which are $(1,-1)$ steps and are referred to as down-steps. Such a path must also satisfy the condition that it never reaches below the $x$-axis. It is well known that the number of Dyck paths of length $2n$ is $C_n=\frac{1}{n+1}\binom{2n}{n}$ the $n$-th Catalan number.

A Dyck path that never touches the $x$-axis except at the start and end, is referred to as an irreducible Dyck path. By removing the up-step at the beginning and the down-step at the end, the remainder of the path is simply a Dyck path of length $2(n-1)$. Thus the number of irreducible Dyck paths of length $2n$ is simply $C_{n-1}$.

For any Dyck path, we may associate an up-step with an open parenthesis $($ and a down-step with a close parenthesis $)$. The sequence we obtain from a Dyck path is a balanced parenthesis sequence (and vice versa) as the Dyck path condition is exactly the condition that the excess in the balanced parenthesis sequence is never negative. This well known bijection allows us to associate an forest to each Dyck path, using the well known bijection for forests and balanced parentheses (via a depth-first traversal). In particular, if the Dyck path were irreducible, then the forest is just a single tree.

\subsection{Succinct Data Structures}
The information theoretic lower bound to represent a family of objects with $N$ elements is $\lceil\log N\rceil$ bits. Any fewer bits and we do not have enough bit strings to assign a unique one to each object, and thus cannot distinguish between them. A succinct data structure aims to use $\log N + o(\log N)$ bits to represent these objects while supporting the relevant queries.

A bit vector is a length $n$ array of bits, that supports the queries $\qrank(i)$: given an index, return the number of 1s up to index $i$, $\qselect(j)$: given a number $j$, return the index of the $j$th one in the array, and $\qaccess(i)$: return the bit at index $i$.

\begin{lemma}[Munro et al. \cite{DBLP:journals/jal/MunroRR01}]
	A bit vector of length $n$ can be succinctly represented using $n + o(n)$ bits to support $\qrank,\qselect$ and $\qaccess$ in $O(1)$ time.
\end{lemma}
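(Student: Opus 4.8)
The plan is to store the bit vector $A$ of length $n$ explicitly, costing exactly $n$ bits, and to layer $o(n)$-bit auxiliary structures on top of it for each operation. Since $\qaccess(i)$ merely reads $A[i]$, it is answered in $O(1)$ time with no extra space, so the real work lies in supporting $\qrank$ and $\qselect$.

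For $\qrank$ I would use the classical two-level blocking combined with table lookup (the ``four Russians'' technique). First partition $A$ into superblocks of length $\log^2 n$, and within each superblock into blocks of length $\frac12\log n$. At each superblock boundary I store the absolute number of $1$s preceding it: there are $O(n/\log^2 n)$ such values, each fitting in $O(\log n)$ bits, for $O(n/\log n)=o(n)$ bits in total. At each block boundary I store the number of $1$s since the start of its superblock; this relative count is at most $\log^2 n$, so it needs $O(\log\log n)$ bits, and summed over the $O(n/\log n)$ blocks this is $O(n\log\log n/\log n)=o(n)$ bits. Finally, to count the $1$s inside a single block of length $\frac12\log n$ up to an arbitrary offset, I would precompute a universal lookup table indexed by a pair (block contents, offset): there are $2^{(1/2)\log n}=\sqrt{n}$ possible block contents and $O(\log n)$ offsets, and each stored answer needs $O(\log\log n)$ bits, so the table occupies $O(\sqrt{n}\,\log n\,\log\log n)=o(n)$ bits. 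Then $\qrank(i)$ simply sums the superblock value, the block value, and one table lookup, all in $O(1)$ time.

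For $\qselect$ the situation is more delicate, because the $1$s are not evenly spaced, so a fixed-length partition of \emph{positions} does not induce a fixed-length partition of the $1$s. I would instead partition the \emph{ones} into groups of $\log^2 n$ consecutive $1$s and record the position of the first one in each group, costing $O(n/\log^2 n)$ positions of $O(\log n)$ bits each, i.e.\ $o(n)$ bits. A group whose $1$s are spread across a wide range of positions (wider than $\log^4 n$, say) is sparse enough that writing down all $\log^2 n$ of its $1$-positions explicitly is affordable: there are at most $n/\log^4 n$ such sparse groups, contributing $O\!\bigl((n/\log^4 n)\cdot\log^2 n\cdot\log n\bigr)=O(n/\log n)=o(n)$ bits. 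A group packed into a narrow range is handled by recursing with the same idea one more level and then bottoming out in a lookup table analogous to the one used for $\qrank$. This sparse/dense case distinction is what keeps the auxiliary space at $o(n)$, and I expect it to be the main obstacle: one must choose the width thresholds so that the explicit storage in the sparse case and the table size in the dense case are \emph{simultaneously} $o(n)$, while each query still descends through only a constant number of levels. The bookkeeping is standard but easy to misbalance, which is precisely why I would treat $\qselect$, rather than $\qrank$ or $\qaccess$, as the crux of the proof.
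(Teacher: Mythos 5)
The paper itself gives no proof of this lemma---it is quoted as a known result with a citation to Munro et al.---and your construction is precisely the classical one underlying that citation: Jacobson-style two-level blocking (superblocks of $\log^2 n$, blocks of $\frac12\log n$) with a four-Russians lookup table for $\qrank$, and Clark's group decomposition of the ones with a sparse/dense case distinction for $\qselect$. Your parameters do balance correctly (sparse groups spanning more than $\log^4 n$ positions can be stored explicitly in $O(n/\log n)$ bits, dense groups recurse once and bottom out in ranges of size $\mathrm{polyloglog}(n) = o(\log n)$ that a universal table handles in $O(1)$ time), so the sketch is correct and essentially the same as the standard proof in the cited literature.
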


We will also require the compressed form of bit-vectors, where if the number of 1s is small, we are able to get away with using less space.

\begin{lemma}[Patrascu \cite{DBLP:conf/focs/Patrascu08}]
	A bit vector of length $n$ with $m$ 1s can be represented using $\log \binom{n}{m} + O(\frac{n}{\log^c n} + m) \le m\log \frac{n}{m} + O(\frac{n}{\log^c n} + m)$ for any constant $c$. The data structure supports $\qrank,\qselect,\qaccess$ in $O(1)$ time.
\end{lemma}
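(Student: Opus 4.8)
The plan is to follow the block-decomposition paradigm common to all compressed indexable dictionaries and then apply Patrascu's recursive \emph{spill} technique to drive the redundancy down to the stated $O(n/\log^c n)$. First I would cut the bit vector into consecutive blocks of length $b = \frac12\log n$, and describe each block $i$ by a pair $(c_i, o_i)$, where the \emph{class} $c_i \in \{0,1,\dots,b\}$ is the number of $1$s in the block and the \emph{offset} $o_i \in \{0,\dots,\binom{b}{c_i}-1\}$ identifies the block among all length-$b$ strings with exactly $c_i$ ones under a fixed colexicographic ordering. Because $b = \frac12\log n$, a single universal table of size $2^b \cdot \mathrm{poly}(b) = O(\sqrt{n}\,\mathrm{polylog}\,n) = o(n)$ bits translates between a raw block and its $(c_i,o_i)$ encoding and answers intra-block $\qrank$ and $\qaccess$ in $O(1)$ time. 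The information-theoretic heart is that the offsets concatenate into $\sum_i \log\binom{b}{c_i}$ bits, and since $\prod_i \binom{b}{c_i}$ is a single summand of the Vandermonde convolution $\sum_{\sum_i c_i = m}\prod_i \binom{b}{c_i} = \binom{n}{m}$, this sum is at most $\log\binom{n}{m}$.

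To support $\qrank$ and $\qselect$ in constant time I would add the standard two-level sampling structure. Grouping blocks into superblocks of $\Theta(\log^2 n)$ bits, I store for each superblock boundary the absolute number of $1$s before it, and for each block boundary the number of $1$s since the start of its superblock; these sampled prefix sums occupy $O(n/\log n)$ bits. A $\qrank(i)$ query then adds the superblock sample, the block sample, and an intra-block count obtained from the decoding table, and $\qselect$ is handled symmetrically by sampling the positions of every $(\Theta(\log^2 n))$-th one, binary searching within the sampled counts, and finishing with a table lookup; $\qaccess$ is a single decoding-table probe. All of this is $O(1)$ time and contributes only lower-order space, and the block samples double as the positions needed to locate the variable-width offset fields produced below.

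The main obstacle is that the two obvious encodings both overspend: storing the class sequence $(c_1,c_2,\dots)$ at the naive $\log(b+1)$ bits per block costs $\Theta(n\log\log n/\log n)$ bits, and byte-aligning each offset into a slot of $\lceil\log\binom{b}{c_i}\rceil$ bits wastes up to one bit per block, i.e.\ $\Theta(n/\log n)$ — either term already exceeds the target redundancy. Patrascu's remedy, which I would adopt, stores each block's value at the fixed width $\lfloor\log\binom{b}{c_i}\rfloor$ so that random access stays $O(1)$, and pushes the fractional remainder — the \emph{spill}, a number in a range $M_i = \lceil\binom{b}{c_i}/2^{\lfloor\log\binom{b}{c_i}\rfloor}\rceil \le 2$ — into an aggregate shared by a group of consecutive blocks. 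That aggregate is a mixed-radix integer of magnitude $\prod_i M_i$, so it ideally needs only $\sum_i \log M_i$ bits, a quantity dominated by the summed fractional parts; it is then stored recursively by the same scheme. Each recursion level absorbs almost all of the wasted fractional bits, and carrying the recursion to constant depth $c$ accumulates a total rounding waste of only $O(n/\log^c n)$ bits while keeping the number of word probes per query constant.

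Finally, the inequality $\log\binom{n}{m} \le m\log\frac{n}{m} + O(m)$, which yields the second and more convenient space expression in the statement, follows from the elementary bound $\binom{n}{m}\le (en/m)^m$. The remaining work is bookkeeping: checking that the decoding tables and sampled arrays are genuinely $o(n)$, confirming that the spill ranges at each level stay small enough for a single machine word to hold an entire aggregate so that decoding any block touches $O(1)$ words, and handling the final partial block of length $n \bmod b$ separately.
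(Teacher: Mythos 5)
The paper never proves this lemma; it is quoted from Patrascu's ``Succincter'' paper, so your attempt has to be judged against that actual construction. Your skeleton is the classical one (RRR-style blocks of $b=\frac12\log n$ bits, a $o(n)$-bit decoding table, the Vandermonde bound $\sum_i\log\binom{b}{c_i}\le\log\binom{n}{m}$, and $\binom{n}{m}\le(en/m)^m$ for the final inequality), and that part is fine. But three pieces of your redundancy accounting fail, and they are exactly where the difficulty of the theorem lies. First, the auxiliary sampling structures you store ``on the side'' already overspend: the per-block counters cost $\Theta(n\log\log n/\log n)$ bits (not $O(n/\log n)$ as you claim), and even the superblock counters alone cost $\Theta(n/\log n)$ bits; both exceed the allowed $O(n/\log^c n+m)$ for $c\ge 2$ whenever $m$ is small, say $m\le n/\log^2 n$ (also, the binary search in your select routine is not $O(1)$ time). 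In Patrascu's solution there is no uncompressed index sitting next to the compressed data: rank counts are augmentation values folded into the recursive encoding itself. Second, you correctly flag that storing the class sequence $(c_1,c_2,\dots)$ naively is too expensive, but your remedy never says how the classes \emph{are} stored — yet without $c_i$ you cannot even determine the width $\lfloor\log\binom{b}{c_i}\rfloor$ of block $i$'s field, let alone locate it. Encoding precisely this count information within the entropy budget is the heart of the proof, not a loose end.

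Third, and most concretely, your spill accounting is wrong. With spill universe $M_i=\lceil\binom{b}{c_i}/2^{\lfloor\log\binom{b}{c_i}\rfloor}\rceil\le 2$, storing the aggregate exactly costs $\log\prod_i M_i=\#\{i: M_i=2\}$ bits, i.e.\ one full bit for every block whose $\binom{b}{c_i}$ is not a power of two. This is \emph{not} ``dominated by the summed fractional parts'': when $\binom{b}{c_i}=2^{m_i}+1$ the fractional part of $\log\binom{b}{c_i}$ is about $2^{-m_i}$ while $\log M_i=1$. So your scheme wastes up to one bit per block, $\Theta(n/\log n)$ in total — exactly the loss you set out to eliminate. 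Patrascu's spill-over representation avoids this by making the spill universe \emph{large}, not binary: a value in $[M]$ is split into $m$ memory bits plus a spill in $[K]$ with $K=\lceil M/2^m\rceil=\Theta(r)$, so the rounding loss is $\log(1+2^m/M)=O(1/r)$ bits. The price is that spills are then values in arbitrary ranges and cannot be concatenated as bits; they must be recombined pairwise (each recombination again losing only $O(1/r)$ bits) up an aB-tree whose size is polynomially smaller than $r$, so each tree loses $O(1)$ bits in total, and the vector is cut into $n/(\log n/t)^t$ such trees, giving redundancy $O(n/(\log n/t)^t)$ with $O(t)$ query time — the stated bound with $c$ proportional to $t$. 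Your ``recursion to constant depth $c$'' gestures at this structure, but without the large-universe composition lemma the first level of your recursion already breaks the bound.
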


We will be working with trees and thus will be discussing and using various tree operations. These will mainly be conversions between the nodes' numbers in the different traversals: pre-order, post-order, level-order. These operations can be done in $O(1)$.

\begin{lemma}[He et al. \cite{DBLP:conf/isaac/0001MNWW20}]
    An ordinal tree on $n$ nodes can be represented succinctly using $2n + o(n)$ bits and can support a variety of operations in $O(1)$ time. For the full list see table 1 in their paper.
\end{lemma}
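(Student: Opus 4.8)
The plan is to encode the ordinal tree as a balanced parenthesis (BP) sequence of length $2n$ and then layer $o(n)$ bits of auxiliary structures on top, so that every tree-navigation operation reduces to a constant number of primitive queries on this sequence. First I would perform an Euler tour of the tree, writing an open parenthesis when a node is first entered and a close parenthesis when it is finally left; since each of the $n$ nodes contributes exactly one open and one close symbol, this costs exactly $2n$ bits and is a bijective encoding. The central observation is that every node corresponds to a matched pair of parentheses, the subtree rooted at a node corresponds to the contiguous range spanned by its matching pair, and the structural relations (parent, ancestor, first/next child, subtree size, depth, and the various pre-/post-/level-order ranks appearing in the table) can all be expressed through a small vocabulary of primitives on the sequence: $\qrank$ and $\qselect$ of a given parenthesis type, the matching-parenthesis operations (find-close and find-open), the enclose operation (the tightest enclosing pair), and more generally forward and backward searches for a target value of the excess function, the running difference between the numbers of open and close parentheses.

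Next I would supply these primitives. For $\qrank$, $\qselect$ and $\qaccess$ the bit-vector lemma already gives $O(1)$ time in $n + o(n)$ bits once we read each open parenthesis as a $1$-bit. The remaining primitives are handled by the standard Four-Russians decomposition: cut the sequence into blocks of length $\Theta(\log n)$ and precompute, in universal lookup tables indexed by the sub-logarithmic bit pattern of a block, the answer to every within-block query. Since there are $o(n)$ distinct short patterns and each table entry is small, these tables occupy $o(n)$ bits and resolve any query whose answer lies in the same block in $O(1)$ time. To unify the excess-based queries I would build a range min-max tree over the block boundaries, storing the minimum and maximum excess attained in each block; a forward or backward excess search, and hence find-close, find-open and enclose, then amounts to walking this tree to locate the target block and finishing inside it with the lookup tables.

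The step I expect to be the main obstacle is making the cross-block matching and enclosing operations run in $O(1)$ time while keeping the auxiliary space in $o(n)$. The difficulty is quantitative: a single level of block decomposition leaves $\Theta(n/\log n)$ ``pioneer'' parentheses whose matches must be recorded explicitly, and storing each such match as a $\Theta(\log n)$-bit pointer costs $\Theta(n)$ bits, which is \emph{not} $o(n)$. The resolution is to treat the subsequence of pioneers as a smaller balanced-parenthesis instance and recurse, or equivalently to give the range min-max tree a large fan-out and only a constant number of levels; after a bounded amount of recursion the explicitly stored information shrinks to $o(n)$ bits while the height stays $O(1)$, so every primitive is answered in constant time. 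Finally I would verify that each operation in the table is a constant-length composition of these primitives, interleaving $\qrank$ and $\qselect$ to translate between the different traversal orders when needed; combining the $2n$-bit BP sequence with the $o(n)$-bit lookup tables and min-max tree then yields the claimed bound.
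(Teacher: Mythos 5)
Your sketch correctly reproduces the classical machinery for succinct ordinal trees (balanced-parenthesis encoding, \qrank/\qselect, Four-Russians lookup tables, and a large-fanout range min-max tree or Munro--Raman-style recursion on pioneers to keep the cross-block matching information in $o(n)$ bits), and that part would indeed deliver the depth-first operations: parent, enclose, subtree size, depth, level-ancestor (a backward excess search), and pre-/post-order rank and select. However, there is a genuine gap in the last step, where you claim the traversal-order conversions in the table follow by ``interleaving \qrank\ and \qselect.'' The operation set of the cited lemma includes \emph{level-order (breadth-first) rank and select}, i.e., mapping in $O(1)$ time between a node's BP position (or preorder rank) and its rank in a BFS traversal of the tree. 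This is not a constant-length composition of the primitives you listed: the excess function gives you depth and ancestry information locally, but the BFS rank of a node requires counting all nodes of strictly smaller depth plus the nodes of equal depth lying to its left, a global quantity that neither parenthesis matching nor excess search exposes. Supporting exactly this conversion in $2n+o(n)$ bits and constant time is the main technical contribution of the cited work of He et al. (the paper's title is ``\ldots via Breadth-First Rank/Select in Succinct Trees''), and it requires additional ideas beyond the Navarro--Sadakane/Munro--Raman toolkit you describe.

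This gap matters for the present paper, not just as a citation formality: the distance-tree data structure identifies graph vertex $v$ with the $v$th node of the tree \emph{in level order}, and the query algorithms constantly convert between level-order numbers and post-order numbers (e.g., the tests involving $\qpost(u)$ and $\qpost(v)$ in Lemma \ref{l:proper-preserve}). So the constant-time level-order/post-order conversion is precisely the operation your construction does not supply. To close the gap you would need the actual mechanism of He et al. for breadth-first rank/select, layered as further $o(n)$-bit auxiliary structures on top of the BP representation you built.
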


\subsection{Orthogonal Range Queries}
In these data structures, we store $n$ $d$-dimensional points. The queries we wish to answer are: given a $d$-dimensional axis aligned rectangle $[p_1,p_2]\times[p_3,p_4]\ldots[p_{2d-1},p_{2d}]$ (we use the closed intervals here in the definition, but when our points are integers it is easy to see how to support open or semi-open intervals as well), \emph{emptiness}: does it contain a point? \emph{count}: how many points does it contain? \emph{reporting}: return each point. We say that the rectangle is $k$-sided if there is at most $k$ coordinates that are finite (in general the coordinates $p_i$ may be $\pm \infty$). When we do not mention how many sides, it is assumed that it is the maximum possible: $2d$.

When $d = 2$, Chan et al. \cite{DBLP:conf/compgeom/ChanLP11} showed that we solve the emptiness problem:
\begin{lemma}
	We can solve the 2d range emptiness queries using $O(n\log n\log\log n)$ bits of space and $O(\log \log n)$ query time or $O(n\log n)$ bits of space and $O(\log^\epsilon n)$ query time for any constant $\epsilon > 0$.
	
\end{lemma}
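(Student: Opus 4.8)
The plan is to build on the succinct bit-vector primitive of Munro et al.\ and a wavelet-/range-tree skeleton, and then to apply two classical speedups to push the query time below $\log n$. First I would reduce to rank space: sorting the $n$ points separately by $x$- and by $y$-coordinate, I replace each coordinate by its rank in $\{1,\dots,n\}$ and keep, for each axis, a predecessor structure translating a query coordinate to its rank. This is essentially free in our setting, since the coordinates are already integers in $[O(n)]$ and a single bit vector with $\qrank/\qselect$ does the translation in $O(1)$; in general a predecessor structure of $O(\log\log n)$ (resp.\ $O(\log^\varepsilon n)$) cost suffices and does not dominate. From here on the points live on the grid $[n]\times[n]$.

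Next I would set up the baseline structure. A wavelet tree over the $x$-ranks --- a balanced binary tree in which each node stores, via the succinct bit vector, the partition of its points between its two children --- occupies $O(n\log n)$ bits and answers four-sided range counting, hence emptiness, in $O(\log n)$ time. Equivalently, one can view a four-sided query $[a,b]\times[c,d]$ as $O(\log n)$ canonical nodes of a balanced tree over one axis, at each of which the query collapses to a one-dimensional range-emptiness (predecessor) test on the other axis; a three-sided query in particular is merely a range-maximum comparison, answered in $O(1)$ using the $y$-array together with a succinct range-maximum structure. Either way the baseline is $O(\log n)$ query time in $O(n\log n)$ bits.

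The hard part, and the reason a plain wavelet/range tree does not suffice, is shaving the query time from $O(\log n)$ down to the claimed $O(\log\log n)$ or $O(\log^\varepsilon n)$. Here I would invoke the framework of Chan et al.: the tree over one axis is made multiway and is combined with fractional cascading, so that a single predecessor search is threaded through all the canonical nodes at $O(1)$ amortized cost per node rather than repeated, and the grid is decomposed recursively so that each query touches only a small number of sub-instances. Tuning the branching factor of this recursion produces the two endpoints of the trade-off: a branching tuned to keep the space at $O(n\log n)$ bits (linear in words) yields $O(\log^\varepsilon n)$ query time for any constant $\varepsilon>0$, whereas allowing an extra $\log\log n$ factor in the space, i.e.\ $O(n\log n\log\log n)$ bits, permits van Emde Boas--style predecessor navigation on the grid and brings the query time down to $O(\log\log n)$.

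The main obstacle is precisely this last step: the delicate balancing of recursion depth against per-level cost, and the succinct encoding that keeps the space within the two stated budgets, is the technical heart of the construction, and it is what separates these bounds from the trivial $O(\log n)$ wavelet-tree solution. One simplifying feature works in our favour, namely that emptiness is a pure decision, so we never incur the additive $+k$ term that the reporting version of these structures must pay; consequently the two bounds in the statement follow directly from the emptiness specialization of the Chan et al.\ machinery.
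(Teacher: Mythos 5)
Your proposal matches the paper's treatment: the paper offers no proof of this lemma at all, citing it directly as a result of Chan, Larsen and P\u{a}tra\c{s}cu, and your sketch likewise defers the essential step---getting below $O(\log n)$ query time within the stated space budgets---to that same machinery. The surrounding exposition you add (rank-space reduction, the $O(n\log n)$-bit wavelet-tree baseline, emptiness avoiding the $+k$ reporting cost) is sound but peripheral, so this is essentially the same approach as the paper.
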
 

When $d=3$, Nekrich \cite{DBLP:conf/soda/Nekrich21} showed that we may solve the emptiness and reporting problems:\begin{lemma}
	\label{t:3d-report}
	For $n$ 3D points and constant $\epsilon > 0$, we may support 5-sided orthogonal reporting queries using $O(n\log n)$ bits of space and $O(k\log^\epsilon n)$ time or $O(n\log n\log\log n)$ bits of space and $O(k\log \log n)$ time, where $k$ is the size of the output.
	
	By setting $k =0$, we may achieve the complexities to emptiness queries as well.
\end{lemma}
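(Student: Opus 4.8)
The plan is to build the structure in rank space and reduce the five-sided box query to a collection of simpler dominance-style subqueries, using a layered range-tree decomposition together with a ball-inheritance mechanism to recover the reported points. First I would apply the standard rank-space reduction: replace each point's coordinate in each dimension by its rank among the $n$ values, and store three predecessor structures (e.g.\ $y$-fast tries over a universe of size $O(n)$) so that the query box can be translated into rank coordinates in $O(\log\log n)$ time. This costs $O(n\log n)$ bits and is absorbed into both claimed bounds, since the emptiness time is already $\Omega(\log\log n)$. After relabelling we may assume the query is $[x_1,x_2]\times[y_1,y_2]\times[z_1,\infty)$, so that the $z$-dimension is the single one-sided (fifth) side.

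Next I would dispose of the two fully bounded dimensions. I would build a balanced range tree over the $x$-coordinate, so that $[x_1,x_2]$ decomposes into $O(\log n)$ canonical nodes whose subtrees exactly cover the $x$-range. Inside a canonical node the $x$-constraint is automatically satisfied, so the remaining task is a three-sided planar query (two-sided in $y$, one-sided in $z$) on the points of that subtree. The whole problem thus reduces to three-sided two-dimensional reporting at each of $O(\log n)$ canonical nodes, and to avoid paying a fresh search cost at every node I would thread a fractional-cascading layer through the canonical $y$-lists so that the relevant $y$- and $z$-ranks are located only once.

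The delicate part is making this fit in $O(n\log n)$ bits — linear space measured in words — rather than the $O(n\log^2 n)$ bits a naive two-level range tree would cost, while still emitting each point in output-sensitive time. Here I would keep only rank information at the internal levels, encoding the three-sided secondary structures succinctly over rank space so that each stores only $O(1)$ words per point, and defer recovery of the true coordinates and identity of a reported point to a separate \emph{ball-inheritance} structure: a point first surfaces at an internal node merely as a position, and walking that position back to its leaf recovers its actual coordinates. This deferral is exactly what produces the two regimes of the lemma — the ball-inheritance trade-off yields $O(\log^\epsilon n)$ per reported point within $O(n\log n)$ bits, or $O(\log\log n)$ per reported point at the cost of an extra $\log\log n$ factor of space.

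The main obstacle I expect is precisely this final balancing act: combining an $O(\log n)$-way decomposition of the two bounded dimensions with a per-output cost of only $O(\log^\epsilon n)$ (or $O(\log\log n)$), rather than the $O(\log n)$ per point a textbook range tree would charge, all while staying within $O(n)$ words. Resolving it requires that the decomposition contribute \emph{no} per-node additive overhead (handled by fractional cascading) and that point recovery, not searching, dominate the per-point cost (handled by ball inheritance, which is the genuine source of the two claimed space–time points). Finally, the emptiness bounds follow immediately by halting the reporting procedure as soon as the first point is produced, i.e.\ by setting $k=0$.
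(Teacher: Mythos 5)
You are attempting far more than the paper does: this lemma is not proved in the paper at all. The reporting bounds are imported as a black box from Nekrich \cite{DBLP:conf/soda/Nekrich21}, and the only argument the paper supplies (in a commented-out proof) is the trivial reduction from emptiness to reporting: if the reporting time is bounded by $t_1 + k t_2$, run the query for time $t_1$ and declare the range empty if nothing has been output. Your final step coincides with that reduction; everything before it is your own reconstruction of the cited data structure, and it contains a genuine gap.

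The gap is in how you handle the doubly-bounded dimension. You decompose $[x_1,x_2]$ into $O(\log n)$ canonical nodes of a balanced binary range tree and issue a three-sided planar query at each. Fractional cascading only removes the per-node \emph{search} cost; an emptiness query must still inspect every canonical node, since any single one of them could contain the only point in the box, so the query time is $\Omega(\log n)$ even when $k=0$. This contradicts both claimed bounds --- $O(\log^\epsilon n)$ for constant $\epsilon < 1$, and a fortiori $O(\log\log n)$ --- and no choice of fanout rescues the decomposition, since fanout $f$ yields $\Theta(f\log_f n)$ canonical nodes, which is minimized (up to constants) at $f=2$. The structures that actually achieve these bounds (Chan et al.~\cite{DBLP:conf/compgeom/ChanLP11} and Nekrich~\cite{DBLP:conf/soda/Nekrich21}) treat the bounded dimension by a sidedness reduction rather than a canonical cover: locate the \emph{single} node where the search paths to $x_1$ and $x_2$ diverge, and at that one node issue a constant number of queries each with one fewer finite side (5-sided to 4-sided to 3-sided/dominance), with shallow cuttings handling the dominance base case. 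Your remaining ingredients --- rank-space reduction, ball inheritance as the source of the space--time trade-off per reported point, and early termination for emptiness --- are the right ones, but without the constant-fan-out-of-subqueries reduction at the divergence node, the claimed query times cannot be met.
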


\subsection{Predecessor Queries}
Given a set of numbers $S$ in a universe $U$, we wish to answer the following query: $\qpred(i)$, given an element $i$ of $U$, return the largest element $j$ of $S$ that is smaller than $i$.

Though there have been a lot of work on this problem, we will only need one of the basic results by Willard \cite{DBLP:journals/ipl/Willard83} which gives a $O(N)$ word space solution to the problem with $O(\log\log U)$ query time.

\begin{lemma}
    There is a data structure for the predecessor problem that uses $O(N)$ words of space and query time $O(\log\log U)$.
\end{lemma}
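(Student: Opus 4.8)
The plan is to establish this via the classical \emph{y-fast trie}, built in two layers on top of an \emph{x-fast trie}. I would first recall the x-fast trie, which by itself already achieves the $O(\log\log U)$ query time but uses too much space, and then apply an indirection step to bring the space down to $O(N)$ words.

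First I would build the x-fast trie on $S$. View each element of $U$ as a binary string of length $\log U$ and consider the binary trie of depth $\log U$ whose leaves are the elements of $S$. For each level $0 \le \ell \le \log U$ I store a static hash table containing exactly the length-$\ell$ prefixes of elements of $S$, so that membership of a prefix can be tested in $O(1)$ time. To answer $\qpred(i)$, observe that the set of levels $\ell$ for which the length-$\ell$ prefix of $i$ is present in the trie forms a contiguous initial segment $\{0,1,\dots,k\}$ (a prefix can be present only if all of its shorter prefixes are), so I can binary search over the $\log U$ levels to locate the deepest node $x$ that is a prefix of $i$ and lies in the trie, using $O(\log\log U)$ hash lookups. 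Threading the leaves into a sorted doubly linked list and storing at each internal node a pointer to its minimum (or maximum) leaf descendant, the one missing child of $x$ together with one step along the linked list yields the predecessor in $O(1)$ additional time. The space of this structure alone is $O(N\log U)$ words, since each element contributes one prefix per level.

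Next I would apply the indirection that defines the y-fast trie. I partition the sorted elements of $S$ into consecutive groups, each of size $\Theta(\log U)$, giving $O(N/\log U)$ groups, and place one representative per group (say its maximum) into the x-fast trie. Since there are now only $O(N/\log U)$ representatives, the x-fast trie occupies $O((N/\log U)\cdot\log U)=O(N)$ words. Each group is stored separately in a balanced binary search tree of size $\Theta(\log U)$. To answer $\qpred(i)$ I first query the x-fast trie to identify the relevant group in $O(\log\log U)$ time, and then search that group's balanced tree; because the tree has $\Theta(\log U)$ nodes its height is $O(\log\log U)$, so this second search also costs $O(\log\log U)$. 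At most a constant number of adjacent groups must be consulted at group boundaries, which does not affect the asymptotics.

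The step I expect to require the most care is the construction and correctness of the hash tables inside the x-fast trie: to obtain $O(1)$ worst-case lookups \emph{together with} the $O(N)$-word space bound, one should use static perfect hashing (of FKS type) on the stored set of prefixes rather than generic hashing. Granting this, the monotonicity of ``prefix present'' justifies the binary search over levels, and the two-layer space accounting then yields the claimed $O(N)$ words with $O(\log\log U)$ query time.
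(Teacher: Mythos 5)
Your proof is correct and is precisely the construction the paper relies on: the lemma is stated with a citation to Willard's paper (which introduced x-fast and y-fast tries) and no proof is given in the text, and your two-layer argument --- x-fast trie with per-level perfect hashing and binary search over prefix lengths, plus bucketing into groups of size $\Theta(\log U)$ with balanced search trees to reduce the space to $O(N)$ words --- is exactly that y-fast trie construction. No gaps; the only caveat worth noting is that the perfect-hashing construction is randomized, but since the lemma only claims query time and space, this does not affect the statement.
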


\section{Beer Paths in Proper Interval Graphs}
\label{s:ds-proper}
In this section we investigate beer paths in proper interval graphs.
We will be using the data structure of He et al. \cite{DBLP:conf/isaac/0001MNWW20} as it supports the $\qdist$ operation, which we will modify to account for the beer vertices.
We begin with an example:
\begin{example}
\label{e:proper-interval}
Consider the graph with the interval representation given by the bit string: $000001000101001110011011011111$. This gives the vertex 1 a left endpoint at coordinate 1 and right index at coordinate 6 (the first 0 and first 1 in the sequence respectively). A graphical representation of the graph and the corresponding distance tree is given.
\begin{center}
    
\includegraphics[scale=0.4]{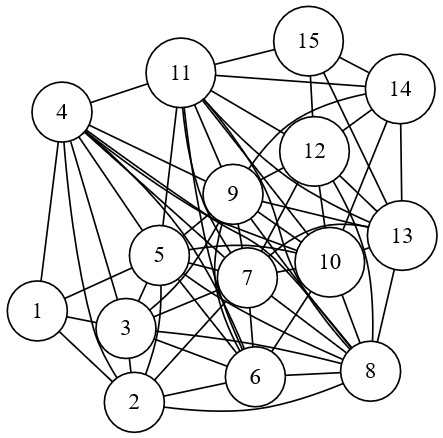}
\includegraphics[scale=0.25]{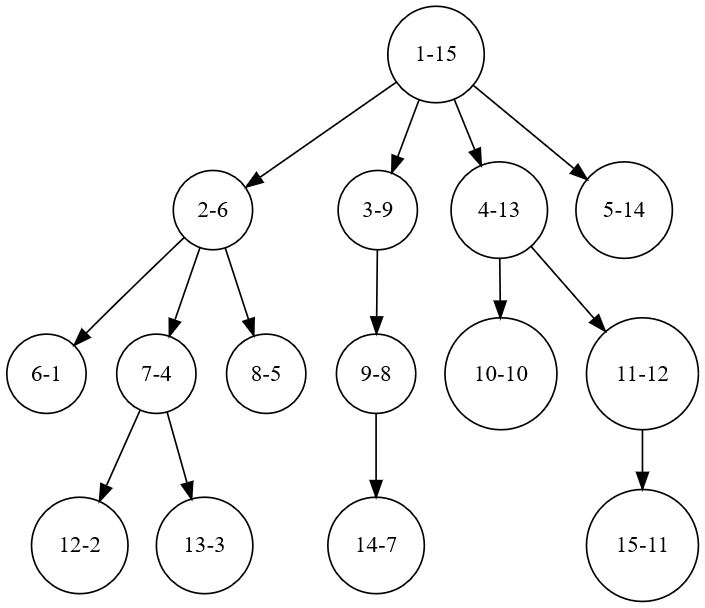}
\end{center}

The first number in the node of the distance tree is the node's level-order number and the second its post-order traversal number.

Consider the shortest path between nodes 13 and 3. The shortest path algorithm given by He et al. \cite{DBLP:conf/isaac/0001MNWW20} will give the path $13 \rightarrow  7 \rightarrow 3$.
The problem would be easy if one of these nodes were a beer node, say node 7 then there would be far less to do. However consider the case that the only beer node were node 6, then a $\qbsp$ would be $13 \rightarrow 7 \rightarrow 6 \rightarrow 3$.
On the other hand, if there were also a beer node at node 8, then we can take the path $13 \rightarrow 8 \rightarrow 3$.
\end{example}

\subsection{Calculating Beer Distance}

In the data structure of He et. al \cite{DBLP:conf/isaac/0001MNWW20}, we represent the proper interval graph $G$ using a distance tree $T$. There is a bijection between the vertices of the graph $G$ and the nodes of the tree - vertex $v$ is mapped to the $v$th node in the tree in level-order. Thus by $v$ we will simultaneously refer to the vertex and the node in the distance tree. As the conversion between level-order ranks, pre-order ranks and post-order ranks in a tree are all $O(1)$ (and typically, nodes in a tree are referred by their pre-order ranks), we will implicitly convert between them as the situation requires. All the queries are reduced to tree operations and can be done in $O(1)$ time. 

Since we will be building upon the $\qdist$ and $\qsp$ queries, we will explain it in detail. The distance tree's parent child relationship is the following: for a node $v$, the parent of $v$ is the smallest (indexed) node that is adjacent to $v$.

Let $u < v$, be the nodes involved in the $\qdist/\qsp$ query, and suppose that $\qdepth(u) = k_1 \le k_2 = \qdepth(v)$. We repeatedly take the parent of $v$ to form the following chain: $v_{k_1},v_{k_1 +1},\ldots,v_{k_2} = v$, where $v_j$ is at depth $j$. If $v_{k_1} < u$ then a shortest path is $u,v_{k_1 +1},\ldots,v_{k_2} = v$. If $v_{k_1} = u$, then the shortest path is $u = v_{k_1},v_{k_1 +1},\ldots,v_{k_2} = v$. Finally if $v_{k_1} > u$, then a shortest path is $u, v_{k_1}, v_{k_1 +1},\ldots,v_{k_2} = v$.
To compute the length without getting every node, we simply subtract the depths of $u$ and $v$, use level-ancestor to find $v_{k_1}$ and find which case we fall into to adjust the distance.

We note that this is only one of many possible shortest paths. To accommodate the beer vertices, we will investigate what all the possible shortest paths might look like. To this end, we will consider the following question: let $u < w < v$, is $w$ on a shortest path between $u$ and $v$? Equivalently, is $\qdist(u,v) = \qdist(u,w) + \qdist(w,v)$? In this case we say that $w$ \emph{preserves the distance}.
Let $\qpost(v)$ denote the post-order rank of a node in the tree. It is clear that if $\qpost(u) < \qpost(v)$, then $u$ is to the left of $v_{k_1}$, so that $u < v_{k_1}$, and similarly for the reversed inequality.
Finally, we note that for nodes on the same level of the tree, their post-order numbers are sorted. That is if $u < v$ are on the same level, then $\qpost(u) < \qpost(v)$ (and vice versa).
\begin{figure}[h]
    \centering
    \includegraphics[width=\textwidth]{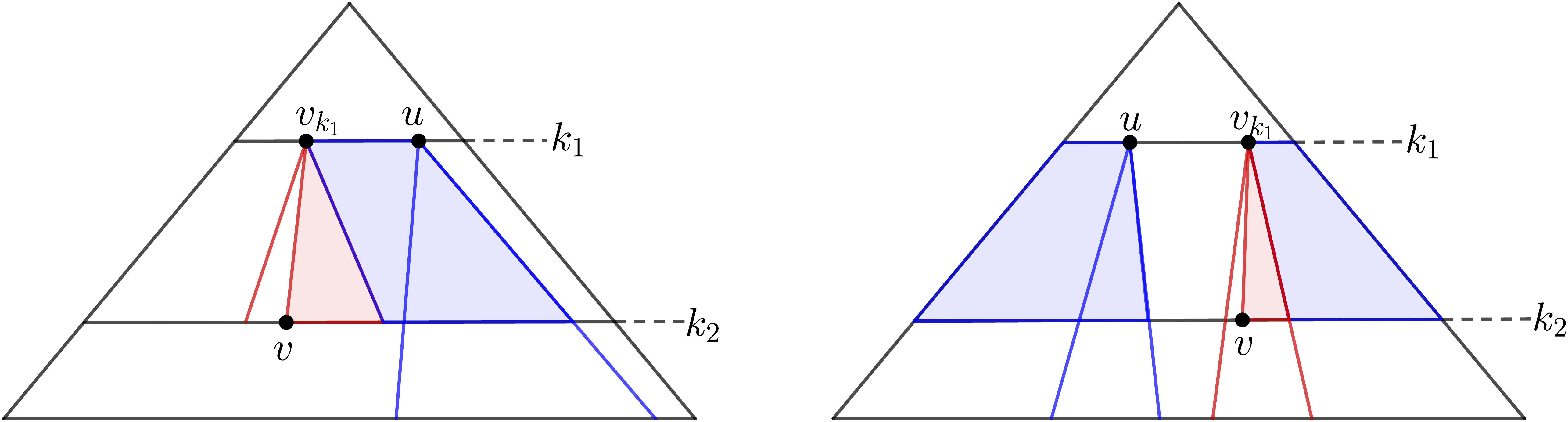}
    \caption{The union of the two shaded regions capture the nodes which preserve the distance in Lemma \ref{l:proper-preserve}. The left is the case when $\qpost(u) > \qpost(v)$, and the right is the case when $\qpost(u) < \qpost(v)$. The blue region represents complete subtrees that are included, while the red represents the nodes to the right of the path to the root from $v$, used in subsection \ref{s:improved-ds}. Note the nodes on level $k_1$ to the left of $u$ and on level $k_2$ to the right of $v$ are not included.}
    \label{f:preserve}
\end{figure}

\begin{lemma}
	\label{l:proper-preserve}
	Let $u< w < v$ be 3 nodes in a proper interval graph. Suppose that $\qpost(u) < \qpost(v)$, then $w$ is on a shortest path exactly when either $\qpost(w) < \qpost(u)$ or $\qpost(w) > \qpost(v)$ (that is $w$ preserves the distance). If $\qpost(u) > \qpost(v)$, then $w$ preserves the distance exactly when $\qpost(v) < \qpost(w) < \qpost(u)$.
	Furthermore, if $w$ does not preserve the distance, then the path passing through $w$ increases the distance by 1. That is $\qdist(u,v)+1 = \qdist(u,w) + \qdist(w,v)$.
\end{lemma}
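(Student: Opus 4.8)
The plan is to reduce the entire statement to a short identity about Iverson brackets of post-order values, after first pinning down an explicit closed form for \qdist{} in the distance tree.

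First I would record two structural facts. Since vertex $v$ is the $v$-th node in \emph{level-order} and, on any fixed level, post-order ranks increase left-to-right, the index order coincides with the lexicographic order by $(\qdepth,\qpost)$. In particular $u<w<v$ forces $\qdepth(u)\le\qdepth(w)\le\qdepth(v)$, and among same-depth nodes the smaller index has the smaller post-order. Second, I would turn He et al.'s three-case shortest-path description into the closed form that for $a<b$,
$$\qdist(a,b)=\qdepth(b)-\qdepth(a)+[\qpost(a)<\qpost(b)],$$
where $[\cdot]$ is $1$ if the inequality holds and $0$ otherwise. The content here is exactly the $+1$ rule: letting $a_b$ be the depth-$\qdepth(a)$ ancestor of $b$, the extra step appears precisely when $a<a_b$ (the paper's case~3), and I would show $a<a_b\iff\qpost(a)<\qpost(b)$ from the subtree/post-order-interval characterization. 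Since $b$ lies in the subtree of $a_b$ we have $\mathrm{lo}(a_b)\le\qpost(b)\le\qpost(a_b)$ (where $\mathrm{lo}$ is the least post-order in that subtree), so $a<a_b$ pushes $\qpost(a)<\mathrm{lo}(a_b)\le\qpost(b)$, and conversely $\qpost(a)<\qpost(b)\le\qpost(a_b)$ forces $a<a_b$ on their common level. The degenerate tie $\qdepth(a)=\qdepth(b)$ is absorbed automatically: then $a_b=b$, the bracket is $1$, and the formula returns distance $1$.

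With the formula in hand the argument is purely algebraic. For $u<w<v$ each pair's smaller-index endpoint is no deeper than the other, so the uniform formula applies to all three pairs, and summing gives
$$\qdist(u,w)+\qdist(w,v)-\qdist(u,v)=[\qpost(u)<\qpost(w)]+[\qpost(w)<\qpost(v)]-[\qpost(u)<\qpost(v)].$$
Writing $a=\qpost(u),\,b=\qpost(w),\,c=\qpost(v)$ (all distinct), the right-hand side $[a<b]+[b<c]-[a<c]$ takes only the values $0$ and $1$ across the six orderings of $a,b,c$, and it is $0$ exactly for $a<c<b$, $b<a<c$, and $c<b<a$. Reading these back: when $\qpost(u)<\qpost(v)$ (so $a<c$) the difference is $0$ iff $b<a$ or $b>c$, i.e.\ $\qpost(w)<\qpost(u)$ or $\qpost(w)>\qpost(v)$; when $\qpost(u)>\qpost(v)$ (so $c<a$) it is $0$ iff $c<b<a$, i.e.\ $\qpost(v)<\qpost(w)<\qpost(u)$. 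These are exactly the claimed preservation criteria, and since the expression never exceeds $1$, every non-preserving $w$ increases the distance by exactly $1$, which yields the ``furthermore''.

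I expect the main obstacle to be the bookkeeping that establishes the closed-form distance, specifically the equivalence $a<a_b\iff\qpost(a)<\qpost(b)$ and checking that the boundary cases $\qdepth(u)=\qdepth(w)$ and $\qdepth(w)=\qdepth(v)$ do not break the uniform formula; once the formula is stated uniformly in index order, the rest is a finite check on three symbols. As a sanity guard against an off-by-one in the bracket, I would verify both the formula and the final criterion against the running example, e.g.\ $u=3,\,v=13$, whose forced intermediate set is $\{7,8,9\}$.
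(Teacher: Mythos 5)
Your proposal is correct, but it reaches the lemma by a genuinely different route than the paper. The paper fixes the ancestor chain $v_{k_1},\ldots,v_{k_2}=v$ and partitions the intermediate nodes into slices $V_{k_3}=\{w : v_{k_3}\le w<v_{k_3+1}\}$, then determines, level by level within each slice, the distance to $u$ and to $v$, and obtains the characterization as the union of the per-slice conditions. You instead compress He et al.'s three-case algorithm into the single closed form $\qdist(a,b)=\qdepth(b)-\qdepth(a)+[\qpost(a)<\qpost(b)]$ for $a<b$; the key equivalence $a<a_b\iff\qpost(a)<\qpost(b)$ that justifies the bracket is exactly the remark the paper makes just before the lemma, and your use of level-order numbering to ensure the formula applies to all three pairs is sound. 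After that, both directions of the criterion and the ``furthermore'' fall out of telescoping plus a six-case check on the orderings of three distinct post-order values; in particular, the observation that $[a<b]+[b<c]-[a<c]\in\{0,1\}$ gives ``increases the distance by exactly 1'' for free, something the paper has to argue separately inside each case. What your route gives up is the constructive by-product of the paper's proof: the slice analysis shows that within each $V_k$ the distance-preserving nodes form a prefix in level order, a fact the paper reuses later for listing \qbsp{} one vertex at a time and again in Lemma \ref{l:reduce-to-selected}; your argument certifies the lemma but does not yield that structure. Your sanity check against the running example ($u=3$, $v=13$, preserving set $\{7,8,9\}$) is accurate.
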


\begin{proof}
	First we consider the case when $\qpost(u) < \qpost(v)$. By our previous remark, this implies that $v_{k_1} > u$ (that is $v_{k_1}$ is to the right of $u$ on level $k_1$). Furthermore, this implies that $\qdist(u,v) = k_2-k_1+1$.
	
	For each level $k_1 < k_3 \le k_2$, we consider the nodes in level-order between $v_{k_3}$ and $v_{k_3 + 1}$ and denote them as $V_{k_3} = \{w; v_{k_3} \le w < v_{k_3+1}\}$. These are the nodes that are adjacent to $v_{k_3+1}$ (that are before it in level order) and thus have a distance $k_2-k_3$ from $v$. You can see this as a shortest path produced by our $\qsp$ query would be $w, v_{k_3+1},\ldots,v$. This set contains nodes from two levels in the tree: $k_3$ and $k_3+1$. First consider the nodes $w$ on level $k_3$. These nodes are exactly those that $\qpost(w) \ge \qpost(v_{k_3}) > \qpost(v)$. If we look at a shortest path from $w$ to $u$, we see that the chain we produce $w_{k_1},\ldots,w_{k_3}$, has the property that $w_{k_1} \ge v_{k_1} > u$. This is because as $\qpost(w_{k_3}) > \qpost(v_{k_3})$, and this inequality is preserved as we repeatedly take the parent operation on both chains. Therefore, the distance between $w$ and $u$ is $k_3-k_1+1$, and $w$ preserves the distance.
	
	Next consider the nodes on level $k_3+1$. These are the nodes with $w < v_{k_3+1}$, and equivalently, exactly those on this level that $\qpost(w) < \qpost(v)$. As above, we consider the path from $w$ to $u$, which expands out as the chain $w_{k_1},\ldots,w_{k_3+1}$. In the case that $w_{k_1} \le u$, the distance is $k_3-k_1+1$, and if $w_{k_1} > u$, then the distance is $k_3-k_1+2$. Thus we see that $w$ preserves the distance when $w_{k_1} \le u$. This implies that $\qpost(w) < \qpost(u)$, as either $w_{k_1} = u$ and $u$ is the last node in its subtree by post-order numbers, or $w_{k_1}$ is to the left of $u$ and that relation is preserved down the chain.
	Thus in the set $V_{k_3}$, the nodes that preserve the distance are exactly those with $\qpost(w) < \qpost(u)$ or $\qpost(w) > \qpost(v)$. Taking the union of the $V_{k}$ and we see that this is exactly the condition for a node to preserve the distance.
	Furthermore, the nodes that do not preserve the distance only increase the distance by 1 ($\qdist(u,v) + 1 = \qdist(u,w)+\qdist(v,w)$).
	
	The second case is when $\qpost(v) < \qpost(u)$. In this case, we have $v_{k_1} \le u$ and thus $\qdist(u,v) = k_2-k_1$. Again we consider sets $V_{k_3}$ and the nodes on the levels $k_3$ and $k_3+1$. These nodes have a distance of $k_2-k_3$ to $v$.
	First we consider the nodes on $k_3 + 1$. As in the previous case, the distance must be either $k_3-k_1 +1$ or $k_3-k_1+2$, but in either case, we cannot preserve the distance. In fact, since $\qpost(v_{k_3+1}) \le \qpost(u)$, we see that $w_{k_1} \le v_{k_1} \le u$, hence the distance must actually be $k_3-k_1+1$.
	
	Next we consider the nodes on level $k_3$. Expanding out the path, we have two cases: either $w_{k_1} \le u$ or $w_{k_1} > u$. In the first case, the distance is $k_3-k_1$ and these nodes preserve the distance, and the second case the distance is $k_3-k_1+1$. The condition for $w_{k_1} \le u$ is $\qpost(w) < \qpost(u)$, as in this case, either $w$ is in the subtree rooted at $u$ or in the subtree rooted at a node to the left of $u$ on level $k_1$.

	Combining the cases we see that the $w$ preserves the distance exactly when $\qpost(v) < \qpost(w) < \qpost(u)$. And if $w$ does not, it only increase the distance by 1.
\end{proof}

Remark: when $u$ and $v$ are on the same level, then they are adjacent. Thus no nodes can preserve the distance.

See figure \ref{f:preserve} for a pictorial representation of the criteria;
Now we can describe the process of determining the beer distance. The idea is to cover the nodes of the tree with 3 sets, and determine the best possible beer distance using beer vertices in each of the 3 sets. Finally we take the minimum of the 3. We will call the best vertex in each set a \emph{candidate}.

First we note that if either $u,v \in B$, then we do not need to do anything and simply return $\qdist(u,v)$ (or $\qsp(u,v)$).

\textbf{Candidate 1}: The set of nodes is $\{w \in B ; w > v\}$. We claim that the best beer node in this set is the smallest one. To show this, we will use the following lemma.
\begin{lemma}
	\label{l:sort-distance-proper}
	Let $u < v < w$ be 3 nodes in a proper interval graph, then $\qdist(u,v) \le \qdist(u,w)$.
	By symmetry, $\qdist(v,w) \le \qdist(u,w)$.
\end{lemma}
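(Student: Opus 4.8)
The plan is to prove the first inequality by \emph{surgery} on a shortest path, and then obtain the second from the left--right reflection symmetry of proper interval graphs. Throughout I would use the endpoint characterisation of adjacency: since the vertices are sorted by left endpoint, for $a < b$ the intervals of $a$ and $b$ intersect if and only if $l_b \le r_a$; in particular, if $a<b$ are adjacent then $l_b \le r_a$.

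First I would fix a shortest path $u = x_0, x_1, \ldots, x_d = w$ with $d = \qdist(u,w)$ and locate the first vertex that reaches or passes $v$: let $j \ge 1$ be the least index with $x_j \ge v$, which exists because $x_0 = u < v$ and $x_d = w > v$, so that $x_{j-1} < v \le x_j$. If $x_j = v$, then $u = x_0, \ldots, x_j = v$ is already a path to $v$ of length $j \le d$. Otherwise $x_{j-1} < v < x_j$, and since $x_{j-1}$ and $x_j$ are adjacent (with $x_{j-1} < x_j$) we get $l_v < l_{x_j} \le r_{x_{j-1}}$, hence $l_v \le r_{x_{j-1}}$, which is exactly the condition for $x_{j-1}$ and $v$ to be adjacent. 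Replacing the suffix of the path by the single edge $x_{j-1}v$ yields a path $x_0, \ldots, x_{j-1}, v$ of length $j \le d$. Either way $\qdist(u,v) \le d = \qdist(u,w)$. I note that this step uses only the left-endpoint ordering, so the first inequality in fact holds for every interval graph, not just proper ones.

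For the second inequality I would appeal to reflection symmetry. Negating all endpoints via $[l_x,r_x] \mapsto [-r_x,-l_x]$ sends a proper interval graph to an isomorphic proper interval graph whose sorted vertex order is exactly the reverse of the original (the no-nesting property guarantees that the left- and right-endpoint orders coincide, so both reverse together), while graph distances are unchanged. Under this reflection the triple $u < v < w$ is relabelled as an increasing triple $w < v < u$, to which the already-proved first inequality applies and yields $\qdist(w,v) \le \qdist(w,u)$; translating back gives $\qdist(v,w) \le \qdist(u,w)$.

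The main obstacle, and the only place where properness is genuinely used, is precisely this second inequality. For general interval graphs it can fail: taking $u = [1,10]$, $v = [2,3]$, $w = [4,5]$ gives $u < v < w$ with $u$ adjacent to $w$ but $v$ not adjacent to $w$, so $\qdist(v,w) = 2 > 1 = \qdist(u,w)$. The reflection argument neatly sidesteps re-deriving the right-endpoint inequalities by hand, but the care needed is in verifying that the reflected representation is again a valid sorted proper interval representation, which is exactly the point at which the no-nesting hypothesis enters.
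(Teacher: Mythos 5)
Your proof is correct, but it takes a genuinely different route from the paper's. The paper argues inside the distance tree of He et al.: it takes the parent chain $w_{k_1},\ldots,w_{k_3}=w$ produced by the shortest-path algorithm, locates the index $i$ with $w_{i-1} < v \le w_i$, and splices $v$ into that chain, so its correctness leans on the tree structure (parent $=$ smallest adjacent vertex) rather than on interval endpoints. You instead perform surgery on an arbitrary shortest path using the endpoint characterization of adjacency ($l_b \le r_a$ for adjacent $a<b$), which is more elementary and, as you observe, establishes the first inequality for \emph{all} interval graphs --- a fact the paper relies on later, when Candidate 1 in Section \ref{s:ds-interval} asserts that ``the exact same proof'' of Lemma \ref{l:sort-distance-proper} works for interval graphs. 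Your treatment of the second inequality also makes precise what the paper compresses into the words ``by symmetry'': you exhibit the reflection $[l_x,r_x]\mapsto[-r_x,-l_x]$, verify that the no-nesting property forces the left- and right-endpoint orders to coincide so that the reflection reverses the vertex order, and you give a counterexample ($u=[1,10]$, $v=[2,3]$, $w=[4,5]$) showing the second inequality genuinely fails without properness; the paper uses this same reflection device only later, for Candidate 2 of interval graphs (Lemma \ref{l:cand2-interval}). What the paper's tree-based argument buys is direct compatibility with the data structure (the chain it manipulates is exactly what the query algorithm traverses); what yours buys is independence from the tree machinery, greater generality in the first half, and an explicit, checkable justification of the symmetry step together with a demonstration of where properness is indispensable.
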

\begin{proof}
	Let the depths be $\qdepth(u) = k_1, \qdepth(v) = k_2, \qdepth(w) = k_3$ with $k_1 \le k_2 \le k_3$. We consider the chain from $w$: $w_{k_1},\ldots, w_{k_3} = w$. Since $v < w$, there exists an index $i$ such that $w_{i-1} < v \le w_{i}$. and $i \le w_3$. Hence we may create the path starting from $v$ as $v \rightarrow w_{i-1} \rightarrow w_{i-2} \cdots w_{k_1}$, which is a path to $u$ of at most the length as the path from $w$. The result follows.
\end{proof}

The node in $\{w\in B; w > v\}$ that minimizes the value of $\qdist(u,w)+\qdist(v,w)$ is of course the $w$ of minimal index.

\textbf{Candidate 2}: The set of nodes $\{w\in B; w < u\}$. We take the largest node in the set as the candidate using Lemma \ref{l:sort-distance-proper}.

\textbf{Candidate 3}: The set of nodes $\{w\in B; u<w<v\}$. By Lemma \ref{l:proper-preserve}, we need to determine whether there exists a node such that either $\qpost(u) < \qpost(w) < \qpost(v)$ or ($\qpost(w) < \qpost(u)$ or $\qpost(w) > \qpost(v)$), depending on $\qpost(u) < \qpost(v)$. If such a node exists, then it is the candidate, with distance $\qdist(u,v)$. If no such node exists, we may take any node as the candidate, with distance $\qdist(u,v)+1$.

\subsection{First Data Structure For Beer Distance}
\label{ss:ds-proper-1}
Here we discuss how to use the previous results to create a data structure for the queries. In this subsection, we discuss a relatively simple data structure which has decent run times. However, the space is dependent on $|B|$ the number of beer nodes and in the case that $|B| = \Theta(n)$ is large, the space bound is also unacceptably large. In the next subsection, we will show how to remove this dependence on $|B|$ at the cost of slightly worse run times.

To store the beer vertices, we store a bit vector $B$ of length $n$ so that $B[i] = 1$ if the $i$th vertex is a beer vertex. This uses $n+o(n)$ bits of space.
As above we will assume that neither $u$ nor $v$ are beer vertices (and we can check by looking at $B[u],B[v]$).

\textbf{Candidate 1}: We use $\qrank(v)$ to find how many beer nodes are up to $v$. The smallest beer node that is larger than $v$ can be found using $\qselect(\qrank(v)+1)$.

\textbf{Candidate 2}: Similarly to candidate 1, we find it by $\qselect(\qrank(u))$.

\textbf{Candidate 3}: For every beer node $w$, we store it in a 2D range emptiness data structure using the coordinates $(w,\qpost(w))$ (by our notation $w$ is just the level-order number of the node $w$). In the case that $\qpost(u) < \qpost(v)$, we need the nodes such that $\qpost(w) < \qpost(u)$ or $\qpost(w) > \qpost(v)$ and $u < w < v$. This is translated to the rectangles $(u,v)\times (-\infty,\qpost(u))$ and $(u,v)\times (\qpost(v),\infty)$.

For the second case when $\qpost(u) > \qpost(v)$, we need the nodes that $\qpost(v) < \qpost(w) <\qpost(u)$. This is the rectangle $(u,v)\times (\qpost(v),\qpost(u))$.

This suffices to determine the distance. To list out the path, we first determine which candidate to use. Candidates 1 and 2 can simply list out the path using two $\qsp$ queries. For candidate 3, we list out the path between $u,v$ one step at a time, and, at each step, we consider the set $V_{k}$, which is an interval in level order. We note that the nodes preserving the distance is a prefix of this interval. Thus we find the first beer vertex in $V_k$, and check if it preserves the distance, if so add it to the path and list out the path from there. Otherwise, we continue to the next level. 
Furthermore, as listing out the path for Candidate 3 is at most $O(\qdist(u,v))$ time, we may do this whenever $\qdist(u,v) = O(\log^\epsilon n)$ rather than spending the time on the orthogonal range search in the distance query.
Thus we have the following theorem:
\begin{theorem}
	A beer proper interval graph $G$ can be represented using $3n + o(n) + O(|B|\log n)$ bits to support the interval graph queries plus $\qbsp$ in $O(1)$ time per vertex on the path and $\qbdist(u,v)$ in $O(\min(\log^\epsilon n,\qdist(u,v)))$ time.
	If we increase the extra space to $O(|B|\log n\log\log n)$ bits, we may support $\qbdist(u,v)$ in $O(\min(\log \log n,\qdist(u,v)))$ time instead.
\end{theorem}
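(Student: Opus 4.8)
The plan is to assemble the three components already developed---He et al.'s distance-tree encoding, the candidate characterization of optimal beer detours, and the auxiliary search structures---and then verify that their combined space and query costs meet the claimed bounds; most of the work is accounting rather than new argument.

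First I would tally the space. The distance tree $T$ with its succinct topology takes $2n + o(n)$ bits and answers $\qdist$, $\qsp$, $\qdepth$, $\qpost$, and the order-conversions in $O(1)$ time. The beer bit vector $B$ adds $n + o(n)$ bits with $O(1)$-time $\qrank$, $\qselect$, $\qaccess$. The $|B|$ points $(w,\qpost(w))$ for Candidate 3 are placed in a 2D range-emptiness structure, which by Chan et al.\ costs $O(|B|\log n)$ bits at $O(\log^\epsilon n)$ query time, or $O(|B|\log n\log\log n)$ bits at $O(\log\log n)$ query time. Summing yields $3n + o(n) + O(|B|\log n)$, and swapping in the second range-emptiness tradeoff gives the $\log\log n$ variant, as claimed.

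Next I would establish correctness and timing of $\qbdist$. After reading $B[u], B[v]$ to dispose of the trivial case, every beer vertex other than $u,v$ lies in exactly one candidate set. Lemma~\ref{l:sort-distance-proper} makes the detour length monotone inside $\{w>v\}$ and $\{w<u\}$, so the optimal representatives are $\qselect(\qrank(v)+1)$ and $\qselect(\qrank(u))$, found with their two $\qdist$ values in $O(1)$. For the interior set $\{u<w<v\}$, Lemma~\ref{l:proper-preserve} turns ``is there a distance-preserving beer vertex?'' into one emptiness query over the rectangle(s) dictated by the order of $\qpost(u)$ and $\qpost(v)$: a witness gives candidate distance $\qdist(u,v)$, and otherwise---provided the interior set is nonempty, a fact read off as a $\qrank$ difference---the candidate distance is $\qdist(u,v)+1$, since by the same lemma a non-preserving interior vertex raises the distance by exactly one. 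The minimum of the three candidate distances is $\qbdist(u,v)$. To obtain the $O(\min(\log^\epsilon n,\qdist(u,v)))$ bound I would branch on $\qdist(u,v)$, known in $O(1)$: when it exceeds the range-query cost we pay the single emptiness query, and when it is smaller we instead walk the $u$--$v$ path level by level, using that each level's set $V_k$ is a contiguous level-order interval whose distance-preserving prefix is tested by locating its first beer vertex with $\qrank/\qselect$ and checking the preservation predicate in $O(1)$---so a witness, or its absence, is certified in $O(\qdist(u,v))$ total.

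For $\qbsp$ I would first identify the winning candidate via $\qbdist$; Candidates 1 and 2 then concatenate two ordinary $\qsp$ calls at the beer vertex, while Candidate 3 reuses the level-by-level scan, each spending $O(1)$ per emitted vertex. The main obstacle is not any single step but keeping every case consistent with Lemma~\ref{l:proper-preserve}: the preservation criterion changes form according to whether $\qpost(u)<\qpost(v)$, so one must map each case to the correct rectangle(s) and, in the walking variant, correctly recognize the distance-preserving prefix of $V_k$ and confirm that the chosen witness yields a genuine shortest beer path rather than one that is a single edge too long. Pinning down the $\qdist(u,v)+1$ off-by-one and the excluded boundary nodes on levels $k_1$ and $k_2$ is where the care is concentrated.
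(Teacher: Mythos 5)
Your proposal is correct and follows essentially the same route as the paper: the same $3n+o(n)$ tally (distance tree plus beer bit vector), rank/select for Candidates~1 and~2, the 2D range-emptiness structure on the points $(w,\qpost(w))$ for Candidate~3, and the branch on $\qdist(u,v)$ (walking the levels and testing the distance-preserving prefix of each $V_k$ when the distance is small) to obtain the $O(\min(\log^\epsilon n,\qdist(u,v)))$ bound and $O(1)$-per-vertex path reporting. In fact your write-up makes explicit some details the paper leaves implicit, such as checking nonemptiness of the interior candidate set via a $\qrank$ difference before charging $\qdist(u,v)+1$.
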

\begin{proof}
	The space required is the distance tree of \cite{DBLP:conf/isaac/0001MNWW20}, a single bit vector for the beer nodes, and a single 2D range emptiness data structure.
\end{proof}

We note that if there are many beer vertices, so that $|B| \in \Theta(n)$, then the space usage would be $\Theta(n\log n)$. However if $|B|$ were small (ex. $|B| = O(n/\log^2 n)$), then this data structure will suffice.

\subsection{Improved Data Structure for Beer Distance}
\label{s:improved-ds}
Here we show how to improve the space usage of our specialized ranged emptiness query, so that it no longer has any dependence on $|B|$. Since we have the tree structure, the range emptiness can be reduced to checking whether a certain set of tree nodes have any beer nodes in them. In particular, as seen from the previous subsection, we need to support the following rectangles using $o(n)$ bits: 1) $(u,v)\times (-\infty,\qpost(u))$, 2) $(u,v)\times (\qpost(v),\infty)$, and 3) $(u,v)\times (\qpost(v),\qpost(u))$.
We will call these type 1,2 and 3 rectangles.

To make our notation cleaner, we will use the depth of a node in the first coordinate of a rectangle. This means to include all the nodes on that level. To accomplish this, we simply find the first node on that level (in $O(1)$ time) and substitute its level-order number as the value to be used in the rectangle; similarly use the last node of a level for the right end point of the rectangle.

Fix $\Delta = \omega(1)$, and choose an index $1 \le i \le \Delta$ such that the number of nodes on levels $k = i \mod \Delta$ is minimized. We will call these levels \emph{selected levels}, and the nodes on them \emph{selected nodes}. Thus the number of selected nodes is $O(n/\Delta)$ by the pigeonhole principle. For each of these nodes, consider the subtree rooted at them that extends down to the next select level. We will build the contracted tree $T'$ with these subtree as nodes, and the appropriate edges. A node in $T'$ is a beer node if at least one of the original nodes in the corresponding subtree \emph{except the root} is a beer node.

We use a bit vector to store which nodes in level order are selected. As the number of nodes is $O(n/\Delta) = o(n)$, this compressed bit-vector uses $o(n)$ bits of space. We store the contracted tree $T'$ succinctly, using $2n/\Delta + o(n) = o(n) $ bits. The bit vector to mark which nodes of $T'$ are beer nodes is also $n/\Delta = o(n)$ bits. Thus the total space for our contracted tree is $o(n)$ bits.

To support these rectangles, we first reduce the general case to one where both $u,v$ are on a selected level. 
\begin{lemma}
\label{l:reduce-to-selected}
    We may assume that the inputs $u,v$ are on selected levels at the cost of $O(\Delta)$ extra time.
\end{lemma}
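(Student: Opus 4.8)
The plan is to reduce a query with arbitrary endpoints $u,v$ to one in which both endpoints lie on selected levels, spending $O(\Delta)$ time on the reduction. The guiding fact is the shape of the distance-preserving region from Lemma~\ref{l:proper-preserve} (Figure~\ref{f:preserve}): it is bounded by the root-to-$u$ and root-to-$v$ paths together with the horizontal cuts at levels $k_1=\qdepth(u)$ and $k_2=\qdepth(v)$, and it decomposes into complete hanging subtrees together with the nodes lying to the right of the root-to-$v$ path. Because $T'$ records, for each subtree spanning two consecutive selected levels, whether it contains a non-root beer node, a region whose horizontal boundaries fall exactly on selected levels splits into whole $T'$-nodes and is answered directly; the entire difficulty is the part of the region lying in the two partial bands adjacent to $u$ and to $v$.

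First I would locate the selected levels bracketing $k_1$ and $k_2$; since selected levels are spaced $\Delta$ apart this is $O(1)$, and each bracket has height at most $\Delta$. Using repeated parent (level-ancestor) steps I would walk the root-path of $v$ from level $k_2$ up to the first selected level, reaching a node $v'$, and likewise walk the root-path of $u$ to a selected node $u'$; each walk visits at most $\Delta$ nodes. Replacing $u,v$ by $u',v'$ makes the bulk of the region --- everything between the selected level above $u$ and the selected level above $v$ --- a query on $T'$ with both endpoints on selected levels, which the following lemmas will answer.

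It remains to account for the slivers of the region cut off by the snapping, which lie entirely inside the two height-at-most-$\Delta$ bands traversed above. The plan is to charge each sliver to the path we already walked: within a single band the part of the distance-preserving region to the right of the $v$-path (respectively to the left of the $u$-path) is a disjoint union of complete subtrees hanging off the at-most-$\Delta$ traversed path nodes, so it suffices, for each path node, to test whether one of its hanging subtrees contains a beer node. Since each such subtree occupies a contiguous interval in post-order, this test can be phrased as a one-dimensional emptiness check, and the total over both bands is $O(\Delta)$; taking the conjunction of the $T'$-answer for the aligned bulk with these band checks gives emptiness for the original $u,v$.

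The step I expect to be the main obstacle is the last one: I must verify that, once restricted to a single band, the distance-preserving region really does break into at most $O(\Delta)$ subtrees hanging directly off the traversed path --- so that no sliver forces us to inspect a region as wide as the (possibly large) band --- and that each hanging-subtree beer test can be performed in $O(1)$ using only the coarse information in $T'$ together with the length-$n$ beer bit vector, rather than a finer per-node structure we are trying to avoid. This is exactly where the path-bounded shape established in Lemma~\ref{l:proper-preserve} must be used, and it is the part that needs to be argued with care to keep the reduction within $O(\Delta)$ time and $o(n)$ extra space.
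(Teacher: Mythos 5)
Your reduction has two genuine gaps. First, the snapping direction for $u$ is wrong. You walk \emph{both} endpoints up their root-paths to selected levels, but the query region $(u,v)\times(\cdots)$ lives on levels $\qdepth(u)$ through $\qdepth(v)$: snapping $v$ up only shrinks the region (the discarded part is the band you scan), whereas snapping $u$ up \emph{expands} it, since every node on a level above $\qdepth(u)$ precedes $u$ in level order and hence lies outside the range $(u,v)$. An emptiness query over this enlarged region can return a false positive caused by a beer node above $u$'s level, and no amount of additional emptiness checks on the bands can repair this, because emptiness queries do not support subtraction. The paper instead moves $u$ \emph{down}: on each successive level it tracks the proxy node $u_k$ (the largest-post-order node less than $u$, i.e.\ the last child of $u$ or of the previous internal node --- the largest neighbour of $u$), which splits each level by post-order exactly as $u$ does, so the top boundary of the remaining region lands on the selected level \emph{below} $\qdepth(u)$ and the region only shrinks.

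Second, your band-handling claim --- that within a band the region is a disjoint union of \emph{complete} subtrees hanging off the traversed path, each testable by a contiguous post-order interval --- is false, and you correctly sensed that this is where the argument strains. The subtrees hanging to the right of the $v$-path extend below level $\qdepth(v)$, but nodes of those subtrees at depth $\ge \qdepth(v)$ are outside the level-order range $(u,v)$; so the region within the band consists of \emph{truncated} subtrees, and a post-order-interval emptiness test on a whole hanging subtree would count out-of-region beer nodes. (This truncation is precisely the difficulty the paper isolates into Lemma \ref{l:emptiness-to-right}, where it is resolved by \emph{counting} and subtracting, not by emptiness, and where post-order rank costs $O(\log n)$ unless an extra $n$-bit post-order vector is stored --- so your hoped-for $O(1)$ test from the level-order bit vector alone is also unavailable.) The paper's reduction sidesteps all of this with a simpler per-level scan: on each of the $O(\Delta)$ traversed levels, the region's nodes on that level form a contiguous prefix/suffix in level order, so $O(1)$ rank/select operations on the beer bit vector find the relevant extremal beer nodes (first on the level, first after $v_k$, first before $u_k$, last on the level), whose post-orders are compared against the bounds; any hit answers the query as true immediately, and otherwise the band is certified beer-free within the region and the query proceeds with the snapped endpoints.
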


\begin{proof} 
For $v$, we move up the tree using parent as in the $\qbsp$ query. On level $k$, we take the all the nodes on that level, and find 1) the first beer node, 2) the first beer node after $v_k$. If the first beer node $w$ has the property that $\qpost(w) < \qpost(u)$, we may answer the type 1 rectangle query as true immediately. If there exists a beer node after $v_k$, we may answer the type 2 query as true immediately. If the beer node $w$ after $v_k$ has the property that $\qpost(w) < \qpost(u)$, then we may answer the type 3 query as true immediately. Thus we assume we find no beer nodes that will allow us to answer the query immediately. Since each step takes $O(1)$ time, it takes $O(\Delta)$ time to reach a node that is on a selected level.

For $u$, we move down the tree $T'$. The essence of the rectangles is that for each level, we wish to split the nodes on that level into 2: those with post-order numbers less than or equal to $u$, and those greater. Thus as we move down the tree $T'$ we wish to find the node that splits the levels in the same way as $u$.

By the properties of post-order traversal, the node on the next level that has this property is the largest post-order numbered node less than $u$. If $u$ has any children, this is the last child of $u$. If $u$ has no children, this is the last child of the previous internal node (in level-order) from $u$. As shown by He et al. \cite{DBLP:conf/isaac/0001MNWW20} this is the largest neighbour of $u$. We will call these nodes $u_k$ for the node on level $k$. In the case that there are no nodes that satisfy the criteria (that is every node on the next level have a post-order number larger than $u$), then as no node to the left of $u$ has any children, we must necessarily have that type 1 and type 3 rectangles are empty. As type 2 rectangles do not use $\qpost(u)$ and only its depth, we may choose any node on the closest selected level.

As we descend down the tree, we again take all the nodes on the level, and find 1) the first beer node before $u$ - or equivalently, the first beer node before the node $u_k$, 2) the last beer node. For the first kind $w$, if one exists, we may answer type 1 rectangles as true. We also check that it satisfies $\qpost(w) > \qpost(v)$, and if so answer type 3 rectangles as true. The second kind, we check that $\qpost(w) > \qpost(v)$ and if so, answer type 2 rectangles as true.
\end{proof}

We will now assume that both $u,v$ are on selected levels. To deal with these queries, we will build the 2D range emptiness query on our contracted tree $T'$ and the nodes on the selected levels in the same way. We wish to convert as much of the query to the 2D range emptiness on the contracted tree as possible. We will denote the corresponding node in the contracted tree to $u$ by $u'$.

\textbf{Type 1 rectangles:} we convert $(u,v)\times (-\infty,\qpost(u))$ to \linebreak $[\qdepth(u'),\qdepth(v')) \times (-\infty, \qpost(u'))$ in the contracted tree and the same rectangle \linebreak $(u,v)\times (-\infty,\qpost(u))$ in the selected nodes. We note that in $T'$, we exclude all nodes on $\qdepth(v')$ since in $T$ this includes only the nodes on that level, but in $T'$ this would include the subtrees as well, which extend down. We also change the left endpoint so that we include the subtrees to the left of $u$, but as we exclude the roots from those subtrees (in our decision to mark them as beer nodes or not), we do not include more nodes in our search than required.

\textbf{Type 2 rectangles:} In the type 2 rectangles, we note that unfortunately, the rectangle does not contain all the nodes in the subtree $v_{\qdepth(u)}$, only those to the right of the path to $v$. It does however include the entire subtrees of all the nodes to the right of $v_{\qdepth(u)}$. To handle these complete subtrees (and exclude the subtree rooted at $v_{\qdepth(u)}$) we use the rectangle $[\qdepth(u'),\qdepth(v'))\times (\qpost(v_{\qdepth(u')}),\infty)$. Of course we may find $v_{\qdepth(u)}$ using level-ancestor. We again handle the selected nodes using the same rectangle on them $(u,v)\times (\qpost(v),\infty)$.

Finally, we need to handle the the nodes in the subtree rooted at $v_{\qdepth(u)}$, to the right of the path to $v$ and above the level of $v$. We start with the entire subtree of $v_{\qdepth(u)}$, whose nodes are an interval in post-order. Then nodes $w$ with $\qpost(w) > \qpost(v)$ are exactly the ones we want, except that all the subtrees to the right of $v$ extend down to the bottom of the tree, rather than being cut off at the depth of $v$. Thus we need to handle them as well. The way to do this in encoded in the lemma below, where $k_1 = \qdepth(u)$.

\begin{lemma}
\label{l:emptiness-to-right}
    Let $v$ be a node in a tree $T$ at depth $k_2$ and $v_{k_1}$ be the ancestor of $v$ at depth $k_1$, where both $k_1,k_2$ are selected levels with a fixed $\Delta > 0$. Let $T'$ be the contracted tree as defined above. Then we are able to answer the query: does the rectangle $(v_{k_1},v)\times (\qpost(v),\infty)$ in either:
    
    $O(n/\Delta\log n) + o(n)$ additional space and $O(\log n)$ time.
    
    $O(n/\Delta\log n) + n + o(n)$ additional space and $O(\log\log n)$ time.
    
    Here we do not count the space taken by the tree $T$.
\end{lemma}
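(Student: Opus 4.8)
The plan is to exploit that the cut-off level $k_2$ is a selected level, so that truncating the staircase at depth $k_2$ coincides with a boundary between contracted blocks and never splits a block. First I would restate what the rectangle really asks for. The nodes it counts that lie outside the subtree of $v_{k_1}$ are exactly the complete subtrees hanging to the right of $v_{k_1}$, which are already handled by the contracted-tree rectangle described just before the lemma; so it suffices to detect a beer node inside the subtree of $v_{k_1}$, i.e.\ a beer node $w$ with $\qpost(v)<\qpost(w)\le\qpost(v_{k_1})$ and $\qdepth(w)\le k_2$. Geometrically this is the staircase of subtrees branching off to the right of the $v_{k_1}$-to-$v$ path, truncated at level $k_2$, together with the on-path ancestors of $v$ (which, by Lemma~\ref{l:proper-preserve}, also lie in the region).

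The key step is to separate this staircase into the part that is visible at block granularity and a small residual. Write the path in $T'$ as $s_0\to s_1\to\cdots\to s_t$, where $s_0$ is the block rooted at $v_{k_1}$, $s_t$ is the block rooted at $v$, and consecutive blocks span one band of $\Delta$ levels bounded by selected levels $\sigma_i=k_1+i\Delta$. Every subtree that branches off the path inside a band descends to the next selected level, where it enters full blocks that are precisely the $T'$-children of some $s_i$ lying to the right of $s_{i+1}$, together with all their $T'$-descendants down to level $k_2$. Because $k_2$ is a selected level, no block straddles it, so these blocks are exactly the $T'$-nodes lying to the right of the $T'$-path and at $T'$-depth at most $t-1$. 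Using that level-order groups nodes by depth, I would express ``$T'$-depth $\le t-1$'' as an upper bound on the $T'$-level-order coordinate, so that the whole deep part becomes a single rectangle $(s_0,\,\mathrm{first}_{T'}(\qdepth(s_t)))\times(\qpost(s_t),\infty)$ answered by one two-dimensional range-emptiness query on the $O(n/\Delta)$ nodes of $T'$; the level-order upper bound automatically discards everything that extends below $k_2$, which is what removes the ``subtrees extend to the bottom'' obstruction.

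The only nodes not seen at block granularity are those lying inside the path blocks $s_0,\dots,s_t$ themselves: within each band, the top at most $\Delta$ levels that are on or to the right of the internal path toward $s_{i+1}$. Every such node has depth below $\sigma_{i+1}\le k_2$, so no truncation is needed, and whether such a beer node exists depends only on the block and the exit child. I would therefore precompute, for each $T'$-edge $(s_i,s_{i+1})$, a single bit recording whether block $s_i$ contains a beer node on or to the right of its internal path to $s_{i+1}$; this is $O(n/\Delta)$ bits in total. Answering the residual is then the OR of these bits along the $T'$-path, which I would reduce to a nearest-marked-ancestor query: storing for each $T'$-node its closest marked ancestor answers this in $O(1)$ time using $O((n/\Delta)\log n)$ bits, and folding the on-path nodes into these bits is what captures the on-path ancestors.

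Finally I would assemble the bounds. The nearest-marked-ancestor structure costs $O((n/\Delta)\log n)$ bits and $O(1)$ time, so the two-dimensional emptiness query on $T'$ dominates: answering it by binary search needs no extra space and $O(\log n)$ time, while a predecessor-based implementation costs an extra $O(n)$ bits but answers in $O(\log\log n)$, giving the two stated trade-offs. The hard part is entirely the observation in the second step: the awkward fact that right-branching subtrees extend past level $k_2$ disappears once one notices $k_2$ is a selected level, so the truncation lands on a $T'$-level and can be enforced through the level-order coordinate, leaving only the per-band shallow remnant, which is small enough to tabulate with one bit per edge. The two boundary points I would double-check are that the on-path ancestors are genuinely captured (rather than silently dropped) by the per-edge bits, and the exact space accounting of the faster, predecessor-based variant.
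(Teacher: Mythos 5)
Your reduction of the ``deep part'' to a single two-dimensional emptiness query on $T'$ fails as stated, and it fails on exactly the staircase problem the lemma exists to solve. The rectangle $(s_0,\,\mathrm{first}_{T'}(\qdepth(s_t)))\times(\qpost(s_t),\infty)$ necessarily contains the path blocks $s_1,\dots,s_{t-1}$: in $T'$-level-order they lie strictly between $s_0$ and the first node of $s_t$'s level, and, being ancestors of $s_t$, their $T'$-post-order exceeds that of $s_t$. But a path block $s_i$ is beer-marked whenever it contains \emph{any} beer node other than its root---including nodes that branch off to the \emph{left} of the internal path from $v_{\sigma_i}$ to $v_{\sigma_{i+1}}$ (your notation). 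Such nodes have $\qpost(\cdot)<\qpost(v)$ in $T$ and are therefore outside the query rectangle, so your query can report ``non-empty'' when the true answer is ``empty''. This false positive propagates into Candidate 3: the algorithm would report beer distance $\qdist(u,v)$ when no distance-preserving beer node exists. Your per-edge bits cannot repair this, since they can only add further positives, never cancel one. (The flaw is repairable---e.g., use pre-order instead of post-order as the second coordinate, which excludes ancestors of $s_t$, and let the per-edge bits alone account for the path blocks---but that is not what you wrote and would need its own argument.) There is also a boundary error of the same flavour: blocks at $T'$-depth $t-1$, and your bit for the last edge $(s_{t-1},s_t)$, are marked using nodes at tree-depth exactly $k_2$, and those among them lying to the right of $v$ are outside the rectangle (their level-order exceeds $v$), giving additional false positives.

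The paper's proof takes a different route that sidesteps all of this: it \emph{counts} rather than tests emptiness. It computes the number of beer nodes with post-order in $(\qpost(v),\qpost(v_{k_1})]$ by rank queries on a post-order-indexed beer bit vector $P$ (stored explicitly for the fast variant, or simulated from the level-order vector $B$ at an $O(\log n)$ time penalty and $o(n)$ space), and then subtracts the beer nodes lying at or below level $k_2$ using per-level prefix sums of subtree beer counts stored at the $O(n/\Delta)$ selected nodes, locating the last depth-$k_2$ descendant of $v_{k_1}$ with a predecessor structure in $O(\log\log n)$ time. No range-emptiness structure on $T'$ appears inside the lemma at all, and left-of-path nodes are excluded automatically because they have post-order at most $\qpost(v)$. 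This is also why the $O(\log\log n)$ variant fits in $O((n/\Delta)\log n)+n+o(n)$ bits, whereas your claimed ``predecessor-based'' two-dimensional emptiness query has no supporting construction: the known $O(\log\log n)$-time 2D emptiness structures cost $O(N\log N\log\log N)$ bits on $N$ points, which does not fit the budget you state.
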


\begin{proof} 
We count number of beer nodes in this rectangle and if the count is 0, return false, otherwise return true.

We store a bit vector $P$, where $P[i] = 1$ if the $i$th node in post-order is a beer node. The number of beer nodes in the subtree of $v_{k_1}$ to the right of $v$ using two rank operations at $\qpost(v_{k_1})$ and $\qpost(v)$. Finally we need to subtract off the number of beer nodes below the subtrees rooted at the selected nodes to the right of $v$. To do this, at each selected node $x$, we store the total number of beer nodes in the subtrees to all the selected nodes on the same level $\qdepth(x)$ to the left of $x$ (including $x$). The number we need to compute is the difference in the number of beer nodes at subtrees to left of $v$ and the last node on $\qdepth(v)$ that is a descendant of $v_{k_1}$. The space required to store the number of beer nodes in these subtrees is $O((n/\Delta)\cdot\log n)$. 

We note that normally, to explicitly store $P$, we need $n + o(n)$ bits. As the beer node are stored in $B$ and we can convert between the indices of $B$ and $P$ in constant time, we may forgo storing the bit vector itself, and only store the auxiliary information. Whenever we need a bit of $P$, we convert the post order number to level-order and use our level order bitvector $B$ instead. As the $\qrank$ operation is $O(1)$ we thus need at most $O(\log n)$ bits from the vector $P$ (this occurs when we need $O(\log n)$ contiguous bits as the key into a lookup table). Thus we may implicitly store $P$ using only $o(n)$ bits, at the cost of $O(\log n)$ $\qrank$ query time.

To compute the last node at depth $v$ and is a descendant of $v_{\qdepth(u)}$, we will store the post order numbers of nodes in a predecessor data structure. For each selected level, we store a predecessor structure containing the post order numbers of the selected nodes at that level. The node in question is found by $\qpred(\qpost(v_{k_1}))$ on the data structure containing the nodes at level $\qdepth(v)$. As the number of selected nodes in total is $O(n/\Delta)$, the total space cost of all the predecessor structures is $O((n/\Delta)\cdot \log n)$ bits. The time complexity is $O(\log\log n)$.

Combining these numbers we obtain the number of beer nodes in the given rectangle.

Thus if we store $P$ explicitly, then the space required is $O((n/\Delta)\cdot\log n) + n + o(n)$ with time $O(\log\log n)$. 

If we do not store $P$ explicitly, then the space required is $O((n/\Delta)\cdot\log n) + o(n)$ with time $O(\log n)$.
\end{proof}

\textbf{Type 3 Rectangles:} Type 3 rectangles are similar to type 2 rectangles. As above, we use the same rectangle $(u,v)\times (\qpost(v),\qpost(u))$ for the selected nodes. We again use the rectangle 
$[\qdepth(u'),\qdepth(v'))\times (\qpost(v_{\qdepth(u')}),\qpost(u')]$ to capture the complete subtrees that we wish to use. The incomplete subtree is exactly the same as in type 2, so we are able to apply lemma \ref{l:emptiness-to-right}.

Thus putting everything together we have the following theorem:
\begin{theorem}
	Let $G$ be a beer proper interval graph. Fix $\Delta$, then $G$ can be represented using $3n+o(n) + O((n/\Delta) \cdot \log n)$ bits and can support $\qadj,\qdeg,\qnb,\qdist$ in $O(1)$ time, $\qsp,\qbsp$ in $O(1)$ time per vertex on the path and $\qbdist$ in $O(\Delta + \log n)$ time.
	
	In particular, if we take $\Delta = \log n$, then the space is $O(n)$ with time $O(\log n)$ and if we take $\Delta = f(n)\log n$ for some $f(n) = \omega(1)$, then the space is $3n + o(n)$ and the time is $O(f(n)\log n)$.
\end{theorem}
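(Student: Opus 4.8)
The plan is to assemble the components built up in subsection \ref{s:improved-ds} into a single structure and then tally space and query time separately; all of the correctness work has already been discharged in Lemmas \ref{l:proper-preserve}, \ref{l:reduce-to-selected}, and \ref{l:emptiness-to-right}, so what remains is essentially a careful accounting. For the space I would enumerate the ingredients. The distance tree of He et al. occupies $2n + o(n)$ bits and already supplies $\qadj,\qdeg,\qnb,\qdist$ and $\qsp$ at their claimed complexities with no change. The beer vertices are recorded in the level-order bit vector $B$ at $n + o(n)$ bits with $O(1)$ $\qrank/\qselect$. The selected-level machinery (the compressed bit vector marking selected nodes, the contracted tree $T'$, and the bit vector marking beer nodes of $T'$) contributes only $o(n)$ bits, since there are $O(n/\Delta)$ selected nodes. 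The two-dimensional range-emptiness structures built on the $O(n/\Delta)$ selected nodes and on the $O(n/\Delta)$ nodes of $T'$ each cost $O((n/\Delta)\log n)$ bits, and the predecessor structures and subtree-count arrays of Lemma \ref{l:emptiness-to-right}, taken in its $O(\log n)$-time variant so that $P$ is stored implicitly, add a further $O((n/\Delta)\log n)+o(n)$ bits. Summing gives the claimed $3n + o(n) + O((n/\Delta)\log n)$.

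For $\qbdist(u,v)$ I would reduce the answer to the best of the three candidates introduced earlier. Candidates 1 and 2 require only a constant number of $\qrank/\qselect$ calls on $B$ followed by an $O(1)$ distance evaluation (correct by Lemma \ref{l:sort-distance-proper}), hence $O(1)$ time. Candidate 3 is the emptiness test over the type 1, 2 and 3 rectangles: by Lemma \ref{l:reduce-to-selected} we first spend $O(\Delta)$ time to move $u$ and $v$ onto selected levels, answering the query immediately if a beer vertex met en route already settles it, after which each rectangle decomposes into (i) a query on the selected nodes, (ii) a query on the complete subtrees captured by $T'$, and (iii) for types 2 and 3 the incomplete subtree to the right of the path to $v$, handled by Lemma \ref{l:emptiness-to-right}. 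The two range-emptiness calls cost $O(\log^\varepsilon n)$ and the call to Lemma \ref{l:emptiness-to-right} costs $O(\log n)$, which dominates, so $\qbdist$ runs in $O(\Delta + \log n)$.

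For $\qbsp$, once the winning candidate is identified I list the path exactly as in subsection \ref{ss:ds-proper-1}: candidates 1 and 2 use two $\qsp$ calls, and candidate 3 is traced level by level, locating in each level set $V_k$ the first beer vertex and testing whether it preserves the distance, at $O(1)$ per reported vertex. As in the first data structure, whenever $\qdist(u,v) = O(\Delta + \log n)$ we may skip the range machinery and simply enumerate the (short) path while scanning for beer vertices, so the candidate-selection cost is always dominated by the output length and $\qbsp$ is $O(1)$ per vertex on the path. Instantiating $\Delta$ then finishes the statement: with $\Delta = \log n$ the additive term $O((n/\Delta)\log n)$ becomes $O(n)$ and the time becomes $O(\log n)$, while with $\Delta = f(n)\log n$ for $f(n)\in\omega(1)$ the additive term is $O(n/f(n)) = o(n)$, giving space $3n + o(n)$ and time $O(f(n)\log n)$.

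The only genuinely delicate step — and the one I would verify most carefully — is the rectangle decomposition for types 2 and 3. Because these are emptiness (disjunction) queries, the union of the selected-node piece, the complete-subtree piece captured by $T'$, and the incomplete-subtree piece of Lemma \ref{l:emptiness-to-right} must neither omit a node of the rectangle (a false negative) nor include a node outside it (a false positive). In particular one must check that marking a node of $T'$ as a beer node only when some non-root original node in its subtree is a beer node, combined with the $v_{\qdepth(u)}$ cutoff and the post-order predecessor computation inside Lemma \ref{l:emptiness-to-right}, together account exactly for the nodes lying above the level of $v$ and to the right of the path to $v$, with no node double-removed or double-included.
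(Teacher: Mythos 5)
Your proposal is correct and follows essentially the same route as the paper: the theorem there is proved exactly by assembling the subsection's ingredients (the $2n+o(n)$ distance tree, the $n+o(n)$ beer bit vector, the $o(n)$ contracted-tree machinery, the $O((n/\Delta)\log n)$-bit range-emptiness and predecessor/count structures with the implicit-$P$ variant of Lemma \ref{l:emptiness-to-right}), with candidates 1--2 answered by rank/select, candidate 3 by Lemma \ref{l:reduce-to-selected} plus the rectangle decomposition, and $\qbsp$ handled by the short-path/long-path dichotomy you describe. Your space and time accounting, including both instantiations of $\Delta$, matches the paper's.
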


If we wish to further our trade off and improve the time, we must explicitly store $P$ as in the proof lemma \ref{l:emptiness-to-right}, and use the space inefficient (but time efficient) range query data structures. Thus we obtain:

\begin{theorem}
	Let $G$ be a beer proper interval graph. Fix $\Delta$, then $G$ can be represented using $4n+o(n) + O((n/\Delta) \cdot \log n\log\log n)$ bits and can support $\qadj,\qdeg,\qnb,\qdist$ in $O(1)$ time, $\qsp,\qbsp$ in $O(1)$ time per vertex on the path and $\qbdist$ in $O(\Delta + \log\log n)$ time.
	
	In particular, if we take $\Delta = \log\log n$, then the space is $O(n\log n)$ with time $O(\log\log n)$.
\end{theorem}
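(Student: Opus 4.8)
The plan is to recognize that this theorem is the time-optimized sibling of the theorem immediately preceding it: we keep the entire construction---the distance tree of He et al., the level-order beer bit vector $B$, the selected-level sampling with parameter $\Delta$, the contracted tree $T'$, the predecessor structures, and the per-level beer-count counters---and replace only the two components that dominate the $\qbdist$ running time by their faster, more space-hungry variants. Concretely, I would substitute (i) the explicit-$P$ branch of Lemma \ref{l:emptiness-to-right}, which answers the incomplete-subtree emptiness query in $O(\log\log n)$ time rather than $O(\log n)$, and (ii) the $O(\log\log n)$-time mode of the Chan et al.\ \cite{DBLP:conf/compgeom/ChanLP11} 2D range emptiness structure built on the $O(n/\Delta)$ points coming from $T'$ and the selected nodes.

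For the space, I would account for each term separately. The distance tree contributes $2n + o(n)$ bits and the level-order vector $B$ contributes $n + o(n)$ bits exactly as before. The explicit-$P$ branch of Lemma \ref{l:emptiness-to-right} now forces us to store the post-order beer bit vector $P$ in full, adding $n + o(n)$ bits; these three terms combine to the $4n + o(n)$ leading term. The predecessor structures and the prefix-sum beer counters over the $O(n/\Delta)$ selected nodes still occupy $O((n/\Delta)\log n)$ bits. The sole term that grows is the range emptiness structure on $T'$: feeding its $O(n/\Delta)$ points into the Chan et al.\ bound in the time-efficient mode costs $O((n/\Delta)\log n \log\log n)$ bits (using $\log(n/\Delta) = O(\log n)$), which dominates the $O((n/\Delta)\log n)$ terms and yields the stated space bound.

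For the time, Lemma \ref{l:reduce-to-selected} still places both $u$ and $v$ on selected levels in $O(\Delta)$ time, and each of the type 1, 2, and 3 rectangle decompositions now issues only $O(\log\log n)$-time subqueries: the complete-subtree and selected-node parts go to the faster structure on $T'$, and the incomplete-subtree part to the explicit-$P$ branch of Lemma \ref{l:emptiness-to-right}. Summing gives $\qbdist$ in $O(\Delta + \log\log n)$ time, while the regular queries and $\qbsp$ are untouched and keep their original complexities. Specializing to $\Delta = \log\log n$ makes the sampled term $O((n/\log\log n)\log n \log\log n) = O(n\log n)$, so the overall space is $O(n\log n)$ bits with query time $O(\log\log n)$.

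I do not expect a genuine obstacle here, since all the real difficulty was already discharged in Lemmas \ref{l:reduce-to-selected} and \ref{l:emptiness-to-right} and in the two black-box emptiness results. The only care needed is bookkeeping: verifying that substituting $N = O(n/\Delta)$ into the Chan et al.\ bound genuinely collapses $\log N \log\log N$ to $\log n \log\log n$, and confirming that, once $P$ is stored explicitly, no individual structure quietly exceeds the claimed $4n + o(n) + O((n/\Delta)\log n\log\log n)$ budget.
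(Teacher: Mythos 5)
Your proposal is correct and takes essentially the same approach as the paper: the paper's own (one-sentence) justification of this theorem is precisely to keep the construction of the preceding theorem, store the post-order bit vector $P$ explicitly so that the incomplete-subtree query of Lemma \ref{l:emptiness-to-right} runs in $O(\log\log n)$ time, and switch the 2D range emptiness structures over the $O(n/\Delta)$ contracted-tree and selected-node points to the space-inefficient, $O(\log\log n)$-time variant of Chan et al. Your bookkeeping ($2n+n+n=4n$ leading term, $O((n/\Delta)\log n\log\log n)$ for the range structures, and $O(\Delta+\log\log n)$ query time via Lemma \ref{l:reduce-to-selected}) matches the paper's intended accounting exactly.
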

\section{Beer Paths in Interval Graphs}
\label{s:ds-interval}
In this section, we study how to compute and construct data structures for beer paths and beer distances. We will use the the data structure of Acan et al. \cite{DBLP:journals/algorithmica/AcanCJS21} as a black box, and work with the distance tree $T$ of He et al. \cite{DBLP:conf/isaac/0001MNWW20}. The major difference between interval graphs and proper interval graphs is how adjacency can be checked. In a proper interval graph, for a vertex $v$, and its parent $p$ in the distance tree, $v$ is adjacent to every vertex between $v$ and $p$. However, in interval graphs, this is not the case, and depending on the graph structure, any of those vertices can be adjacent or not adjacent to $v$.

\subsection{Calculating Beer Distance}

As in proper interval graphs, we begin by investigating the conditions in which nodes \emph{preserve} the distance. For a node $w$, we say that $w$ is $+k$ distance if $\qdist(u,v) +k = \qdist(u,w) + \qdist(v,w)$ (and thus preserving the distance is equivalent to being +0 the distance). So, using $w$ as a beer node will add $k$ to the optimal (non-beer path) distance. To do this, we will add one more condition to that of Lemma \ref{l:proper-preserve}. 

Let $u$ be a node in $T$. As shown in the proof of lemma \ref{l:reduce-to-selected}, the node on the next level that splits it in the same way as $u$ is the largest neighbour of $u$. We denote this by $\qlast(u)$. 
For example, in example \ref{e:proper-interval}, the largest neighbour of the node 8, is the node 13, as 8 is adjacent to node 13, but not node 14. Thus $\qlast(8) = 13$.

\begin{lemma}
	\label{l:interval-preserve}
	Let $u < v$ be vertices in a beer interval graph $G$ with depths $\qdepth(u) = k_1 \le k_2=\qdepth(v)$. Consider the nodes $u < w < v$. Then we have the following two criteria:
	\begin{itemize}
		\item If $\qpost(\qlast(u)) < \qpost(v)$, then either $\qpost(w) > \qpost(v)$ or $\qpost(w) < \qpost(\qlast(u))$. If $\qpost(\qlast(u)) > \qpost(v)$, then $\qpost(v)<\qpost(w)<\qpost(\qlast(u))$.
		\item If $\qpost(w) < \qpost(v)$, then $\qpost(\qlast(w)) > \qpost(v)$ or $\qpost(\qlast(w)) < \qpost(w)$. If $\qpost(w) > \qpost(v)$, then $\qpost(v)<\qpost(\qlast(w))<\qpost(w)$.
	\end{itemize}
	If $w$ satisfies both criteria, then $w$ preserves the distance. If $w$ satisfies one criteria, then $w$ is +1 the distance and if $w$ satisfies neither criteria, then $w$ is +2 the distance.
\end{lemma}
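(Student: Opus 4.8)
The plan is to mirror the proof of Lemma~\ref{l:proper-preserve}, replacing the role played by $u$ (and by the ``repeatedly take the parent'' step) with the operation $\qlast$, which is the correct interval-graph analogue of one optimal step to the right. First I would pin down the structural role of $\qlast$. By the proof of Lemma~\ref{l:reduce-to-selected}, $\qlast(x)$ is both the largest neighbour of $x$ and the node that splits the next level the same way $x$ splits its own. Since the ball of radius one around $x$ is an interval of indices, the neighbours of $x$ with index exceeding $x$ are exactly the nodes up to $\qlast(x)$, so a single optimal step to the right from $x$ reaches precisely this range. This is the interval-graph replacement for the chain $v_{k_1},\dots,v_{k_2}$ of the proper case, and I would combine it with the two facts already used there: that post-order is increasing with index within a level, and that the parent operation preserves the post-order ordering of two nodes.

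With this in hand I would re-run the level-by-level analysis of Lemma~\ref{l:proper-preserve}, but now tracking, for each candidate position of $w$, two separate obstructions to $w$ lying on a shortest $u$--$v$ path. The first obstruction is whether $w$ falls in the region that a shortest path leaving $u$ can sweep through; since the optimal first rightward step from $u$ goes to $\qlast(u)$ rather than to a child of $u$, the relevant comparison is the one of Lemma~\ref{l:proper-preserve} but with $\qlast(u)$ substituted for $u$ throughout --- this is exactly the first criterion of the statement. The second obstruction is new to interval graphs: because here a node need not be adjacent to everything up to its parent, after reaching $w$ one may be forced to spend an extra step before one can continue toward $v$, and whether this happens is decided by the position of $\qpost(\qlast(w))$ relative to $\qpost(w)$ and $\qpost(v)$, which is exactly the second criterion. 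I would verify the precise form of each test by the same kind of chain-expansion argument used in Lemma~\ref{l:proper-preserve}, expanding the paths from $w$ to $u$ and from $w$ to $v$ via iterated $\qlast$ and comparing post-orders at the splitting level.

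The final step is to show that the two tests are \emph{independent and additive}: that $\qdist(u,w)+\qdist(w,v)-\qdist(u,v)$ equals the number of the two criteria that fail, yielding the claimed $+0$, $+1$, or $+2$ behaviour. For the upper bounds I would, as in Lemma~\ref{l:proper-preserve}, explicitly build a path through $w$ out of the $\qlast$-chains from $u$ and from $w$ and count its length in each case, while the triangle inequality $\qdist(u,w)+\qdist(w,v)\ge\qdist(u,v)$ supplies the matching lower bound and in particular forces the excess to be nonnegative. I expect the main obstacle to be precisely this additivity. Unlike in the proper case the two corrections are not cleanly localised to the two legs --- a failure of the first criterion may in effect be ``paid for'' on the $w$--$v$ leg --- and $\qlast(x)$ may land on the same level as $x$ rather than one level deeper, so the case analysis interacts with the depth bookkeeping and must be handled uniformly. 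Care is also needed at the degenerate boundaries, namely when $w$ is adjacent to $v$ (so that $\qlast(w)=v$), when $u$ and $v$ share a level and are therefore adjacent, and when $x$ has no rightward neighbour one level down, where the strict post-order inequalities in the two criteria must be checked to behave correctly.
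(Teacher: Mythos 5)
Your proposal is correct and follows essentially the same route as the paper's proof: the paper likewise decomposes the excess into the two legs, showing by a level-by-level slice analysis (splitting each slice $V_i$ into a near and a far distance class relative to $u$) that criterion 1 --- Lemma~\ref{l:proper-preserve}'s test with $\qlast(u)$ substituted for $u$ --- exactly determines $\qdist(u,w)$, and that criterion 2 is precisely the computable reformulation of ``the interval $(w,\qlast(w)]$ contains a node of the path from $v$ to the root,'' which exactly determines $\qdist(w,v)$, after which additivity is automatic. One caveat: the exactness in the $+1$ and $+2$ cases must come from this per-leg chain/post-order analysis (which you do plan), not from the triangle inequality, since that only forces the excess to be nonnegative and so cannot by itself rule out that a failed criterion is ``paid for'' elsewhere.
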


\begin{proof} 
	Consider the path to the root from $v$: $v_1,\ldots , v_{k_2}=v$, and let $k$ be the node where $v_k \le \qlast(u) < v_{k+1}$.
	As in Lemma \ref{l:proper-preserve}, let the nodes $V_i = \{u < w < v; v_i \le w < v_{i+1}\}$, which we will call slices.
	
	We wish to split $V_i$ based on the nodes distances to $u$. We will show that $V_i = V_i^+ \cup V_i^-$ where $w \in V_i^+$ if $\qdist(w,u) = \qdist(v_i,u)$ and $w \in V_i^-$ if $\qdist(w,u) = \qdist(v_{i+1},u)$.
	
	By the distance algorithm, suppose that $x \in V_i^+$, then any children of $x$, $c$, is in $V_{i+1}^+$. We can see this as $\qdist(c,u) = \qdist(x,u) + 1 = \qdist(v_i,u) + 1 = \qdist(v_{i+1}, u)$.
	
	Suppose that $\qdepth(\qlast(u)) = k$, so that $\qlast(u)$ is on the same level as $v_k$. Then $V_i^+ = \{w \in V_i; \qdepth(w) = i, \qpost(w) \le \qpost(\qlast(u))\}$ and $V_i^-$ is the remaining nodes. Otherwise, if $\qlast(u)$ is on the same level a $v_{k+1}$, then $V_i^+ = \{w \in V_i; \qdepth(w) = i \text{ or } \qpost(w) < \qpost(\qlast(u))\}$.
	
	We show this by induction. Consider the first slice that is non-empty, that is $V_{k}$. In the first case that $\qdepth(\qlast(u)) = k$, the nodes in $ \{w \in V_k; \qdepth(w) = k, \qpost(w) \le \qpost(\qlast(u))\}$, are those adjacent to $u$, and everything else is non-adjacent, but are adjacent to $v_k$, hence all other nodes have a distance of 2.
	
	On the other hand, if $\qdepth(\qlast(u)) = k+1$, then the nodes $ \{w \in V_k; \qdepth(w) = k \text{ or } \qpost(w) < \qpost(\qlast(u))\}$ are those that are adjacent to $u$ and have distance 1, and the rest have distance 2.
	
	Since if a node $x$ satisfies $\qpost(x) \le \qpost(\qlast(u))$ if and only if any children of $x$, $c$ also satisfies $\qpost(c) \le \qpost(\qlast(u))$, we have that $V_i^+ = \{w \in V_i; \qdepth(w) = i, \qpost(w) \le \qpost(\qlast(u))\}$ or $V_i^+ = \{w \in V_i; \qdepth(w) = i \text{ or } \qpost(w) < \qpost(\qlast(u))\}$.
	
	For a node $w$, consider the interval $(w,\qlast(w)]$. We are interested in whether this interval contains one of the nodes $v_i$. We use the open interval on the left because if $w = v_k$ for some $k$, then $\qlast(w) \ge v_{k+1}$, so it still contains one of the $v_i$'s, and by doing so, we make sure that exactly one $v_i$ can be contained in the interval.
	
	First suppose that $(w,\qlast(w)]$ contains one such node, say $v_{k-1}$. Then $w \in V_k$ and $w$ is adjacent to $v_{k+1}$.
	Thus $\qdist(v_k,v) = \qdist(w,v)$ by the distance algorithm. Furthermore, if $w \in V_k^+$, then $\qdist(u,w) = \qdist(v_k,u)$, so that $\qdist(u,v) = \qdist(u,w)+\qdist(v,w)$.
	On the other hand, if $w \in V_k^-$ instead, then $\qdist(u,v) + 1 = \qdist(u,w)+\qdist(v,w)$.
	
	Next suppose that $(w,\qlast(w))$ does not contain any $v_i$, and thus, for some $k$, $v_k < w < v_{k+1}$. Since $w$ is not adjacent to $v_{k+1}$, then we have $\qdist(w,v) = \qdist(v_k,v)+1$. Therefore, if $w \in V_k^+$, then $\qdist(u,v)+1 = \qdist(u,w) + \qdist(v,w)$ and if $w \in V_k^-$, then $\qdist(u,v)+2 = \qdist(u,w) + \qdist(v,w)$.
	
	Finally we wish to write the criteria that the interval $(w,\qlast(w)]$ contains one of the $v_i$ in a more computable form. In the first case that $\qpost(w) < \qpost(v)$, so that $w$ is to the left of the path, we need that either $\qpost(\qlast(w)) > \qpost(v)$ or $\qpost(\qlast(w)) < \qpost(w)$. The first case capture when the interval stays on the same level in the tree, and the second captures when the interval wraps to the next. In the second case that $\qpost(w) \ge \qpost(v)$, we need that $\qpost(v) < \qpost(\qlast(w)) < \qpost(w)$.
\end{proof}

Again, we will find several sets that cover the beer nodes, and argue about the optimal beer node in these sets. We then take the minimum distance of these candidates. As before, we will assume that neither $u$ nor $v$ are beer nodes.

\textbf{Candidate 1}: This is the same as the proper interval graphs: $\{w\in B; w > v\} = \{w \in B; l_w > l_v\}$. We again claim that the best beer node is the smallest one. It turns out the exact same proof of Lemma \ref{l:sort-distance-proper} will work here.

\textbf{Candidate 2}: We wish to use the symmetric set of Candidate 1. Unfortunately, in interval graphs, this is not as simple. The set is $\{w \in B; r_w < r_u\}$. We note that this condition and the proper interval graph non-nesting condition gives $l_w < l_u$ so that $w < u$, and thus this is the right analogous set to consider.
We claim that the best node is the node with the largest $r_w$ in this set. This can be seen by reflecting the intervals of the vertices - equivalent to sorting them by the right endpoints instead. In this reflected graph, apply Lemma \ref{l:sort-distance-proper}, and the result follows.

\textbf{Candidate 3}: The nodes $\{w\in B; u < w < v\}$. By Lemma \ref{l:interval-preserve}, we can obtain the distance of the best beer node using the criteria in the lemma.

\textbf{Candidate 4}: The left over nodes. The nodes that do not belong to the previous candidate sets are $w$ with: $l_w < l_v$ and $r_w > r_u$ and $l_w < l_u$. Thus these are the nodes with $l_w < l_u < r_w$, so they are adjacent to $u$. Formally, this is the set: $\{w \in B; w < u, r_w > r_u\}$. As these nodes are adjacent to $u$, all we need to check is their distance to $v$.

First consider the path from $u$ to $v$. By the distance algorithm, we have the path to the root from $v$: $v_{1},\ldots,v_{k_2} = v$, And suppose that $k$ is the index such that $v_k \le u < v_{k+1}$. The end of the path could look like either $u, v_{k+1}, \ldots$ or $u,v_k, v_{k+1},\ldots$, depending on whether $v_{k+1}$ is adjacent to $u$ or not.

First assume that $v_{k+1}$ is not adjacent to $u$, then for any possible candidate $w$, if $w$ were adjacent to $v_{k+1}$ (that is $\qlast(w) \ge v_{k+1}$), then $w$ preserves the distance (as $\qdist(w,v) = \qdist(u,v)-1$). Otherwise, as $r_w > r_u > l_{v_k}$, $w$ is adjacent to $v_k$ and hence $\qdist(w,v) = \qdist(u,v)$ and $w$ is +1 the distance.

Next assume that $v_{k+1}$ is adjacent to $u$. Then again for any candidate $w$, $l_w < l_u < l_{v_k+1} < r_u < r_w$, so $w$ is adjacent to $v_{k+1}$ and $w$ is $+1$ the distance. We note that $w$ cannot be adjacent to $v_{k+2}$ as in this case we would contradict that fact that $v_{k+1}$ is the smallest node adjacent to $v_{k+2}$, by the parent relationship in $T$.

Finally we note that the only property of $w$ that we used is that $w$ is adjacent to $u$, and that if $x > u$ is adjacent to $u$, then $w$ is also adjacent to $x$ and hence we may relax the set to $\{w \in B; w < u,\qlast(w) \ge \qlast(u)\}$.

\subsection{Data Structure for Beer Distance}

We discuss how to use the previous results to create a data structure for the queries. We use the data structure of He et al. \cite{DBLP:conf/isaac/0001MNWW20} which supports the interval graph queries in optimal time. This uses $n\log n + O(n)$ bits of space. We note that this data structure itself builds upon the data structure of Acan et al. \cite{DBLP:journals/algorithmica/AcanCJS21}.
We store a bit vector $B$ as before, which stores which nodes are beer nodes in level-order. This take $n + o(n)$ bits.

\textbf{Candidate 1}, we handle this in exactly the same way as in proper interval graphs.

\textbf{Candidate 2}, we need to be able to find nodes in the mirrored graph.
\begin{lemma}
\label{l:cand2-interval}
    We can find the desired node in the mirrored graph using $n\log n + O(n)$ bits in $O(1)$ time.
\end{lemma}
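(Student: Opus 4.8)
The plan is to reduce Candidate 2 in $G$ to the Candidate 1 computation, but carried out in the \emph{mirrored} interval graph $G^R$ obtained by reflecting every interval $[l_v,r_v]$ to $[-r_v,-l_v]$. Reflection does not change which pairs of intervals intersect, so $G^R$ is an interval graph isomorphic (hence isometric) to $G$; in particular a shortest path in $G^R$ and its length transfer verbatim to $G$. The only thing that changes is the vertex numbering: the natural order of $G^R$ (by its left endpoints $-r_v$) is exactly the order of $G$ sorted by right endpoint in decreasing order. Under this relabelling the set $\{w\in B:\ r_w<r_u\}$ is precisely the set of vertices that follow $u$ in $G^R$-order, and by the reflected form of Lemma~\ref{l:sort-distance-proper} (the same one invoked for Candidate 1) the best such beer node is the one of smallest $G^R$-index, i.e. the first beer vertex after $u$ in $G^R$-order. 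Thus, once we have a representation of $G^R$, Candidate 2 is solved by the very Candidate 1 routine we already run on $G$.

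Concretely, I would store a second copy of the data structure of He et al.~\cite{DBLP:conf/isaac/0001MNWW20} built on $G^R$; this occupies $n\log n+O(n)$ bits and supplies the distance tree $T^R$ of $G^R$ together with its $O(1)$-time navigation. Alongside it I keep a bit vector $B^R$ marking the beer vertices in $G^R$-order, which folds into the $O(n)$ term and supports $\qrank$ and $\qselect$ in $O(1)$. Then, given a query $(u,v)$ in the original (left-endpoint) labels, I would: (i) translate $u$ and $v$ to their $G^R$-indices; (ii) locate $w$ as the first beer vertex after $u$ in $G^R$ using a single $\qrank$ followed by one $\qselect$ on $B^R$, obtaining its $G^R$-index; and (iii) read off $\qdist(w,v)$ directly in $T^R$, which equals the Candidate 2 distance in $G$ by isometry. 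Each step is $O(1)$, so the query time is $O(1)$ and the added space is $n\log n+O(n)$ bits as claimed; for \qbsp\ the shortest path produced inside $T^R$ is translated back to $G$ vertex by vertex using the same endpoint-rank correspondence.

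The step I expect to be the crux is the label translation in (i): the query arrives numbered by left endpoint, whereas $G^R$ and $B^R$ are numbered by right endpoint, and a naive stored permutation between the two orders would itself cost $n\log n$ bits and threaten to double the budget. I would avoid this by exploiting that the underlying representation of Acan et al.~\cite{DBLP:journals/algorithmica/AcanCJS21} already encodes the sorted sequence of all $2n$ endpoints; equipping that sequence with $\qrank$/$\qselect$ recovers, for any vertex, the rank of its right endpoint among all right endpoints in $O(1)$ time and only $o(n)$ extra bits, and this rank is exactly the $G^R$-index. Hence the forward translation in (i) is essentially free, and step (ii) never needs to translate back because the distance in (iii) is computed entirely inside $T^R$. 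The remaining care is in the boundary conventions---whether $u$ itself is included and how ties among right endpoints are broken---and in confirming that the reflected Lemma~\ref{l:sort-distance-proper} indeed certifies optimality of the smallest-index candidate, exactly mirroring the argument already given for Candidate 1.
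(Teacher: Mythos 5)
Your reduction is correct and, at its mathematical core, the same as the paper's: both mirror the intervals, observe that the Candidate~2 set $\{w\in B:\ r_w<r_u\}$ becomes exactly the suffix of vertices after $u$ in mirrored order, invoke the reflected Lemma~\ref{l:sort-distance-proper} to certify that the first beer vertex in that suffix is optimal, store a beer bit vector in mirrored order with $\qrank$/$\qselect$, and perform the forward index translation via the $2n$-bit endpoint-type vector of Acan et al.~\cite{DBLP:journals/algorithmica/AcanCJS21}. Where you genuinely diverge is in how the $n\log n$ bits are spent. The paper stores an explicit inverse permutation mapping each mirrored index back to its original vertex, translates the beer node back, and evaluates $\qdist(u,w)+\qdist(w,v)$ in the single original distance structure. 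You instead store a second full copy of the distance structure of He et al.~\cite{DBLP:conf/isaac/0001MNWW20} built on $G^R$ and evaluate the distances entirely inside $T^R$ (note you need $\qdist(u,w)$ as well as $\qdist(w,v)$, but both $u$ and $w$ are available in $T^R$ after your step (i), so this is immediate). Both realizations meet the stated $n\log n+O(n)$ bits and $O(1)$ time; the paper's map makes path assembly trivial, since the beer node is handed back in original coordinates and two $\qsp$ calls finish the job, while your version keeps distance queries entirely in the mirrored world and only needs a back-translation when reporting a path for $\qbsp$.

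That back-translation is the one place where your argument is too vague, and the naive reading fails: inverting "the same endpoint-rank correspondence" literally would take a mirrored index to the position of the corresponding \emph{right} endpoint (one $\qselect$ on the endpoint-type vector), but recovering the original vertex index from a right-endpoint position is not a rank/select operation, because in a general interval graph the map $v\mapsto r_v$ is not monotone in $v$ (this is precisely the non-proper case this lemma must handle, and precisely why the paper pays $n\log n$ bits for an explicit map). The fix lives inside the structure you already store: the right endpoints of $G^R$ are the reflected \emph{left} endpoints of $G$, so from your second copy you can read in $O(1)$ time the $G$-position $p$ of the original left endpoint of any vertex on the reported path (reflect the $G^R$-position $\rho$ to $2n+1-\rho$), and then the original index is the number of left endpoints at positions at most $p$, i.e., one $\qrank$ on the endpoint-type bit vector. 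With that step made explicit, your proof is complete and matches the claimed bounds.
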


\begin{proof}
To find the appropriate node, we need to be able to implement the following steps:
\begin{itemize}
\item For a node $u$, what is its index in the mirrored graph?
\item For a node $u$ in the mirrored graph, what is the smallest beer node larger than it?
\item For a node $u$ in the mirrored graph, what is its index in the original graph?
\end{itemize}

For point 1, for a node $u$, we get its interval right endpoint $r_u$. In the data structure of Acan et al., we have a length $2n$ bit vector, which stores whether the endpoint at $i$ is a right endpoint or a left endpoint. We use the rank operation to find how many intervals $i$ have right endpoints less than $r_u$, and thus $u$ is the $n-i$th node in the mirrored graph.

For point 2, we store the analogous bitvector $B_R$ for the mirrored graph, which says whether vertex $i$ in the mirrored graph is a beer node, and use it to find the appropriate beer node. This takes $n+ o(n)$ bits.

For point 3, we store a map mapping vertex $i$ in the mirrored graph to the corresponding node in the original graph. This takes $n\log n$ bits.

In total, this takes $n\log n + O(n)$ bits, and all of these operations are $O(1)$ time.
\end{proof}

\textbf{Candidate 3} We are able to handle this using 3D 5-sided orthogonal range emptiness data structures.

\begin{lemma}
\label{l:cand3-interval}
    We can check if a beer node satisfies the criteria of lemma \ref{l:interval-preserve} using a constant number of 3D 5-sided orthogonal range emptiness data structures. Thus the space/time requirements are either $O(n\log n)$ space and $\log^\varepsilon n$ time or $O(n\log n\log\log n)$ space and $\log\log n$ time.
\end{lemma}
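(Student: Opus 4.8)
The plan is to store, for each beer node $w$, the three-dimensional point $(w,\qpost(w),\qpost(\qlast(w)))$ and to reduce the two criteria of Lemma~\ref{l:interval-preserve} to a constant number of box emptiness queries on these points. The level-order number $w$ in the first coordinate carries the constraint $u<w<v$, while the two criteria become conditions on the remaining coordinates $\qpost(w)$ and $\qpost(\qlast(w))$, compared against the query-fixed values $\qpost(v)$ and $\qpost(\qlast(u))$.

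First I would note that, with $u$ and $v$ fixed, Criterion~1 is purely a condition on $\qpost(w)$: depending on the sign of $\qpost(\qlast(u))-\qpost(v)$ it is either the union of the two rays $\qpost(w)>\qpost(v)$ and $\qpost(w)<\qpost(\qlast(u))$, or the single interval $\qpost(v)<\qpost(w)<\qpost(\qlast(u))$. Each is a union of at most two ranges on the second coordinate, contributing at most two boxes. Criterion~2 additionally involves $\qpost(\qlast(w))$, and here the obstacle appears: the clause $\qpost(\qlast(w))<\qpost(w)$ is a \emph{diagonal} comparison between the second and third coordinates, which is not an axis-aligned rectangle and so cannot be handled directly by orthogonal range search.

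The key idea is to remove this diagonal comparison by partitioning the beer nodes into two groups according to the sign of $\qpost(\qlast(w))-\qpost(w)$, storing a separate range structure for each group. This is legitimate because $\qlast(w)$ is a fixed function of $w$, so each beer node lies in exactly one group. In the group where $\qpost(\qlast(w))<\qpost(w)$ the diagonal clause is identically true, and in the other group identically false; substituting these constants into Criterion~2 leaves only axis-aligned comparisons of $\qpost(w)$ and $\qpost(\qlast(w))$ against $\qpost(v)$. After a further case split on whether $\qpost(w)<\qpost(v)$ or $\qpost(w)>\qpost(v)$ and expansion of the remaining disjunctions, every resulting query is an axis-aligned box. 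A short case check bounds the number of finite sides: two on the first coordinate for $u<w<v$, and at most three among the post-order constraints (for instance the group-II, $\qpost(w)>\qpost(v)$, $\qpost(\qlast(u))<\qpost(v)$ case yields $\qpost(v)<\qpost(w)<\qpost(\qlast(u))$ together with $\qpost(v)<\qpost(\qlast(w))$). Hence 3D 5-sided emptiness queries suffice.

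To answer the candidate I would first test for a beer node satisfying \emph{both} criteria (giving distance $\qdist(u,v)$) by running the boxes for $C_1\wedge C_2$; failing that, test $C_1\vee C_2$ by running the $C_1$-boxes and the $C_2$-boxes separately (distance $\qdist(u,v)+1$, since once no $+0$ node exists any node satisfying one criterion satisfies exactly one); and failing that, a single box on the first coordinate detects whether any beer node lies in $(u,v)$ at all (distance $\qdist(u,v)+2$). This is a constant number of emptiness queries over the two 3D structures, so by Lemma~\ref{t:3d-report} with $k=0$ the candidate costs $O(\log^\varepsilon n)$ time with $O(n\log n)$ bits, or $O(\log\log n)$ time with $O(n\log n\log\log n)$ bits. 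The main obstacle throughout is the diagonal clause in Criterion~2, which the group partition eliminates at the cost of only a constant factor in the number of structures and queries.
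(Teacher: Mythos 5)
Your proposal is correct and follows essentially the same route as the paper: the same point set $(w,\qpost(w),\qpost(\qlast(w)))$, the same partition of beer nodes into two structures according to the sign of $\qpost(\qlast(w))-\qpost(w)$ (the paper's $R_1$ and $R_2$) to eliminate the diagonal comparison, the same observation that all resulting intersections are at most 5-sided, and the same query complexities via Nekrich's structure (your only deviation, testing $C_1\wedge C_2$ before the separate criteria rather than after, is immaterial). One trivial slip: in your illustrative case the hypothesis should read $\qpost(\qlast(u))>\qpost(v)$, since otherwise the stated box $\qpost(v)<\qpost(w)<\qpost(\qlast(u))$ is empty.
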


\begin{proof}
We will again use the range emptiness method described for proper interval graphs.
However, as we need an additional criteria, we also need another dimension in the grid to store this criteria.
Thus for each beer node $b$, we store the following point in a 3D table: $(b,\qpost(b), \qpost(\qlast(b)))$.

We note that it is difficult to express the condition $\qpost(\qlast(b)) < \qpost(b)$, as it depends on the values of the node being filtered. Furthermore in the second point, we would need 6-sided rectangles. To alleviate this, we create 2 tables: 1 for beer nodes whose intervals stay on the same level (that is $\qpost(\qlast(b)) > \qpost(b))$) and one for those that wrap to the next level. We will names these $R_1$ and $R_2$.

To do this, we will first query the criteria separately, and then query them together. If either of the criteria returns a positive, then we know that the optimal beer node is either +0 or +1 the distance, depending on the result of the joint query. If neither the criteria return a candidate, then the optimal node is +2 the distance(we will need to check using $B$ that there is a beer node in this range to use).

For the first criteria, directly translating the condition from Lemma \ref{l:interval-preserve}, we obtain the following rectangles.

\begin{itemize}
\item $(u,v)\times (-\infty,\qpost(\qlast(u))) \times [-\infty,\infty]$,

\item $(u,v)\times (\qpost(v),\infty) \times [-\infty,\infty]$, and

\item $(u,v)\times (\qpost(v),\qpost(\qlast(u))) \times [-\infty,\infty]$.
\end{itemize}

As we have split the beer nodes into two data structures $R_1$ and $R_2$, we need to do the query on both.

To check only the second criteria, we have the following rectangles:
Does $(u,v)\times [-\infty,\qpost(v)]\times [\qpost(v),\infty]$ in $R_1$ or  $(u,v)\times [-\infty,\qpost(v)]\times [-\infty,\infty]$ in $R_2$ contain any nodes?

Does $(u,v)\times [\qpost(v),\infty]\times [\qpost(v),\infty]$ in $R_2$ contain any nodes?

Finally to check both criteria at the same time, we take the intersection of the rectangles from the two separate criteria. As the intersection of rectangles are rectangles, with potentially more sides, we may do this. Since the third coordinate is always $[-\infty,\infty]$ in criteria 1, and it is open ended on at least 1 side in criteria 2, we see that any intersection is at most a 5 sided rectangle.
\end{proof}

\textbf{Candidate 4} We will use the following lemma:
\begin{lemma}
    \label{l:cand4-interval}
    We can convert the criteria of candidate 4 into a constant number of 5-sided rectangles.
\end{lemma}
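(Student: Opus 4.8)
The plan is to reuse, without building anything new, the two 3D range-emptiness tables $R_1$ and $R_2$ already constructed for Candidate~3 in Lemma~\ref{l:cand3-interval}, which store the point $(b,\qpost(b),\qpost(\qlast(b)))$ for every beer node $b$ ($R_1$ holding the nodes whose $\qlast$-interval stays on the same level, $\qpost(\qlast(b))>\qpost(b)$, and $R_2$ holding those that wrap to a deeper level). Recall from the Candidate~4 analysis that the set in question is $\{w\in B: w<u,\ \qlast(w)\ge\qlast(u)\}$, each member of which is adjacent to $u$, and that such a $w$ contributes either $+0$ or $+1$ to the distance according to whether it is also adjacent to $v_{k+1}$, where $v_k\le u<v_{k+1}$ is the ancestor chain of $v$ from the distance algorithm.

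First I would recover $v_{k+1}$ by a level-ancestor step on the $v$-chain and, using the black-box $\qadj$ query, branch on the two cases of the analysis. If $v_{k+1}$ is adjacent to $u$, every candidate is $+1$, so it suffices to test whether the candidate set is nonempty; if $v_{k+1}$ is not adjacent to $u$, I would first test for a candidate adjacent to $v_{k+1}$ (which witnesses a $+0$ beer node) and, failing that, fall back to the nonemptiness test for the $+1$ answer. The lemma therefore reduces to two emptiness queries over $\{w<u\}$, one thresholded by ``$\qlast(w)\ge\qlast(u)$'' and one by ``$\qlast(w)\ge v_{k+1}$''.

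The heart of the proof is phrasing each of these as $O(1)$ boxes over $R_1$ and $R_2$. The constraint $w<u$ is a single finite side on the first, level-order coordinate. The remaining constraint has the shape ``$\qlast(w)\ge c$ in level order'' for a fixed node $c\in\{\qlast(u),v_{k+1}\}$; since $w<u\le c$, this says exactly that the level-order interval $(w,\qlast(w)]$ contains $c$, i.e.\ that $w$ is adjacent to $c$ --- precisely the predicate that the proof of Lemma~\ref{l:interval-preserve} converts into post-order form. Splitting on whether $\qlast(w)$ keeps to $w$'s level or wraps, I expect this to become a clean post-order threshold $\qpost(\qlast(w))\ge\qpost(c)$ on the $R_1$ points and an (essentially automatic) one-sided condition on the $R_2$ points, so that after intersecting with $w<u$ each query is a constant number of boxes whose finite sides number well under five. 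The hard part will not be reaching the five-side budget but the bookkeeping of the conversion itself: the candidate constraints are stated in level order while the tables index $\qlast$ in post order, so I must check that ``adjacent to $c$'' lands as a correct box on each of $R_1$ and $R_2$, that the $R_1/R_2$ split leaves no candidate uncounted, and that both the $+0$ and $+1$ outcomes are faithfully detected.
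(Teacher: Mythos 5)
Your high-level plan coincides with the paper's: reuse the tables $R_1,R_2$ of Lemma~\ref{l:cand3-interval}, branch on whether $v_{k+1}$ is adjacent to $u$ (your use of $v_{k+1}$ matches the paper's Candidate~4 analysis; the paper's own proof writes $v_k$ there, apparently a typo), and reduce everything to two emptiness tests, one for $\{w\in B:\, w<u,\ \qlast(w)\ge\qlast(u)\}$ and one for $\{w\in B:\, w<u,\ \qlast(w)\ge v_{k+1}\}$. The gap lies precisely in the part you defer as ``bookkeeping'': the boxes you sketch are wrong, and no post-order bookkeeping can fix them without an idea your proposal lacks. Writing $c$ for $\qlast(u)$ (or $v_{k+1}$) and $p(c)$ for its parent, consider a beer node $w<p(c)$ on level $\qdepth(c)-1$ that is adjacent to $p(c)$ but whose largest neighbour $\qlast(w)$ stays on $w$'s own level, so $w\in R_1$. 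Then $\qlast(w)<c$ in level order, so $w$ is \emph{not} adjacent to $c$; yet $\qpost(w)<\qpost(c)$ and $\qpost(\qlast(w))\ge\qpost(p(c))>\qpost(c)$, so your $R_1$ box ($w<u$ and $\qpost(\qlast(w))\ge\qpost(c)$) accepts it, and your ``essentially automatic'' $R_2$ box fails similarly. Such a $w$ has exactly the same relationship of $(\qpost(w),\qpost(\qlast(w)))$ to $\qpost(c)$ as a genuine candidate on level $\qdepth(c)$, so no box whose level-order side is only $(-\infty,u)$ can separate them. A concrete failure: intervals $[1,4],[2,5],[3,7],[6,8],[9,10]$, beer node $2$, query $\qbdist(3,5)$. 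Here $\qlast(3)=4$, the Candidate~4 set is empty and the true beer distance is $4$, but your box reports node $2$ (since $\qpost(\qlast(2))=\qpost(3)>\qpost(4)$), yielding the wrong answer $3$.

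The missing idea --- which is essentially the whole content of the paper's proof --- is that every true candidate is adjacent to $c$ and therefore satisfies $p(c)\le w$, so the level-order coordinate can, and must, be the two-sided interval $[p(c),u)$. Inside that window, nodes on level $\qdepth(c)$ have $\qpost(w)<\qpost(c)$ while nodes on level $\qdepth(c)-1$ have $\qpost(w)>\qpost(c)$, and the correct translation is three boxes, none of which is what you wrote: $[p(c),u)\times(-\infty,\qpost(c)]\times[\qpost(c),\infty)$ on $R_1$; $[p(c),u)\times(-\infty,\qpost(c)]\times(-\infty,\infty)$ on $R_2$; and $[p(c),u)\times[\qpost(c),\infty)\times[\qpost(c),\infty)$ on $R_2$ --- the last showing that wrapping candidates one level above $c$ genuinely need a threshold on $\qpost(\qlast(w))$, contrary to your ``automatic'' claim. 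Note also that the predicate here, ``$(w,\qlast(w)]$ contains the one fixed node $c$'', is not the predicate of Lemma~\ref{l:interval-preserve}, which asks for containment of \emph{some} ancestor of $v$; that difference is exactly why the domain restriction $w\ge p(c)$ is indispensable here even though nothing analogous appears there.
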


\begin{proof}
The nodes we are interested in are those with $w < u$ and $\qlast(w) \ge \qlast(u)$. Let $p(u)$ denote the parent of $u$ in $T$. All such $w$ are adjacent to $\qlast(u)$ and thus $p(\qlast(u)) \le w < u$.  First we find whether this Candidate set is empty or not. As we are checking the condition $\qlast(u) \in [w,\qlast(w)]$, this will be similar to the second criteria of candidate 3. The rectangles are:

\begin{itemize}
\item $[p(\qlast(u)),u)\times [-\infty,\qpost(\qlast(u))]\times [\qpost(\qlast(u)),\infty]$ in $R_1$,

\item $[p(\qlast(u)),u)\times [-\infty,\qpost(\qlast(u))]\times [-\infty,\infty]$ in $R_2$,

\item $[p(\qlast(u)),u)\times [\qpost(\qlast(u)),\infty]\times [\qpost(\qlast(u)),\infty]$ in $R_2$.
\end{itemize}

As described in the previous part, there are two cases, either $v_k$ is adjacent to $u$ or $v_k$ is not adjacent to $u$ (we check this in $O(1)$ time from the distance algorithm).

In the case that $v_k$ is not adjacent to $u$. We wish to find a node $w < u$ such that $\qlast(w) > v_k$ (so that $v_k \in [w,\qlast(w)]$).
Any such $w$ is adjacent to $v_k$ and must satisfy $p(v_k) \le w < v_k$. Thus we replace all instances of $\qlast(u)$ with $v_k$ in the rectangles above.
If a node exists then it preserves the distance, and if no such node is found, then the best possible is $+1$ the distance.

In the case that $v_k$ is adjacent to $u$, we do not need to do anything more, since any $w$ is $+1$ the distance.
\end{proof}

Finally, to handle shortest paths, we use the reporting query rather than the emptiness query. When the reporting query returns the first point, we stop. After we find the best beer node, we list out the path using two $\qsp$ queries.
\begin{theorem}
	Let $G$ be a beer interval graph, with beer nodes $B$. The there exists a data structures using $2n\log n + O(n) + O(|B|\log n)$ bits that supports $\qdeg,\qadj,\qdist$ in $O(1)$ time, $\qnb,\qsp$ in $O(1)$ time per vertex in the path/neighbourhood, $\qbdist$ in $O(\log^\epsilon n)$ time and $\qbsp$ in $O(\log^\epsilon n + d)$ time where $d$ is the distance between the two vertices.
	
	Alternatively, we may increase the space from $O(|B|\log n)$ to $O(|B|\log n\log\log n)$ and replace the $\log^\epsilon n$ in $\qbsp,\qbdist$ with $\log\log n$.
\end{theorem}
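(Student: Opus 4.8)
The plan is to assemble the four-candidate analysis of the preceding subsection into a single data structure, layering the beer-path machinery on top of the distance tree $T$ of He et al.~\cite{DBLP:conf/isaac/0001MNWW20}. I would first store that structure verbatim, which costs $n\log n + O(n)$ bits and already delivers $\qadj,\qdeg,\qdist$ in $O(1)$ time, $\qnb,\qsp$ in $O(1)$ time per output vertex, together with all the tree navigation ($\qdepth$, $\qpost$, parent, level-ancestor, and $\qlast$) that the candidate criteria invoke. On top of this I would add: a level-order bit vector $B$ marking the beer nodes ($n+o(n)$ bits); the mirrored-graph index map and beer bit vector of Lemma~\ref{l:cand2-interval} (a further $n\log n + O(n)$ bits); and the two 3D tables $R_1,R_2$ of Lemma~\ref{l:cand3-interval}, storing the point $(b,\qpost(b),\qpost(\qlast(b)))$ for each beer node $b$, split according to whether its interval wraps to the next tree level. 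Since each of the $|B|$ beer nodes contributes one $O(\log n)$-bit point, these tables occupy $O(|B|\log n)$ bits. Summing gives exactly $2n\log n + O(n) + O(|B|\log n)$ bits, the two $n\log n$ terms coming from the base tree and from the mirrored-graph map.

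To answer $\qbdist(u,v)$, after the $O(1)$ check that $u,v\notin B$, I would evaluate the four candidate sets and return the minimum. Candidate~1 ($w>v$) is found by $\qselect(\qrank(v)+1)$ on $B$ and Candidate~2 by Lemma~\ref{l:cand2-interval}, both in $O(1)$ time, with their distances justified by Lemma~\ref{l:sort-distance-proper} and its mirrored form. Candidates~3 and~4 reduce, by Lemmas~\ref{l:cand3-interval} and~\ref{l:cand4-interval}, to a constant number of $5$-sided emptiness queries on $R_1,R_2$; following Lemma~\ref{l:interval-preserve} I would first test the two preserve-distance criteria separately and then jointly, so that the three outcomes pin down whether the best beer node in each set is $+0$, $+1$, or $+2$ over $\qdist(u,v)$. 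Every set is covered, since a beer node lies in exactly one of $w>v$, $u<w<v$, $w<u$ with $r_w<r_u$, or $w<u$ with $r_w>r_u$, so the minimum over the four candidate distances is the true beer distance. Each query costs $O(\log^\epsilon n)$ by Lemma~\ref{t:3d-report}, giving the claimed $\qbdist$ time.

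For $\qbsp(u,v)$ I would switch the Candidate~3 and~4 lookups from emptiness to the reporting variant of Lemma~\ref{t:3d-report}, stopping at the first returned point; this identifies an actual optimal beer node $b$ in $O(\log^\epsilon n)$ time, after which two $\qsp$ queries ($b$ to $u$ and $b$ to $v$) list the path in $O(d)$ additional time, for $O(\log^\epsilon n + d)$ overall. The alternative bound follows by instantiating the slower but larger configuration of Lemma~\ref{t:3d-report} for $R_1,R_2$, which replaces $\log^\epsilon n$ by $\log\log n$ at the cost of inflating the $O(|B|\log n)$ term to $O(|B|\log n\log\log n)$. The main thing to get right is the bookkeeping rather than any new idea: I must verify that the rectangles produced by Lemmas~\ref{l:cand3-interval} and~\ref{l:cand4-interval}, after intersecting the two separate criteria, never exceed five sides (so that Lemma~\ref{t:3d-report} applies), and that sharing the single pair of tables $R_1,R_2$ across all of Candidates~3 and~4 keeps the beer-dependent space at $O(|B|\log n)$ rather than a larger multiple.
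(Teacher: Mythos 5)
Your proposal is correct and follows essentially the same route as the paper: the same assembly of the He et al.\ distance tree, the level-order beer bit vector, the mirrored-graph structure of Lemma~\ref{l:cand2-interval}, and the shared 3D tables $R_1,R_2$ queried via Lemmas~\ref{l:cand3-interval} and~\ref{l:cand4-interval}, with the identical space accounting, the same reporting-based trick for \qbsp, and the same trade-off from Lemma~\ref{t:3d-report}. The only cosmetic difference is that you phrase the four candidate sets as a partition where the paper uses an overlapping cover (its Candidate~2 is all $w$ with $r_w<r_u$, not just those with $w<u$), but since coverage is all that correctness requires, this changes nothing.
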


\section{Lower Bounds for Beer Interval Graphs}
\label{s:lower-bound}
In this section, we will derive lower bounds for beer interval graphs and beer proper interval graphs. The lower bounds we will derive will be information theoretic, so that for a set of objects $X$, we will need at least $\log |X|$ bits in the worst case to represent any specific object.

First, we note that it is not interesting for beer interval graphs, since adding beer vertices can increase the lower bound by at most $n$ bits. Since the lower bound for interval graphs is already $n\log n - o(n\log n)$ bits, the increase in space to account for the beer vertices is a lower order term and does not impact our data structures.
For proper interval graphs however, the lower bound is $2n$ and thus it is natural to ask whether adding the beer vertices requires the full $n$ bits to store them. That is, is the lower bound for beer proper interval graphs $3n$? In the main result of this section, we will show that it is not necessarily the case, and that if $X$ were the set of beer proper interval graphs, then $\log|X| = n\log(4+2\sqrt{3}) - o(n) \approx 2.9n$.

In our case, we are interested in beer graphs, that is a graph $G$ together with a set $B \subseteq V$ of beer vertices. We will refer to $B$ as a beer vertex pattern. We will say that two beer graphs $(G_1,B_1)$ and $(G_2, B_2)$ are isomorphic (and thus are the same object) if there exists a bijection $f: V(G_1) \mapsto V(G_2)$ such that $(u,v) \in E(G_1) \Leftrightarrow (f(u),f(v)) \in E(G_2)$ and $u \in B_1 \Leftrightarrow f(u) \in B_2$. The first condition is the standard condition for two graphs to be isomorphic and the second condition says that this isomorphism also preserves beer vertices. Thus for two beer graphs to be isomorphic, the underlying graphs must also be isomorphic as well.

\begin{example}
    \label{e:clique}
    Suppose our graph class are cliques, then how many beer cliques are there? On $n$ vertices, there is exactly one underlying graph $G = K_n$ on $n$ vertices that is a clique. Thus it remains to see how many different beer vertex patterns we can have. By definition, if $(K_n, B_1)$ were isomorphic to $(K_n, B_2)$, then there exists an automorphism $\sigma$ of $K_n$ mapping vertices $u \in B_1$ to $\sigma(u) \in B_2$ bijectively, and thus $|B_1| = |B_2|$. Conversely, if $|B_1| = |B_2|$ then there exists a bijection $\sigma$ that maps the elements of $B_1$ to $B_2$ and fixes every other vertex. As the underlying graph is the complete graph $K_n$, this $\sigma$ is also an automorphism of the underlying graph as well.
    Thus $(K_n, B_1)$ is isomorphic to $(K_n, B_2)$ exactly when $|B_1| = |B_2|$. The number of different ways to add beer nodes to a clique on $n$ vertices is thus $n+1$. \qed
\end{example}

As $(G_1, B_1) \cong (G_2, B_2)$ happens only when $G_1 \cong G_2$, it remains to develop the theory to compute the number of beer vertex patterns that are different when given a specific underlying graph $G$. 
Let $Aut(G)$ denote the automorphism group of a graph $G$. We will view $B \subseteq V$ as a vector $B \subseteq 2^n$ on the hypercube (where the $i$-th bit denotes whether the $i$-th vertex belong to the set or not), and $Aut(G)$ as a group that acts on $2^n$. In this lens, two beer vertex patterns $B_1, B_2$ are the same if there exists a group element $\sigma$ mapping $B_1$ to $B_2$, and thus $B_1$ and $B_2$ belong to the same orbit of this group action. The number of different beer vertex patterns is thus the number of orbits $|2^n/Aut(G)|$.
To count the number of orbits, we will use the Polya enumeration theorem \cite{10.1007/BF02546665-polya}, which in its most basic form, states that if we denote $c(\sigma)$ as the number of cycles in $\sigma$ when viewed as a permutation of $V(G)$, $|2^n/Aut(G)| = \frac{1}{|Aut(G)|}\sum_{\sigma \in Aut(G)}2^{c(\sigma)}$.

\subsection{Automorphism Groups of Proper Interval Graphs}


Klavic and Zeman \cite{stacs-klavk_et_al} showed that $Aut(\text{connected PROPER INT}) = Aut(\text{CATERPILLAR})$. A caterpillar graph/tree is a path together with a set of leaves that are adjacent some vertex on the path. In particular, the automorphism group of any particular connected proper interval graph is generated by 2 types of automorphisms. First are automorphisms that swap twin vertices - which corresponds to those that swap the leaves adjacent to the same vertex on the path of a caterpillar graph. In a proper interval graph, twin vertices are those that have the same set of maximal cliques. The second is an automorphism that reverses the proper interval graph, which corresponds to reversing the path of a caterpillar graph. This reversal corresponds to a reversal of the maximal cliques.
Of course, for any particular graph, there may not be any twin vertices, and thus there are no automorphisms of the first type. As for the second type, it can only exist when the number of vertices in the maximal cliques are symmetrical - as the vertices in the first maximal clique are mapped to those in the last maximal clique etc.

We will assume that the maximal cliques are not symmetrical and thus no automorphisms of the second type exists. To see this, we may always desymmetrize the sequence of maximal cliques by adding one  vertex to only the first maximal clique if necessary.

Now suppose that $G$ is a connected proper interval graph. As being twin vertices are an equivalence relation, let $S_1,\ldots, S_h$ be the equivalence classes of twin vertices, that is $u,v \in S_i$ implies that $u,v$ are twins. Let $k_i = |S_i|$ and we will say that vertices which have no twins are in a class of size 1, so that $\sum_i k_i = n$.

\begin{lemma}
\label{l:num-patterns}
    Let $G$ be a proper interval graph with twin vertex classes of sizes $|S_1| = k_1,\ldots, |S_h| = k_h$. Then $|2^n/Aut(G)| \le (k_1+1)(k_2+1)\cdots(k_h+1)$.
    This is an equality in the case that the graph is connected and the maximal cliques are non-symmetrical.
\end{lemma}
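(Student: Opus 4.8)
The plan is to count orbits of the action of $Aut(G)$ on the hypercube $2^n$ using the Polya/Burnside framework already set up, but the cleaner route is a direct structural argument. First I would establish the key structural fact about twin classes: each twin class $S_i$ is a set of vertices that are pairwise interchangeable by automorphisms, and under the standing assumption (connected, non-symmetrical maximal cliques) the automorphism group $Aut(G)$ is \emph{exactly} the direct product of the symmetric groups on the twin classes, namely $Aut(G) \cong Sym(S_1) \times \cdots \times Sym(S_h)$. This is because Klav\'ik and Zeman tell us the group is generated by twin-swaps and the global reversal, and we have removed the reversal by desymmetrizing; what remains are precisely the permutations that independently permute vertices within each twin class. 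For the inequality (without the connectivity/non-symmetry hypotheses), I would only need that $Sym(S_1) \times \cdots \times Sym(S_h)$ is a \emph{subgroup} of $Aut(G)$, so that its orbits refine those of $Aut(G)$, giving an upper bound on the orbit count.

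Next I would count orbits of the full symmetric-product group acting on $2^n = 2^{S_1} \times \cdots \times 2^{S_h}$. Since the action decomposes as a product action over the classes, the orbits factor: an orbit of $Sym(S_1)\times\cdots\times Sym(S_h)$ is a product of orbits of each $Sym(S_i)$ acting on $2^{S_i}$. For a single symmetric group $Sym(S_i)$ acting on the subsets of $S_i$, two subsets lie in the same orbit exactly when they have the same cardinality, so the number of orbits is $k_i + 1$ (cardinalities $0,1,\ldots,k_i$). Multiplying over all classes gives $\prod_{i=1}^h (k_i+1)$ orbits, which is the claimed count.

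To assemble the bound, I would argue that since $Sym(S_1)\times\cdots\times Sym(S_h) \le Aut(G)$, every orbit of $Aut(G)$ is a union of orbits of the subgroup, hence $|2^n/Aut(G)| \le \prod_i(k_i+1)$; this yields the inequality in full generality. For the equality case, I would invoke the structural fact that under the connected, non-symmetrical hypotheses the containment is actually an equality $Aut(G) = Sym(S_1)\times\cdots\times Sym(S_h)$, so the two orbit counts coincide and we get $|2^n/Aut(G)| = \prod_i(k_i+1)$.

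The main obstacle I anticipate is rigorously justifying that the automorphism group is generated by \emph{exactly} the twin-swaps once the reversal is excluded --- i.e. that no automorphism can move a vertex outside its twin class. This requires arguing that any automorphism of a connected proper interval graph must preserve the linear ordering of maximal cliques up to reversal (a consequence of the Klav\'ik--Zeman identification with caterpillar automorphisms, together with the observation that twin vertices are exactly those sharing the same set of maximal cliques), and that a non-reversing automorphism therefore fixes each maximal clique setwise and hence permutes each twin class among itself. The product-action orbit factorization and the single-class count of $k_i+1$ are routine once this group identification is in hand, so I would spend the bulk of the write-up on the group-theoretic characterization and treat the counting as a short consequence.
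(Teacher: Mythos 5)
Your proposal is correct and follows essentially the same route as the paper: the same identification of $Aut(G)$ with the direct product of the symmetric groups on the twin classes (via Klav\'ik--Zeman, with the reversal killed by the non-symmetry hypothesis), and the same subgroup-refinement argument (a larger group has coarser orbits, hence fewer of them) to get the inequality for general proper interval graphs. The only divergence is bookkeeping: the paper counts the orbits of the product group by Polya enumeration, factoring $\frac{1}{|Aut(G)|}\sum_{\sigma}2^{c(\sigma)}$ across the classes and then invoking its clique example, whereas you count orbits directly by noting that orbits of a product action factor and that two subsets of a class $S_i$ are equivalent exactly when they have the same cardinality, giving $k_i+1$ per class; the two computations are equivalent, yours being marginally more elementary since it bypasses Polya's theorem altogether.
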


\begin{proof}
In the case that the graph is connected and the maximal cliques are non-symmetrical, we have only type 1 automorphisms.

Let $\sigma_i$ be a permutation that permutes only those vertices of $S_i$ and $\sigma_j$ permuting those of $S_j$, then $\sigma_i\sigma_j=\sigma_j\sigma_i$ as $S_i \cap S_j = \emptyset$. Thus we may write $Aut(G) \cong \mathbb{S}_{k_1}\times\cdots\times\mathbb{S}_{k_h}$ where $\mathbb{S}_n$ denotes the symmetric group (set of all permutations) on $n$ elements.

Applying this to Polya enumeration theorem, we obtain that
\begin{align*}
    |2^n/Aut(G)| & = \frac{1}{|Aut(G)|}\sum_{\sigma \in Aut(G)}2^{c(\sigma)}\\
    & = \frac{1}{k_1!\cdot k_2!\cdots k_h!}\sum_{\sigma_1\times\sigma_2\times\cdots\sigma_h\in \mathbb{S}_{k_1}\times\cdots\times\mathbb{S}_{k_h}} 2^{c(\sigma_1\times\sigma_2\times\cdots\sigma_h)}\\
    & = \left(\frac{1}{k_1!}\sum_{\sigma_1\in\mathbb{S}_{k_1}}2^{c(\sigma_1)}\right)\left(\frac{1}{k_2!}\sum_{\sigma_2\in\mathbb{S}_{k_2}}2^{c(\sigma_2)}\right)\cdots\left(\frac{1}{k_h!}\sum_{\sigma_h\in\mathbb{S}_{k_h}}2^{c(\sigma_h)}\right)\\
    & = |2^{k_1}/Aut(K_{k_1})|\cdots|2^{k_h}/Aut(K_{k_h})|\\
    & = (k_1+1)(k_2+1)\cdots(k_h+1)
\end{align*}
The last equality comes from our example \ref{e:clique} dealing with cliques. 

Finally, we note that by the definition of group action, if $H_1 \subset H_2$ are two groups acting on a set $X$, then the orbit of any element $x \in X$ under $H_2$, $H_2\cdot x = \{h\cdot x; h \in H_2\}$ is a superset of that of the orbit under $H_1$. Thus the number of orbits $|X/H_2| \le |X/H_1|$. As the above equation applies exactly to non-symmetric connected proper interval graphs, and dropping the connectedness/symmetric property only increases the automorphism groups (if two connected components are isomorphic, then there is an automorphism that swaps the two connected components), we conclude that if $G$ were a proper interval graph instead, we may say that $|2^n/Aut(G)| \le \Pi_i (k_i + 1)$.

As a further remark, we see that any vertex that is not part of any twins contributes a multiplicative factor of 2 to the above quantity (intuitively, this means that you must store a bit stating whether this vertex is a beer vertex or not) while any vertex that have twin vertices contributes a much smaller term (intuitively, this means that it is necessary to only store only the number of beer vertices using $\log k_i$ bits, furthering our intuition from example \ref{e:clique}). 
\end{proof}
For a proper interval graph, we will also refer to the above quantity as its weight, as the number of beer proper interval graphs would be the weighted sum of proper interval graphs.

\subsection{Lower Bound}

In the section, we will prove a tight lower bound on the number of beer proper interval graph. The main idea is to decompose a Dyck path (which is in more or less a one-to-one correspondence to proper interval graphs) in such a way that it preserves the weights.

To see the one-to-one correspondence, we see that by the interval graph recognition algorithm of Booth and Leuker \cite{DBLP:journals/jcss/BoothL76}, the PQ-tree of the maximal cliques is a single $Q$ node, so that there are at most two ordering of maximal cliques representing each graph (one order and the reverse of that order). As each Dyck path gives a different sequence of maximal cliques, each graph can have at most two Dyck paths representing it.

For a particular proper interval graph, with twin vertex classes of sizes $k_1, \ldots, k_l$, the weight assigned to it is $\Pi_i (k_i + 1)$. Now consider the following blocking scheme for the distance tree associated with the proper interval graph: start at the root and continue in level-order, add the vertices to the block until either: the vertex has a different parent, or the vertex is not a leaf. In this manner, we consider the root as a sibling of its left child.

\begin{lemma}
\label{l:twins}
    In the above blocking scheme, two vertices $u,v$ are in the same block if and only if they are twins.
\end{lemma}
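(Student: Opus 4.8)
The plan is to reduce the statement to a claim about \emph{consecutive} vertices and then lift it by transitivity. Since the scheme only ever compares a vertex with the block it is trying to join, two vertices $u<v$ lie in the same block exactly when every consecutive pair $(j,j+1)$ with $u\le j<v$ ``stays together'', i.e.\ $j+1$ is a leaf sharing the parent of $j$ (with the stated convention that the root and its smallest child count as siblings). On the other side, being twins is an equivalence relation, so it suffices to prove the single crux statement: for consecutive vertices $i,i+1$, the vertex $i+1$ is a leaf whose parent equals that of $i$ if and only if $i$ and $i+1$ are twins. I would also record that twin classes are \emph{intervals} of consecutive indices, so that a maximal run of ``together'' pairs coincides with exactly one twin class; this uses that in a connected proper interval graph (as required for the distance tree to be defined) closed neighbourhoods are index-intervals $N[w]=[\alpha_w,\qlast(w)]$ with $\qlast(w)$ the largest-indexed neighbour.

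Before the crux I would collect three elementary facts about connected proper interval graphs, all immediate from non-nesting and distinct endpoints: (a) neighbourhoods are contiguous, i.e.\ if $u<w<x$ and $u$ is adjacent to $x$ then $u$ is adjacent to $w$; (b) $\qlast(\cdot)$ is non-decreasing in the index; and (c) every non-root vertex has a neighbour of smaller index, so its parent in the distance tree (its smallest neighbour) has smaller index. Fact (c) lets me write $N[i]=[p,\qlast(i)]$ and $N[i+1]=[p,\qlast(i+1)]$ once $i$ and $i+1$ share the parent $p$, so that twinship of $i,i+1$ becomes the single equality $\qlast(i)=\qlast(i+1)$.

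For the easy direction, if $i$ and $i+1$ are twins then $N[i]=N[i+1]$, so they share a smallest neighbour and hence a parent, and $i+1$ must be a leaf: a child $c$ of $i+1$ satisfies $\min N(c)=i+1>i$, so $c$ is adjacent to $i+1$ but not to $i$, contradicting $N[i]=N[i+1]$. For the converse, assume $i+1$ is a leaf sharing the parent $p$ of $i$; the left endpoints of $N[i]$ and $N[i+1]$ already agree at $p$, so I only need $\qlast(i)=\qlast(i+1)$. Suppose instead $\qlast(i)<\qlast(i+1)$ and set $w=\qlast(i)+1$. Then $w$ is adjacent to $i+1$ but not to $i$ (it lies past $\qlast(i)$), and by monotonicity of $\qlast$ together with contiguity no vertex of index below $i+1$ can be adjacent to $w$; hence $\min N(w)=i+1$ and $w$ is a child of $i+1$, contradicting leafness. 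Thus $\qlast(i)=\qlast(i+1)$ and $i,i+1$ are twins. The root convention is handled by the same argument with $i=1$, $i+1=2$: vertex $2$ has smallest neighbour $1$, so treating $1$ and $2$ as siblings, $2$ is a leaf iff $\qlast(1)=\qlast(2)$ iff $1,2$ are twins.

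The main obstacle is the converse crux, and specifically the ``child-finding'' step: one has to produce an explicit child of $i+1$ out of the mere strict inequality $\qlast(i)<\qlast(i+1)$, and then certify that the candidate $w=\qlast(i)+1$ has no neighbour of index below $i+1$, which is precisely where monotonicity of $\qlast$ and contiguity rule out a smaller parent for $w$. A secondary point worth checking carefully is that twin classes really are intervals of indices: I would prove this by taking twins $u<v$ with $N[u]=N[v]=[p,q]$ and showing any $u<w<v$ satisfies $\qlast(w)=q$ (by monotonicity, sandwiched between $u$ and $v$) and $\alpha_w=p$ (using that $p$ is adjacent to $v$, so contiguity forces $p$ adjacent to $w$, while any smaller neighbour of $w$ would also be a neighbour of $u$ below $p$), so that $N[w]=[p,q]$ as well. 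With the crux and this interval property in hand, the maximal runs produced by the scheme are exactly the twin classes, which proves the lemma.
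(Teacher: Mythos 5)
Your proof is correct, but it takes a genuinely different route from the paper's. The paper's argument is top-down and very short: it cites He et al.'s characterization that the closed neighbourhood of a node $v$ in the distance tree is the interval $[v_1,v_2]$, where $v_1$ is the parent of $v$ and $v_2$ is the rightmost child of the \emph{previous internal node} of $v$ in level order; twinship of $u<v$ then reads off as ``same parent and same previous internal node,'' and the latter condition is precisely that $v$ and every vertex between $u$ and $v$ is a leaf, i.e.\ membership in the same block. In particular, the paper gets the fact that twin classes are index-intervals for free from that characterization, whereas you must (and do) prove it as a separate step to justify your reduction to consecutive pairs. Your crux argument for a consecutive pair --- contiguity of neighbourhoods, monotonicity of $\qlast$, and the child-finding step showing that $w=\qlast(i)+1$ would be a child of $i+1$ whenever $i+1$ is a leaf sharing $i$'s parent but $\qlast(i)<\qlast(i+1)$ --- is in effect a self-contained re-derivation of the piece of He et al.'s neighbourhood description that the paper invokes as a black box. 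What the paper's route buys is brevity; what yours buys is independence from the cited characterization, plus explicit handling of the connectivity assumption (needed so that consecutive vertices are adjacent and every non-root vertex has a smaller neighbour, which your step $w\ge i+2$ silently uses) and of the root-as-sibling convention. Both arguments are sound, and your facts (a)--(c) are exactly the structural properties of proper interval graphs required.
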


\begin{proof}  
    By the characterization of the neighbourhood of He et. al \cite{DBLP:conf/isaac/0001MNWW20}, we see that the neighbourhood of any vertex $v$ is an interval $[v_1,v_2]$ where $v_1 = $ parent$(v)$ and $v_2$ is the rightmost child of the previous internal node of $v$ in level order.
    
    Thus two vertices $u < v$ are twins exactly when their neighbourhoods coincide, and this neighbourhood is $[v_1,v_2]$. They have the same $v_1$ exactly when they have the same parent. They have the same $v_2$ exactly when the previous internal node is the same, but that means $v$ must be a leaf (as otherwise by definition, $v$ is the previous internal node to $v$, and cannot be the previous internal node to $u$) and furthermore all the vertices between $u$,$v$ are also leaves. Thus by definition, they would all be added to the same block as $u$.
\end{proof}

We may look at this in the same way by replacing the root with a dummy root and dropping the original root as the first child of the dummy root. This blocking scheme is illustrated in Figure \ref{f:blocking}.

\begin{figure}
    \centering
    \includegraphics[scale=0.5]{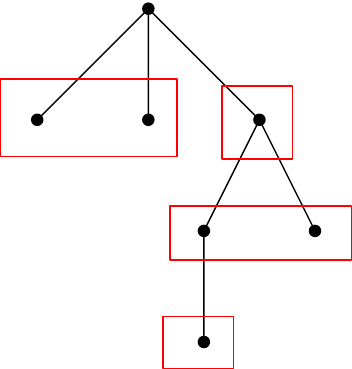}
    \includegraphics[scale=0.5]{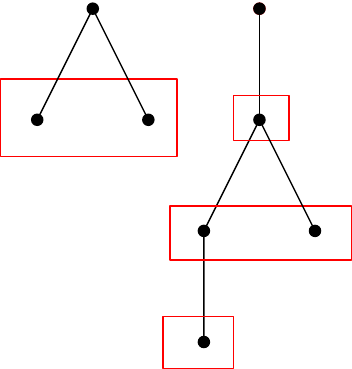}
    \caption{The twin vertex classes in the distance tree, and a way to decompose the tree}
    \label{f:blocking}
\end{figure}
Note that we do not consider the dummy root as part of our blocks and we can also view this as deleting the dummy root and consider the roots of this new forest as siblings. The second is our proposed way to decompose the tree into two trees while preserving all the blocks. If we consider the balanced parenthesis view of the tree (without the dummy root), we see that the sequences are the same $()()|((())())$, but we cut it into two at the $|$. Precisely, $|$ is at the first spot in the sequence such that the excess is 0 and the next two parentheses are $(($. In the language of Dyck paths, this is the first time the path touches the $x$-axis and the next two steps are both up-steps.

Let $C(n)$ be the $n$-th Catalan number and the number of Dyck paths of length $2n$. The above decomposes the path into two subpaths. Let $L(n)$ be the number of paths of length $n$ for the subpath to the left of $|$. and $R(n)$ be the number of paths of length $n$ of the subpath to the right. We will also abuse notation and use $L(n), R(n)$ as the set of Dyck paths of the respective forms. 
Thus we have the recurrence $C(n) = \sum_{k = 0}^n L(k)R(n-k)$.




Now we consider the weighted versions. Let $\bar{C}(n)$ be the sum of all Dyck paths with our weighting system. Similarly for $\bar{L}(n)$ and $\bar{R}(n)$. Because we preserve all the blocks with our split, we have the same recurrence $\bar{C}(n) = \sum_{k = 0}^n \bar{L}(k)\bar{R}(n-k)$, which holds for $n \ge 1$. For $n = 0$, we see that $\bar{L}(0) = 0, \bar{R}(0) = 1$ and $\bar{C}(0) = 1$. Now it remains to compute $\bar{L}(n)$ and $\bar{R}(n)$.

\begin{lemma}
    \label{l:lr}
    $\bar{L}(n) = \displaystyle\sum_{l=0}^n \bar{C}(n-l-1)(l+2) \text{ and } \bar{R}(n) = \bar{C}(n) - \displaystyle\sum_{l=1}^{n}(l+1)\bar{R}(n-l)$.
\end{lemma}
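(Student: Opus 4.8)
The plan is to read both identities directly off the block (twin-class) structure of the two pieces produced by the cut, exploiting the fact already established that the cut never splits a block, so that the weight of a path is the product of the weights of its left and right parts. I would treat $\bar{R}(n)$ first, since it is self-contained. Every Dyck path of semilength $n$ has a unique factorization $()^{l}\,Q$, where $l\ge 0$ counts the leading trivial arches and $Q$ is either empty or begins with two up-steps. Under the blocking scheme the $l$ leading trivial arches are consecutive top-level leaves sharing a parent, and they are sealed off from the rest because the first arch of $Q$ (when $Q\neq\emptyset$) is nontrivial and hence opens with a non-leaf; they therefore form one block of size $l$, contributing the factor $l+1$, while $Q$ is precisely a right-part of semilength $n-l$ contributing $\bar{R}(n-l)$. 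Multiplicativity of the weight across this boundary gives $\bar{C}(n)=\sum_{l=0}^{n}(l+1)\bar{R}(n-l)$, and isolating the $l=0$ summand rearranges this into $\bar{R}(n)=\bar{C}(n)-\sum_{l=1}^{n}(l+1)\bar{R}(n-l)$.

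For $\bar{L}(n)$ I would start from the structural description of a left-part: it is a single irreducible arch $A_1$ followed by a (possibly empty) run of trivial arches, because every \emph{intermediate} return to the axis inside $L$ must be followed by a down-step and hence by a trivial arch. I would then decompose the weight of $L$ into two independent contributions: the block formed at the top level by the root of $A_1$ together with the trailing trivial arches, and the interior of $A_1$, which is an unconstrained Dyck path and so ranges over exactly the paths counted by $\bar{C}$.

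Conditioning on the size of this junction block (equivalently on the semilength of the interior of $A_1$) and summing then produces a convolution of the form $\sum(\text{block weight})\cdot\bar{C}(\cdot)$, which is the claimed expression $\bar{L}(n)=\sum_{l=0}^{n}(l+2)\bar{C}(n-l-1)$; the extreme summand corresponds to a degenerate (empty) interior and is killed by the convention $\bar{C}(-1)=0$. After this, one only substitutes into the main recurrence $\bar{C}(n)=\sum_{k}\bar{L}(k)\bar{R}(n-k)$ to close the system.

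The step I expect to be the main obstacle is the bookkeeping of this junction block for $\bar{L}$. In the $\bar{R}$ case the leading block is cleanly delimited by a following non-leaf, but here the root of $A_1$ interacts with the trailing trivial arches through the root-as-sibling-of-its-left-child convention, and the fully degenerate case in which $A_1$ is itself trivial has to be separated out so that the root is neither double counted nor mis-weighted when the interior is empty. Pinning down this boundary accounting exactly—and cross-checking it against the base values $\bar{L}(0)=0$, $\bar{R}(0)=1$, $\bar{C}(0)=1$ and against the first few terms of the recurrences—is where the care lies; everything else is routine once block-multiplicativity across the cut is invoked.
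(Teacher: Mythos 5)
Your proof is correct and follows essentially the same route as the paper: both derive the $\bar{R}$ identity by stripping off the $l \ge 0$ leading trivial arches, which form a block of weight $l+1$ sealed off by the non-leaf root of the next arch, giving $\bar{C}(n)=\sum_{l\ge 0}(l+1)\bar{R}(n-l)$, and both derive the $\bar{L}$ identity by viewing a left part as an irreducible arch followed by $l$ trivial arches, whose top-level block of size $l+1$ contributes the factor $l+2$ while the arch interior ranges over arbitrary Dyck paths counted by $\bar{C}(n-l-1)$. One cosmetic correction: the vacuous summand suppressed by the convention $\bar{C}(-1)=0$ is $l=n$ (no first arch at all), whereas the empty-interior case is $l=n-1$, which is a legitimate term contributing $\bar{C}(0)=1$.
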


\begin{proof}
    We split a Dyck path in $L(n)$ as above: an irreducible Dyck path followed by a sequence of up-downs. Suppose that the irreducible Dyck path has length $2(n-l)$, then the top level has a block of size $l+1$, as there are $l$ up-downs and the 1 node contributed by the irreducible Dyck path. The remainder of the Dyck path is of length $2(n-l-1)$ and thus contributes $\bar{C}(n-l-1)$.
    
    This is illustrated in Figure \ref{f:lr}. As the first part of the Dyck path is irreducible, it is a rooted tree, and the block at level 1 contains $l+1$ nodes. The remainder of the blocks are exactly the same as those in the first node's subtree.
    
    We again decompose $R(n)$ as all path minus those that begin with up-down. Let $l$ be the number of up-downs that begins the path. These corresponds to leaves that begin the tree and are in a block together. The rest of the path must begin with up-up and is thus a path in $R(n-l)$. Therefore, we have $\bar{R}(n) = \bar{C}(n) - \sum_{l=1}^n(l+1)\bar{R}(n-l)$.
    
    This is illustrated in Figure \ref{f:lr} in the second forest. There are 3 leaves that begin the tree, corresponding to 3 up-downs that starts the path, and creates a block of size 3 (with weight 4). The rest of the forest must begin with an up-up and belong to $R(n-3)$.
\end{proof}

\begin{figure}
    \centering
    \includegraphics[scale=0.6]{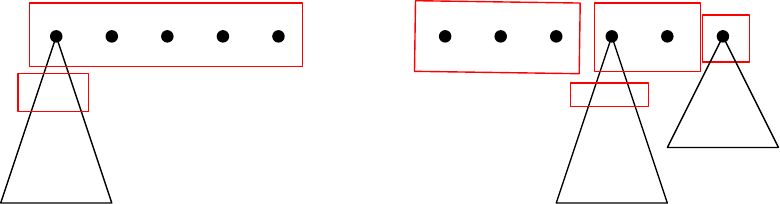}
    \caption{Decomposition of Dyck paths (as viewed as trees) of the forms $L$ and $R$}
    \label{f:lr}
\end{figure}

Now consider the following generating functions. Let $f(x) = \sum_{n \ge 0}\bar{C}(n)x^n$, $g(x) = \sum_{n \ge 0}\bar{L}(n)x^n$ and $h(x) = \sum_{n\ge 0}\bar{R}(n)x^n$. The above recurrences says that these generating functions are linked and that we have a very nice closed form for $f$.

\begin{lemma}
\label{l:gen-func}
    Let $b(x) = (1-x)^{-2}$. Then we have $f = gh +1$, $g = f\cdot(b-1)$ and $h = f/b$. Finally $f = \left(1-\sqrt{1-8x+4x^2}\right)/\left(4x-2x^2\right)$.
\end{lemma}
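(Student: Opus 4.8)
The plan is to translate each of the three recurrences preceding the lemma into an identity on the generating functions $f,g,h$, and then eliminate $g$ and $h$ to obtain a single quadratic equation for $f$.

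First I would establish $f = gh + 1$. The convolution $\bar{C}(n) = \sum_{k=0}^n \bar{L}(k)\bar{R}(n-k)$ is exactly the statement that $[x^n](gh) = \bar{C}(n)$, but it is asserted only for $n \ge 1$. At $n=0$ we have $[x^0](gh) = \bar{L}(0)\bar{R}(0) = 0$ whereas $\bar{C}(0)=1$, so $f$ and $gh$ agree in every coefficient except the constant one, where they differ by $1$; this gives $f = gh + 1$.

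Next I would expand $b(x)=(1-x)^{-2}=\sum_{n\ge 0}(n+1)x^n$, so that $b-1 = \sum_{j\ge 1}(j+1)x^j$. Reading off coefficients, $[x^n]\bigl(f(b-1)\bigr) = \sum_{j=1}^n (j+1)\bar{C}(n-j)$; the substitution $j = l+1$ turns this into $\sum_{l=0}^{n-1}(l+2)\bar{C}(n-l-1)$, which, under the convention $\bar{C}(-1)=0$ that kills the $l=n$ term, is precisely the expression for $\bar{L}(n)$ in Lemma~\ref{l:lr}. Hence $g = f(b-1)$. For $h$ I would first rewrite the $\bar{R}$ recurrence as $\bar{C}(n) = \sum_{l=0}^n (l+1)\bar{R}(n-l)$, moving the sum across and absorbing the $l=0$ term $\bar{R}(n)$. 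The right-hand side is exactly $[x^n](bh)$, so $f = bh$ and therefore $h = f/b$.

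Finally I would substitute $g = f(b-1)$ and $h = f/b$ into $f = gh + 1$ to obtain $f = f^2(b-1)/b + 1$. Using $(b-1)/b = 1 - (1-x)^2 = 2x - x^2$, this reads
\[
(2x-x^2)\,f^2 - f + 1 = 0,
\]
whose roots are $f = \dfrac{1 \pm \sqrt{1 - 8x + 4x^2}}{\,4x - 2x^2\,}$. Selecting the branch analytic at the origin, equivalently the one with $f(0)=\bar{C}(0)=1$ (the $+$ sign blows up as $x\to 0$), fixes the minus sign and yields the claimed closed form. I expect the only real care to be needed at the boundaries: extracting the $+1$ correction from the $n=0$ mismatch, respecting $\bar{C}(-1)=0$ when matching the $\bar{L}$ recurrence, and choosing the correct root of the quadratic. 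None of these is deep, but each is a spot where an off-by-one or a sign slip would corrupt the final expression.
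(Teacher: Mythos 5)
Your proposal is correct and follows essentially the same route as the paper's proof: translating each recurrence into a generating-function identity ($f = gh+1$, $g = f(b-1)$, $f = bh$), eliminating $g$ and $h$ to get $(2x-x^2)f^2 - f + 1 = 0$, and solving the quadratic. Your treatment of the boundary details (the $n=0$ mismatch, the $\bar{C}(-1)=0$ convention, and the explicit justification for choosing the minus root via analyticity at the origin) is if anything slightly more careful than the paper's.
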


\begin{proof}
    We first note that $b(x) = \sum_{n\ge 0}(n+1)x^n$. This is easily seen as $b(x)$ is the derivative of $(1-x)^{-1} = \sum_{n\ge 0}x^n$.
    
    Next we note that as $\bar{C}(n) = \sum_{k = 0}^n \bar{L}(k)\bar{R}(n-k)$ for $n\ge 1$, we obtain $f = gh +1$. The constant term accounts for the initial conditions.
    
    Next we expand the recurrence for $\bar{L}$ to obtain:
    \begin{align*}
        g & = \sum_{n=0}^{\infty}\bar{L}(n)x^n = \sum_{n=0}^{\infty}\sum_{l=0}^n\bar{C}(n-l-1)(l+2)x^n = \sum_{l = 0}^{\infty}\sum_{n=l}^\infty\bar{C}(n-l-1)(l+2)x^n\\
        & = \sum_{l = 0}^{\infty}(l+2)x^{l+1}\sum_{n=l}^\infty\bar{C}(n-l-1)x^{n-l-1} = \sum_{l = 1}^{\infty}(l+1)x^lf\\
        & = f\cdot (b-1)
    \end{align*}
    
    Expanding the recurrence for $\bar{R}$ we obtain:
    \begin{align*}
        h & = \sum_{n=0}^{\infty}\bar{R}(n)x^n = \sum_{n=0}^{\infty}\bar{R}(n)x^n - \sum_{n=0}^{\infty}\sum_{l=1}^n\bar{R}(n-l)(l+1)x^n\\
        & = f - \sum_{l=1}^{\infty}\sum_{n=l}^\infty\bar{R}(n-l)(l+1)x^n = f- \sum_{l=1}^{\infty}(l+1)x^{l}\sum_{n=l}^\infty\bar{R}(n-l)x^{n-l}\\
        & = f - (b-1)h
    \end{align*}
    Collect terms and we obtain $f = bh$.
    
    Lastly, we have
    \begin{align*}
        0 & = gh + 1 - f = f^2\frac{(1-x)^{-2}-1}{(1-x)^{-2}} + 1 - f\\
        & = f^2(1-(1-x)^2) - f + 1 = (2x - x^2)f^2-f +1 
    \end{align*}
    Apply the quadratic formula and taking the negative root, we obtain the desired $f = \left(1-\sqrt{1-8x+4x^2}\right)/\left(4x-2x^2\right)$
\end{proof}

We note that the sequence A108524 of OEIS \cite{oeis-sloane} has the same generating function and thus $\bar{C}(n)$ is exactly A108524. Thus we can finally prove our desired lower bound for the number of beer proper interval graphs.

\begin{theorem}
\label{t:lower-bound}
    The number of beer proper interval graphs on $n$ vertices is asymptotically $(4+2\sqrt{3})^n\cdot$poly$(n,1/n)$. 
    Therefore to represent a beer proper interval graph $G$ which is able to support $\qadj$ and $\qbdist$ will require at least $\log( 4+2\sqrt{3})n - o(n) \approx 2.9n$ bits in the worst case.
\end{theorem}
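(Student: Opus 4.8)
The plan is to read off the asymptotics of $\bar{C}(n)$ directly from the closed form of its generating function. By Lemma~\ref{l:num-patterns} the weight attached to each proper interval graph is exactly its number of beer vertex patterns $|2^n/Aut(G)|$, and by Lemma~\ref{l:gen-func} the weighted generating function is
\[
f(x) = \frac{1 - \sqrt{1 - 8x + 4x^2}}{4x - 2x^2}.
\]
Up to a bounded constant factor (coming from the at-most-two clique orderings per graph and the desymmetrization argument), $\bar{C}(n) = [x^n]f(x)$ is the number of beer proper interval graphs on $n$ vertices. So the task reduces to singularity analysis of $f$: the exponential base $4+2\sqrt3$ will come from locating the dominant singularity, and the $\mathrm{poly}(n,1/n)$ factor from its type.

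First I would locate the finite singularities. The radicand $1 - 8x + 4x^2 = 4(x-\rho)(x-\rho')$ has branch points at $\rho = 1 - \sqrt3/2 \approx 0.134$ and $\rho' = 1 + \sqrt3/2$, while the denominator $2x(2-x)$ vanishes only at $x=0$ and $x=2$. At $x=0$ the numerator also vanishes, so $f$ is analytic at the origin (the apparent pole is removable, as it must be since $f$ is a power series), and $x=2$ lies well beyond $\rho$. Hence $\rho$ is the unique singularity of smallest modulus, and
\[
\frac{1}{\rho} = \frac{2}{2 - \sqrt3} = \frac{2(2+\sqrt3)}{(2-\sqrt3)(2+\sqrt3)} = 4 + 2\sqrt3,
\]
giving the claimed growth base.

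Next I would determine the singularity type. Expanding the radicand near $x=\rho$ via its derivative $-8+8x = -4\sqrt3$ at $\rho$ yields $1 - 8x + 4x^2 = 4\sqrt3\,(\rho - x)\,(1+o(1))$, so $\sqrt{1 - 8x + 4x^2} = 2\cdot 3^{1/4}\sqrt\rho\,\sqrt{1 - x/\rho}\,(1+o(1))$; since the denominator is a nonzero constant at $\rho$, we get a square-root singularity $f(x) = a - b\sqrt{1 - x/\rho} + o(\sqrt{1-x/\rho})$ with $b>0$. The standard transfer theorem of singularity analysis then gives $[x^n]f(x) = \Theta\big(\rho^{-n} n^{-3/2}\big) = \Theta\big((4+2\sqrt3)^n n^{-3/2}\big)$, which is the asserted asymptotic count. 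Taking logarithms, $\log\bar{C}(n) = n\log(4+2\sqrt3) - \Theta(\log n)$, and since any structure supporting $\qadj$ and $\qbdist$ must assign distinct encodings to all these non-isomorphic beer graphs, the information-theoretic lower bound of $n\log(4+2\sqrt3) - o(n)$ bits follows.

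The elementary computations above are not the real difficulty; the main obstacle is the rigorous justification that singularity analysis applies. I must confirm that $\rho$ is the unique dominant singularity — which holds here because both branch points are real and positive with $\rho < \rho'$ and the denominator contributes nothing on $|x|\le\rho$ — and that $f$ continues analytically to a $\Delta$-domain (a slit ``pac-man'' region) at $\rho$ so the transfer theorem's hypotheses are met; this is immediate since $f$ is algebraic with its only finite branch points at $\rho$ and $\rho'$, hence analytic in a slit neighbourhood of $\rho$. A secondary bookkeeping point is verifying that the constant-factor gap between the weighted Dyck-path count $\bar{C}(n)$ and the true graph count contributes only an additive $O(1)$ to $\log\bar{C}(n)$ and is therefore absorbed into the $o(n)$ slack.
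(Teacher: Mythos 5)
Your proposal is correct and takes essentially the same route as the paper: both reduce the theorem to the asymptotics of the weighted Dyck-path count $\bar{C}(n)$ via Lemma \ref{l:num-patterns} and the generating function of Lemma \ref{l:gen-func}, and then apply the information-theoretic argument. The only difference is one of emphasis: you carry out explicitly the singularity analysis (dominant branch point $\rho = 1-\sqrt{3}/2$, square-root type, hence $\Theta((4+2\sqrt{3})^n n^{-3/2})$) that the paper asserts in a single sentence by pointing to the generating function and OEIS A108524, whereas the paper's proof instead spells out the bookkeeping you defer as secondary --- sandwiching the number of beer \emph{connected} proper interval graphs between $\bar{C}(n-1)$ and $\bar{C}(n)$, and extending to possibly disconnected graphs only one-sidedly, since component-swapping automorphisms can make the weight overcount beer patterns by more than a constant factor, which is harmless precisely because that direction is used solely for the upper bound.
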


\begin{proof}
    Consider a connected proper interval graph. The weight assigned to it is the weight of the distance tree after we add the dummy root and drop the real root as the first child of the dummy root.
    
    Ignoring the dummy root, this is a particular forest on $n$ nodes and thus is counted in $\bar{C}(n)$. Hence $\bar{C}(n)$ is an upper bound on the number of beer connected proper interval graphs.
    
    Conversely, if we simply delete the root of the distance tree, we obtain a Dyck path on $n-1$ vertices, and thus the weight is counted in $\bar{C}(n-1)$. The root of the tree can only increase the weight (either by being in a block by itself or increasing the size of the first block by 1). Thus $\bar{C}(n-1)$ is a lower bound.
    
    By our generating function, $\bar{C}(n) = (4+2\sqrt{3})^n\cdot$poly$(n,1/n)$, and it does not change asymptotically if we change the number of vertices by 1 (as that only induces a poly$(n,1/n)$ factor). Hence the number of beer connected proper interval graphs on $n$ vertices is asymptotically $(4+2\sqrt{3})^n\cdot$poly$(n,1/n)$.
    
    To convert this to count all proper interval graphs, we note that for any proper interval graph, the blocking scheme is an upper bound on the number of beer vertex patterns (as we are not considering a number of automorphisms in our count - the ones that swap isomorphic components), and thus $(4+2\sqrt{3})^n\cdot$poly$(n,1/n)$ is an upper bound. Clearly, connected proper interval graphs is a subset of all proper interval graphs and $(4+2\sqrt{3})^n\cdot$poly$(n,1/n)$ is a lower bound.
\end{proof}

\bibliography{paper}

\appendix
\section{Bounding the Number of Beer Proper Interval graphs}
\label{s:bounding}
Here we show a method to bound the number of beer proper interval graphs. As we have shown an tight bound in Section \ref{s:lower-bound}, the work here is obsolete, but the ideas contained may be useful in their own right. Furthermore the methods used here seems to be a good starting point in constructing a data structure for beer proper interval graphs using $<3n$ bits.

\subsection{Representation of Proper Interval Graphs}

As Acan et al. \cite{DBLP:journals/algorithmica/AcanCJS21} showed in their data structure, a proper interval graph can be represented using a length $2n$ bitvector, where the $i$-th vertex's left and right end point are the indices of the $i$-th 1 and 0 respectively.

A maximal clique is an clique that is maximal under subsets. An equivalent characterization of Interval graphs is that the set of maximal cliques can be linearly ordered so that for any vertex $v$, the set of maximal cliques containing $v$ is contiguous in the ordering \cite{Fulkerson1965IncidenceMA}. Furthermore the graph is a proper interval graph if this order is unique up to reflection.

To obtain an ordering of the maximal cliques from the bitvector, we consider the 1s and 0s as blocks. The end points of of the $i$-th vertex is now the block number that the $i$-th 1 or 0 belongs to.

We will refer to the intervals generated by the bitvector indices as the bitvector representation of the graph and by the intervals generated from the block representation as the clique representation of the graph.

\begin{lemma}
    Given a proper interval graph, and its interval representation using a length $2n$ bit vector,
    there are $k$ maximal cliques where $k$ is the number of blocks of 1s (and 0s) in the bit vector. If we view the $i$-th vertices' left and right endpoints by the block that the $i$-th 1 and 0 belongs to, then $i$-th maximal clique contains all vertices whose intervals contain $i$.
\end{lemma}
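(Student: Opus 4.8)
The plan is to read the combinatorial structure of the maximal cliques directly off the block decomposition of the bit vector. First I would record the shape of a valid representation: since the $1$s are left endpoints and the $0$s are right endpoints, and since every vertex's interval is nonempty and (by properness) no interval is nested inside another, the bit vector must begin with a $1$ and end with a $0$, and its maximal runs of equal bits alternate. Hence it has the form $1^{a_1}0^{b_1}1^{a_2}0^{b_2}\cdots 1^{a_k}0^{b_k}$ with every $a_i,b_i\ge 1$, so the number of blocks of $1$s equals the number of blocks of $0$s, both equal to $k$. Writing $L_v$ for the index of the $1$-block containing vertex $v$'s left endpoint and $R_v$ for the index of the $0$-block containing its right endpoint, the clique (block) representation assigns to $v$ the integer interval $[L_v,R_v]$.

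Next I would introduce one \emph{clique point} $p_i$ per block boundary: let $p_i$ be any real point lying strictly between the $i$-th $1$-block and the $i$-th $0$-block (so $p_i$ is not an endpoint), and set $C_i=\{v: p_i\in[l_v,r_v]\}$, the set of intervals containing $p_i$. The key observation, immediate from the left-to-right ordering of the blocks $1^{a_1}0^{b_1}\cdots$, is that $v\in C_i$ exactly when $v$ has already opened and not yet closed at $p_i$, i.e.\ when $L_v\le i$ and $R_v\ge i$; equivalently $i\in[L_v,R_v]$. This is precisely the statement that the $i$-th maximal clique consists of the vertices whose block interval contains $i$, so it remains only to verify that $C_1,\dots,C_k$ are exactly the maximal cliques.

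I would then establish three facts. (1) Each $C_i$ is a maximal clique: it is a clique because all its intervals share the point $p_i$, and it is maximal because any $v\notin C_i$ satisfies $R_v<i$ or $L_v>i$, in which case $v$ is not adjacent to the vertex that opens the $i$-th $1$-block (respectively closes the $i$-th $0$-block) — one of them closes before the other opens, since their blocks are separated in the sequence. (2) The $C_i$ are pairwise distinct: for $i<j$, the vertex opening the $j$-th $1$-block lies in $C_j$ but not in $C_i$. (3) Every maximal clique is some $C_i$: given a maximal clique $Q$, the one-dimensional Helly property gives a common point $p$, so $Q\subseteq\{v:p\in[l_v,r_v]\}$, and this containing set is itself a clique, whence maximality forces equality.

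The step I expect to be the crux is (3), where I must argue that the common point $p$ may be slid to one of the boundary points $p_i$ without turning the active set into a non-clique. The argument is: if the event immediately to the right of $p$ were a $1$ (a left endpoint), then pushing $p$ past it would strictly enlarge the active set while keeping it a clique, contradicting the maximality of $Q$; symmetrically, if the event immediately to the left of $p$ were a $0$, then the active set just before $p$ would strictly contain $Q$ and still be a clique. Hence $p$ lies immediately after a $1$ and immediately before a $0$, i.e.\ at a $1$-block-to-$0$-block transition, so $Q=C_i$ for the corresponding $i$. Combining (1)--(3) yields exactly $k$ maximal cliques together with the claimed containment description, completing the proof.
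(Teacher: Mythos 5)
Your proof is correct and takes essentially the same route as the paper's: both reduce maximal cliques to ``point cliques'' via the one-dimensional Helly property, identify the $1$-block-to-$0$-block transitions as the positions yielding maximal cliques (the paper uses the last $1$ of each block, you use a point just past it --- the same active set), and use the same sliding/perturbation argument to rule out every other position. Your write-up is slightly more complete in that you explicitly verify pairwise distinctness of the $C_i$ and maximality via concrete non-adjacent witnesses, but these are refinements of the paper's argument rather than a different approach.
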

\begin{proof}
To see that this indeed gives us the desired maximal cliques, consider any maximal clique $C$. Then the intervals of the vertices $v \in C$ pairwise intersect and thus there exists some number $l$ that belongs to all  the intervals. Conversely all intervals containing $l$ forms a clique and must be equal to $C$. Thus all maximal cliques are found by taking some number $l$ and taking all intervals containing $l$.

Let $l$ be the index of the last 1 in some block of 1s in the bitvector. We show that $l$ gives a maximal clique. Let $C_l$ be the set of vertices whose intervals contain $l$, and suppose that there exists another number $l' > l$ (other case is symmetrical) whose clique $C_{l'}$ strictly contains $C_l$. Let $v \in C_{l'}\setminus C_l$. The left endpoint of $v$ is to the right of $l$. Since $l$ is the index of the last 1 in a block of 1s, the left endpoint of $v$ must be to the right of the block of 0s immediately following $l$, and thus so much $l'$. But this block of 0s represents the right endpoint of vertices of $C_l$, which then cannot contain $l'$.

By collapsing the 1s and 0s into blocks, the vertices whose intervals contain $i$ (in the block view) are exactly those whose intervals (in the bitvector view) contain $l_i = $  the index of the last 1 in the $i$-th block of 1s.

Conversely, consider any index $i$ such that it is not the last 1 of a block. There are 3 cases, $i$ is between two numbers in the bitvector of the form $11,00, 01$. In the first case and third cases, moving $i$ to the right past the next 1 increases the clique (that 1 represents a new vertex whose interval now contains $i$). In the second case, moving $i$ to the left past the 0 increases the clique (that 0 represents the right endpoint of some vertex that now contains $i$).
\end{proof}

\begin{example}
The proper interval graph represented by the following bit sequence $1101011000$ has maximal cliques
$\{1,2\},\{2,3\},\{3,4,5\}$ and in this arrangement, the cliques that contain any vertex are consecutive.
\end{example}

It is obvious each length $2n$ bitvector that encodes a proper interval graphs must be balanced (at any index, the number of 1 preceding must be at least the number of 0s preceding) so that each interval's right coordinate is larger than its left coordinate, and thus can be viewed as a Dyck path. Consider any index where the Dyck path touches the line $x = 0$. Any vertex represented by an interval to the left of this point does not intersect interval of a vertex to the right of this point, and thus the graph is disconnected. A proper interval graph is connected then if the Dyck path never touches the line $x = 0$, except at the two end points. It is well known that the number of such Dyck paths of length $2n$ is the number of ordinary Dyck paths of length $2(n-1)$ with the bijection - removing the first and last steps.

Let $k$ be the number of maximal cliques (and the number of blocks of 1s and 0s in the bitvector representation). Then the sizes of the blocks of 1s forms a composition of $n$. We will represent this by a set of $k-1$ barriers that split the $n$ nodes into $k$ parts. The positions that do not have barriers will be denoted by the set $R_l$ ($l$ for the left end point of intervals). Similarly, the sizes of the blocks of 0s is also a composition of $n$ and can be represented by a set of $k-1$ barriers. The positions which do not have barriers will be denoted by $R_r$ ($r$ for right end points). Finally, let $R_I = R_l \cap R_r$ be the intersection of the two.

Conversely, given two compositions of $n$, we can recover the block sizes and thus the bitvector.

\begin{lemma}
    Two sets $R_l, R_r$ represents a proper interval graph if at any index $i$, the set $R_l(i) = \{x \in R_l; x < i\}$ is at least as large as the set $R_r(i) = \{x \in R_r; x < i\}$.
\end{lemma}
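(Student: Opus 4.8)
The plan is to reduce the claim to the balance (Dyck path) condition on the bitvector and then re-express that condition through the barrier encoding. Recall from the discussion above that a length-$2n$ bitvector encodes a proper interval graph precisely when it is balanced, i.e.\ every prefix contains at least as many $1$s as $0$s. Writing the bitvector in its block form $1^{a_1}0^{b_1}1^{a_2}0^{b_2}\cdots 1^{a_k}0^{b_k}$, the composition $(a_1,\dots,a_k)$ of $n$ is exactly the one whose barriers are complementary to $R_l$, and $(b_1,\dots,b_k)$ is the one encoded by $R_r$. I would first note that the two compositions must have the same number of parts $k$ (equivalently $|R_l|=|R_r|$) for the interleaving to be a well-formed clique ordering, so that this is part of what it means for $R_l,R_r$ to represent a graph at all.

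First I would observe that the excess (number of $1$s minus number of $0$s) of a prefix is increasing inside each $1$-block and decreasing inside each $0$-block, so its prefix-minima occur exactly at the right ends of the $0$-blocks. Hence the bitvector is balanced if and only if $\sum_{t\le j} a_t \ge \sum_{t\le j} b_t$ for every $j=1,\dots,k$; the terminal equality $\sum_t a_t=\sum_t b_t=n$ is automatic since both are compositions of $n$, which also handles connectivity-free validity at the very end of the sequence.

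Next I would translate these partial-sum inequalities into the stated counting condition. The key bookkeeping step is that the number of barriers lying before position $i$ in the $R_l$-encoding equals $\beta_l(i)-1$, where $\beta_l(i)$ is the index of the $1$-block containing the $i$-th vertex; consequently $|R_l(i)| = (i-1)-(\beta_l(i)-1) = i-\beta_l(i)$, and symmetrically $|R_r(i)| = i-\beta_r(i)$. Thus $|R_l(i)|\ge |R_r(i)|$ is equivalent to $\beta_l(i)\le\beta_r(i)$. I would then prove the equivalence of ``$\beta_l(i)\le \beta_r(i)$ for all $i$'' with ``$\sum_{t\le j} a_t\ge \sum_{t\le j} b_t$ for all $j$'': for one direction, evaluating at $i=\sum_{t\le j} b_t$ (the last vertex of the $j$-th $0$-block) makes $\beta_r(i)=j$, which forces $i\le \sum_{t\le j}a_t$; for the other direction, from $i\le \sum_{t\le \beta_r(i)} b_t\le \sum_{t\le \beta_r(i)} a_t$ the $i$-th vertex stays within the first $\beta_r(i)$ many $1$-blocks, giving $\beta_l(i)\le\beta_r(i)$. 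Chaining these equivalences shows the condition in the statement holds exactly when the bitvector is balanced, which proves the lemma.

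The step I expect to be the main obstacle is the bookkeeping in this last translation: correctly relating ``barriers before $i$'' to block indices, handling the off-by-one between the $n-1$ gap positions and the $n$ vertices, and making sure the two compositions genuinely have matching numbers of parts so that the reconstructed bitvector is a legitimate clique ordering rather than a malformed interleaving. Everything else is routine once this correspondence between the barrier counts and the block membership $\beta_l,\beta_r$ is pinned down.
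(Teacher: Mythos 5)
Your proof is correct and takes essentially the same route as the paper's: both arguments come down to the per-vertex condition that the block containing the $i$-th vertex's left endpoint must not come after the block containing its right endpoint, translated into barrier counts via the identity you call $|R_l(i)| = i-\beta_l(i)$ (the paper phrases this as ``the block number of the $i$-th 1/0 equals the number of barriers preceding $i$ plus 1''). The only difference is that you explicitly verify the equivalence between prefix balance of the bitvector, the partial-sum inequalities $\sum_{t\le j} a_t \ge \sum_{t\le j} b_t$, and the per-vertex block condition $\beta_l(i)\le\beta_r(i)$, steps the paper's terser proof treats as immediate.
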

\begin{proof}
    Consider the clique representation of a proper interval graph. For any vertex $i$, we must have that the left end point is at most equal to the right end point. Translating to $R_l, R_r$, at any index $i$, the number of barriers preceding $i$ in $R_l$ is at most that of the number of barriers preceding $i$ in $R_r$ (the block number of the $i$-th 1/0 is equal to the number of barriers preceding $i$ + 1 in $R_l/R_r$ respectively). Thus the number of positions which do not have barriers in $R_l: R_l(i)$ must be at least $R_r(i)$.
\end{proof}

Given two sets $R_r, R_l$, we say that they satisfy the Dyck path property if they represent a proper interval graph, and we will denote $R_r, R_l$ as the barrier or composition representation of the graph.

\begin{example}
In our graph above whose bitvector representation was $1101011000$, the sizes of the blocks of 1s were $2,1,2$. Thus $R_l$ would be $0010100$ (or as a set $\{1,4\}$), where the $1$ represents the barrier and the 0 represents the size of the composition.

$R_r$ would be $0101000$ (or as a set $\{3,4\}$) as the sizes of the blocks of 0s are $1,1,3$.

Finally $R_I$ would be $01010100$ as only position 4 is shared among $R_l, R_r$.
\end{example}

\subsection{A Lower Bound}
We will derive a subset of proper connected interval graphs that a) there are a lot of graphs in the subset, so that the number of bits to represent them is large and b) the number of bits needed to store the beer vertices is large, which consequently means the number of twin vertices is small.

As our lower bound will be information theoretic, we will be taking the $\log$. The number of graphs will be exponential in $n$ and thus any poly$(n,1/n)$ factors will be lower order terms. For simplicity, we will ignore them. Furthermore, proving a bound on a graph of $n+c$ vertices for constant $c$ will also introduce lower order terms, and thus we will for simplicity ignore any constant increase in graph size.

We will use the barrier representation of a proper interval graph, $R_l, R_r$, as their structure and in particular $R_I$ allows us to compute the sizes of the equivalence classes of twin vertices nicely.

\begin{lemma}
    Two vertices $i$ and $i+1$ are twins if and only if there is no barrier between them in $R_I$.
\end{lemma}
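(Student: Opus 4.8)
The plan is to translate the combinatorial notion of \emph{twin} into the clique representation and then read it off directly from the barrier sets $R_l$, $R_r$, and $R_I$. Recall that in a proper interval graph two vertices are twins precisely when they belong to the same set of maximal cliques, and that by the maximal-clique characterization established above the cliques containing vertex $i$ form the contiguous interval of clique indices $[l_i, r_i]$, where $l_i$ is the block number of the $i$-th $1$ and $r_i$ is the block number of the $i$-th $0$. Hence my first step is to observe that $i$ and $i+1$ are twins if and only if $l_i = l_{i+1}$ and $r_i = r_{i+1}$: since both sequences $l_\bullet$ and $r_\bullet$ are non-decreasing, equality of the whole clique-intervals reduces to equality of their two endpoints.

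Next I would unpack each of these two equalities in terms of barriers. Because the $i$-th and $(i+1)$-th $1$s are consecutive in the bit vector, they are separated by at most one block of $0$s, so they lie in the \emph{same} block of $1$s exactly when the composition of $1$-block sizes places no barrier in the gap between positions $i$ and $i+1$; by the definition of $R_l$ as the set of positions that carry no barrier, this is exactly the statement $i \in R_l$. Symmetrically, $r_i = r_{i+1}$ holds if and only if the $i$-th and $(i+1)$-th $0$s share a block, that is, $i \in R_r$.

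Finally I would combine the two equivalences: $i$ and $i+1$ are twins iff $l_i = l_{i+1}$ and $r_i = r_{i+1}$, iff $i \in R_l$ and $i \in R_r$, iff $i \in R_l \cap R_r = R_I$, which is precisely the statement that there is no barrier between $i$ and $i+1$ in $R_I$. I expect the only delicate point to be the bookkeeping in the middle step: one must keep straight that $R_l$ (and $R_r$) record the positions \emph{without} barriers, so that ``same block'' corresponds to \emph{membership} in $R_l$ rather than exclusion, and one must use that consecutive $1$s can fall only into the same or adjacent $1$-blocks, which is exactly what makes ``same block'' equivalent to the single condition of having no barrier in that one gap. With that observation in place, the statement is a direct unwinding of the definitions.
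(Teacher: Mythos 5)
Your proof is correct and takes essentially the same route as the paper's, which is a one-line chain of the same equivalences: no barrier in $R_I$ $\Leftrightarrow$ no barrier in $R_l$ and $R_r$ $\Leftrightarrow$ left endpoints share a block of 1s and right endpoints share a block of 0s $\Leftrightarrow$ same set of maximal cliques $\Leftrightarrow$ twins. The additional bookkeeping you supply (the clique-interval endpoints $l_i, r_i$ and the reminder that $R_l, R_r$ record positions \emph{without} barriers) is a harmless, more explicit unwinding of the same argument.
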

\begin{proof}
    Two vertices $i$, $i+1$ has no barrier between them in $R_I$ $\Leftrightarrow$ $i, i+1$ has no barrier between them in $R_l$ and $R_r$ $\Leftrightarrow$ $i, i+1$'s left endpoints belong to the same block of 1s and their right endpoints belong to the same block 0s $\Leftrightarrow$ $i, i+1$ have the same set of maximal cliques $\Leftrightarrow$ $i, i+1$ are twins.
\end{proof}

Thus the composition defined by $R_I$ is exactly the sizes of the equivalence classes of twin vertices.

\begin{example}
    Again in our graph represented by the bitvector $1101011000$, the only twin vertices are $4,5$ since $R_I = 01010100$.
\end{example}

For a graph on $n$ vertices, there are $n-1$ locations to place barriers. To make the calculations cleaner, we will consider $n+1$ vertices so that there are $n$ locations to place barriers.

\begin{lemma}
    Let $R_r, R_l$ satisfy the Dyck path property for proper interval graphs on $n+1$ vertices. Let $x = |R_I|$ and $y = |R_r \setminus R_I|$. Then the number of such $R_r, R_l$ is
    \[\sum_{x = 0}^n\sum_{y=0}^{(n-x)/2}\binom{n}{x}\binom{n-x}{2y}\binom{2y}{y}\frac{1}{y+1}\]
\end{lemma}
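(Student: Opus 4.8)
The plan is to set up a direct bijection between the pairs $(R_l,R_r)$ being counted and certain decorated sequences, and then to read off the formula term by term. Since both $R_l$ and $R_r$ are subsets of the $n$ barrier locations $\{1,\dots,n\}$ (we work with $n+1$ vertices, so there are $n$ gaps), I would classify each location $i$ according to which of the two sets it lies in: it is in $R_I=R_l\cap R_r$ (no barrier of either kind), in $R_l\setminus R_r$, in $R_r\setminus R_l$, or in neither (a barrier of each kind). Writing $x=|R_I|$ and naming the two difference sets $U=R_l\setminus R_r$ and $D=R_r\setminus R_l$, a pair $(R_l,R_r)$ is recovered uniquely from the disjoint triple $(R_I,U,D)$ via $R_l=R_I\cup U$ and $R_r=R_I\cup D$, so it suffices to count admissible triples.

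The key reduction is to re-read the Dyck path property in terms of $U$ and $D$ only. Recall the property states $|\{a\in R_l:a<i\}|\ge|\{a\in R_r:a<i\}|$ for every index $i$. Every element of $R_I$ contributes equally to both counts, and every location in neither set contributes to neither, so the difference of the two counts equals $|\{a\in U:a<i\}|-|\{a\in D:a<i\}|$. Hence, scanning the locations left to right and treating each element of $U$ as an up-step and each element of $D$ as a down-step, the property is exactly the requirement that this $\{\pm1\}$-walk never drops below zero, while the remaining $x+(n-x-2y)$ locations act as inert flat steps that may be interleaved arbitrarily. I would next observe that the compositions encoded by $R_l$ and $R_r$ have the same number of parts, namely the number $k$ of maximal cliques, whence $|R_l|=|R_r|=n-(k-1)$ and therefore $|U|=|D|=y$; the walk thus returns to zero at the end and is a genuine Dyck path on $2y$ steps.

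With this in hand the count factors cleanly. I would choose the $x$ positions of $R_I$ among the $n$ locations in $\binom{n}{x}$ ways; then choose the $2y$ positions occupied by $U\cup D$ among the remaining $n-x$ in $\binom{n-x}{2y}$ ways (the leftover $n-x-2y$ positions are forced to be the double-barrier locations); and finally split those $2y$ ordered positions into up-steps and down-steps obeying the never-negative condition, which by the previous paragraph is counted by the number of Dyck paths of length $2y$, namely $\binom{2y}{y}\frac{1}{y+1}$. Summing over all feasible $x$ and $y$ (with $0\le x\le n$ and $0\le 2y\le n-x$) then yields the stated formula. The only genuinely delicate step is the reduction of the Dyck path property to a condition on the $U/D$ subsequence, together with the fact that $|U|=|D|$: once the flat steps are seen to be free and the two side-sets are shown to be equinumerous, the Catalan factor and the two binomials are immediate.
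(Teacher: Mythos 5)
Your proposal is correct and follows essentially the same argument as the paper: choose the $x$ positions of $R_I$, choose the $2y$ positions of $(R_l\cup R_r)\setminus R_I$, and count the admissible splits into the two difference sets by the Catalan number $\binom{2y}{y}\frac{1}{y+1}$. Your write-up is in fact more careful than the paper's, since you explicitly justify the two steps the paper leaves implicit --- that elements of $R_I$ and of the complement cancel in the prefix-count condition, so the Dyck constraint lives only on the $U/D$ subsequence, and that $|U|=|D|=y$ because both compositions have the same number of parts.
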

\begin{proof}
    We first select the $x$ positions of $R_I$. Each of $R_l\setminus R_I$, $R_r\setminus R_I$ have size $y$, so from the remaining $n-x$ positions we select $2y$ elements for their union. Finally out of the $2y$ elements, we decide which set each element belongs.
    
    To satisfy the Dyck path property, the $2y$ elements distributed to $R_l, R_r$ must satisfy the Dyck path property. The number of ways to do this is exactly the number of Dyck paths of length $2y$ which is the $y$-th Catalan number $\binom{2y}{y}/(y+1)$.
\end{proof}

Let $S_{x,y} = \{R_l,R_r; |R_I|= x, |R_l| = |R_r| = x + y, R_l,R_r \text{satisfies the Dyck path property}\}$. Then we have shown that $|S_{x,y}| = \binom{n}{x}\binom{n-x}{2y}\binom{2y}{y}\frac{1}{y+1}$.

We also note that as an aside, we have proven the following Catalan number identity ($C_k$ denoting the $k$-th Catalan number), which may be useful for other applications.

\[C_{n+1} = \sum_{x = 0}^n\sum_{y=0}^{(n-x)/2}\binom{n}{x}\binom{n-x}{2y}C_y\]

Let $G$ be any proper interval graph on $n+1$ vertices and let $R_l, R_r$ be the Dyck path representation. Consider the following graph $G'$ on $n+3$ vertices, whose Dyck path representation is $R_l' = 0R_l10$, $R_r' = 01R_r0$. Note that by removing a barrier in $R_l'$ at the first index and not in $R_r'$ we obtain the following property: at any index $i$, $|R_l'(i)|$ is strictly greater than that of $|R_r'(i)|$ and this translates to a Dyck path that never touches the $x$-axis. Therefore, $G'$ is a connected proper interval graph. Furthermore $R_I' = 01R_I10$ so that the sizes of the twin vertex classes are preserved - we add two more of size 1.

Thus if $k_i$ are the sizes of the twin vertex classes of $G$, then
\[|2^n/G'| = 4\Pi_i(k_i+1)\]
We will drop the factor of 4 as it contributes to a lower order term.

Lastly, we wish to investigate the number of beer vertex patterns given only the number of parts in the composition representing the equivalence classes of twin vertices, as that is what $R_I$ gives us.

Let $g(R_l, R_r) = g(R_I)$ be the number of beer vertex patterns. That is if $k_i$ are the sizes of the parts in composition defined by $R_I$, then $g(R_I) = \Pi_i(k_i+1)$.

Let $f(n,x)$ be the average over all compositions of $n$ with $n-x$ parts of the number of beer vertex patterns. That is
\[f(n,x) = \sum_y\sum_{R_l,R_r \in S_{x,y}}g(R_l,R_r)/\sum_y|S_{x,y}|\]

Unfortunately, it is difficult to compute $f(n,x)$ exactly, so the best we can do is bound it.
\begin{lemma}
    $2^n(1/2)^x \le f(n,x) \le 2^n(3/4)^x$. Up to poly$(n,1/n)$ factors.
\end{lemma}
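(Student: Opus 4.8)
The plan is to bound the summand $g$ uniformly over all compositions that can occur and let the average be squeezed between these extremes, rather than evaluating the weighted sum directly. Recall that if $R_I$ induces twin-class sizes $k_1,\dots,k_m$, then $g(R_I)=\prod_{i=1}^m(k_i+1)$, and that $(k_1,\dots,k_m)$ is a composition of $n$ into $m=n-x$ parts: indeed $x=|R_I|=\sum_i(k_i-1)$ counts exactly the twin-merges, and the discrepancy between $n$ and $n+1$ vertices only changes things by a $\mathrm{poly}(n)$ factor that we ignore as elsewhere in this appendix. Since $f(n,x)$ is a convex combination of the numbers $g$ taken over configurations $(R_l,R_r)$ whose $R_I$ ranges among such compositions, it suffices to show
\[
2^n(1/2)^x \;\le\; g \;\le\; 2^n(3/4)^x
\]
(up to $\mathrm{poly}(n,1/n)$ factors) for \emph{every} composition of $n$ into $n-x$ parts; then $f(n,x)$, lying between $\min g$ and $\max g$, inherits both inequalities. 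Throughout I would substitute $a_i=k_i-1\ge 0$, so that $\sum_i a_i=x$ and $g=\prod_i(a_i+2)$.

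For the upper bound I would exploit that $a\mapsto\log(a+2)$ is concave, so $\log g=\sum_i\log(a_i+2)$ is maximized over the simplex $\{a_i\ge 0,\ \sum_i a_i=x\}$ at the most balanced point. When $x\le m$ the optimal \emph{integer} configuration sets $a_i=1$ in $x$ of the slots and $a_i=0$ in the remaining $m-x$, giving $g=3^x\,2^{m-x}=3^x\,2^{n-2x}=2^n(3/4)^x$; intuitively, this is where every merge joins two singletons and contributes exactly the factor $3/4$. Hence $\max g=2^n(3/4)^x$ and therefore $f(n,x)\le 2^n(3/4)^x$. For $x>m$ the balanced point has all $a_i\ge 1$, and the same concavity estimate keeps $g$ below $2^n(3/4)^x$ up to a $\mathrm{poly}(n)$ factor; I would record this as a secondary case.

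For the lower bound I would invoke concavity in the opposite direction: a concave function on a polytope attains its minimum at a vertex, and the vertices of the simplex concentrate all of the mass $x$ in a single coordinate. This corresponds to one twin-class of size $x+1$ together with $m-1$ singletons, giving $g=(x+2)\,2^{m-1}=(x+2)\,2^{n-x-1}\ge 2^n(1/2)^x$; intuitively, here the merges form one long chain and each contributes only about a factor $1/2$. Because this minimizing vertex is itself an integer point, it is the genuine minimum over integer compositions, so $f(n,x)\ge\min g\ge 2^n(1/2)^x$.

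The only substantive work is the extremal analysis of $\prod_i(a_i+2)$, and the main obstacle is doing this cleanly across integrality and the full range of $x$: the continuous relaxation of the maximum can slightly exceed $2^n(3/4)^x$, so one must confirm that the optimal \emph{integer} configuration — all parts of size $1$ or $2$ when $x\le m$ — meets the target exactly, and separately dispatch the regime $x>m$. The residual discrepancies, namely the factor $x+2$, the constant absorbed from $2^{m-1}$ versus $2^{n-x}$, and the $n$ versus $n+1$ vertex count, are all $\mathrm{poly}(n,1/n)$ and are swept into the error term of the statement.
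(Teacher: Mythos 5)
Your overall strategy coincides with the paper's: write $f(n,x)$ as an average of $g$ over compositions of $n$ into $n-x$ parts, squeeze it between the extremal compositions, and identify the concentrated composition as the minimizer and the balanced one as the maximizer. Your lower bound is complete and correct: $\log g$ is concave on the simplex $\{a_i \ge 0,\ \sum_i a_i = x\}$, a concave function attains its minimum at an extreme point, the vertex $x\,e_1$ is itself an integer composition, and $(x+2)2^{n-x-1} \ge 2^{n-x} = 2^n(1/2)^x$. This is a slightly slicker justification than the paper's, which gets the same extremal composition by a discrete exchange argument.

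The genuine gap is in the upper bound, precisely at the point you flag as an ``obstacle'' and then leave unresolved. Concavity of $\log$ only bounds $g$ by the continuous maximum $\left(\frac{x}{m}+2\right)^m$ with $m = n-x$, and in the main regime $x \le n/2$ this is \emph{not} within a poly$(n,1/n)$ factor of the target: at $x = n/3$, $m = 2n/3$, the continuous maximum is $(5/2)^{2n/3} = 6.25^{n/3}$ while $2^n(3/4)^{n/3} = 6^{n/3}$, so the relaxation overshoots by $(25/24)^{n/3}$, an exponential factor. Hence the claim that the optimal \emph{integer} configuration consists of parts of size 1 and 2 cannot be inherited from the continuous balanced point; it needs a discrete argument. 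That argument is exactly the paper's exchange step: if two parts satisfy $k_1 < k_2$, replacing them by $k_1+1,\ k_2-1$ changes the contribution from $(k_1+1)(k_2+1)$ to $(k_1+2)k_2$, an increase of $k_2-k_1-1 \ge 0$; iterating shows the integer maximum has all parts within one of each other, which for $x \le n/2$ gives exactly $3^x 2^{n-2x} = 2^n(3/4)^x$. Your deferred case $x > n/2$ also needs substance: there the AM-GM bound $\left(\frac{2n-x}{n-x}\right)^{n-x}$ does lie below $2^n(3/4)^x$ (with equality at $x = n/2$), but that is a nontrivial analytic inequality you never verify; the paper instead extends the bound by induction on $x$, observing for instance that replacing three parts of size 2 by two parts of size 3 multiplies the balanced value by $16/27 < 3/4$. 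So the approach is salvageable, but as written the upper bound rests on a tool (continuous concavity) that provably cannot deliver it in the main regime.
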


\begin{proof}
    Each $g(R_l,R_r)$ computes the number of beer vertex patterns for a composition of $n$ into $n-x$ parts where $x = |R_I|$ (this might be off by 1 but that only contributes a constant factor). Consider two parts of sizes $k_1 < k_2$. These two parts contributes a factor of $(k_1+1)(k_2+1)$. Now consider two parts of $k_1+1,k_2-1$. The factor is now $(k_1+2)(k_2)$. By doing this we have increased our term by $k_2 - k_1 -1 \ge 0$. Thus if we rearrange our composition by evening out the sizes, we achieve a larger total. Conversely, if we concentrate all of the composition into one term, we obtain the smallest total.
    
    Thus $g(R_l,R_r)$ is maximized when all parts are as equal as possible and $g(R_l,R_r)$ is minimized when all parts have size 1 except the last part which has $n-x+1$. As $f(n,x)$ is the average of all composition, it is bounded by the largest valued compositions and the smallest valued composition.
    
    Thus the lower bound for $f(n,x)$ is the composition $(1,1,\ldots,n-x+1)$ which has value approximately $2^{n-x}(n-x+2)$. Removing poly factors we obtain the desired term $2^n(1/2)^x$.
    
    For the upper bound, consider $0 \le x \le n/2$. In this region the maximum valued composition is $(1,\ldots,2)$ where there are $n-2x$ 1s and $x$ 2s. This composition has a total value of $2^{n-2x}3^{x} = 2^n(3/4)^x$.
    
    For $n/2\le x \le 2n/3$, the maximum valued composition is $(2,2,\ldots,3,3)$. To see that the maximum of $2^n(3/4)^x$ holds, we show that when $x$ increases by 1, the total decreases by a factor of at least $3/4$. To see this in this region, when $x$ increases, we replace 3 parts of 2 by 2 parts of 3. That is we replace a factor of $27 = 3^3$ by $16 = 4^2$. As $16/27 < 3/4$ this holds.
    
    For larger $x$ we replace $k+1$ copies of $k$ by $k$ copies of $k+1$. The values are $(k+1)^{(k+1)}$ and $(k+2)^k$ which in all cases decrease the total by at least $(3/4)$.
\end{proof}

We are finally ready to prove the main result for this section.

\begin{theorem}
    To represent a beer proper interval graph $G$ which is able to support $\qadj$ and $\qbdist$ will require at least $(\log 7)n - o(n) \approx 2.81n$ bits in the worst case.
    
    Furthermore, the lower bound cannot be greater than $n\log 15/2 \approx 2.91n$ bits.
\end{theorem}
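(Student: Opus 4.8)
The plan is to evaluate, up to the poly$(n,1/n)$ factors that this appendix already discards, the total weight
\[
N \;=\; \sum_{x}\sum_{y}\sum_{R_l,R_r\in S_{x,y}} g(R_l,R_r),
\]
which (up to the reflection automorphism, costing a factor of $2$) counts the beer proper interval graphs, each ordinary graph contributing its number $g=\prod_i(k_i+1)$ of beer vertex patterns. Using the definition of $f(n,x)$ as the average of $g$ over the compositions into $n-x$ parts, this regroups as
\[
N \;=\; \sum_{x=0}^{n} f(n,x)\,\binom{n}{x}\,T(n-x),
\qquad
T(m) := \sum_{y}\binom{m}{2y}C_y ,
\]
so the whole problem factors into understanding $T$ and then inserting the bounds on $f(n,x)$ from the preceding lemma.

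The first step is to identify $T(m)$. The expression $\sum_{y}\binom{m}{2y}C_y$ is exactly the $m$-th Motzkin number $M_m$; indeed this is the same identity as the paper's $C_{n+1}=\sum_{x}\binom{n}{x}T(n-x)$ read against the classical $C_{n+1}=\sum_{k}\binom{n}{k}M_k$. I would then invoke the standard asymptotic $M_m=\Theta(3^m m^{-3/2})$, whose only relevant feature here is the exponential rate $3^m$, the polynomial correction being absorbed into the slack already tolerated.

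Next I would substitute $T(n-x)=\Theta\!\big(3^{\,n-x}(n-x)^{-3/2}\big)$ together with the two bounds $2^n(1/2)^x \le f(n,x)\le 2^n(3/4)^x$ and evaluate the resulting binomial sums in closed form. For the lower direction, restricting to the connected graphs produced by the $G\mapsto G'$ construction (which is non-symmetric, so that $g$ is the exact pattern count there) gives
\[
N \;\ge\; 2^n\sum_{x}\binom{n}{x}\Big(\tfrac12\Big)^{x}3^{\,n-x}\cdot\mathrm{poly}
\;=\; 2^n\Big(\tfrac12+3\Big)^{n}\cdot\mathrm{poly}
\;=\; 7^{\,n}\cdot\mathrm{poly},
\]
while for the upper direction the crude bound $M_m\le 3^m$ yields
\[
N \;\le\; 2^n\sum_{x}\binom{n}{x}\Big(\tfrac34\Big)^{x}3^{\,n-x}
\;=\; 2^n\Big(\tfrac34+3\Big)^{n}
\;=\; \Big(\tfrac{15}{2}\Big)^{n}.
\]
Taking logarithms produces the advertised bounds $(\log 7)\,n - o(n)$ and $(\log 15/2)\,n$ for any structure supporting $\qadj$ and $\qbdist$.

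The main obstacle is the lower bound rather than the upper one: the clean inequality $M_m\le 3^m$ settles the upper direction with no analysis, but to push the same rate $3^{\,n-x}$ through for a genuine lower bound I must use the sharper $M_m\ge c\,3^m m^{-3/2}$ and verify that the mass of the binomial sum concentrates at $x=\Theta(n)$. A ratio test places the peak near $x\approx n/7$, so that $n-x=\Theta(n)$; on that range the Motzkin polynomial factor $(n-x)^{-3/2}$ is merely poly$(n)$ and vanishes after taking the logarithm, so the dominant term indeed realizes the rate $(\tfrac12+3)^n=(\tfrac72)^n$ and the information-theoretic requirement is $(\log 7)\,n-o(n)$.
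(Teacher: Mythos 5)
Your proposal is correct and takes essentially the same approach as the paper: regroup the weighted count by $x=|R_I|$, apply the bounds $2^n(1/2)^x \le f(n,x) \le 2^n(3/4)^x$, show the inner sum over $y$ is $3^{n-x}$ up to polynomial factors, and finish with the binomial theorem to obtain $7^n$ and $(15/2)^n$. Your identification of $\sum_y \binom{m}{2y}C_y$ as the Motzkin number $M_m$ (with asymptotics $\Theta(3^m m^{-3/2})$) is only a cosmetic refinement of the paper's substitution $C_y \approx 4^y$ followed by evaluating the even binomial sum $\sum_y \binom{m}{2y}4^y \approx 3^m$; both yield the same rate and the same final bounds.
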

\begin{proof}
    First we consider the lower bound. For each beer proper interval graph $G$ on $n+1$ vertices, we apply the transformation to make it connected on $n+3$ vertices. We will also use the approximation $4^y$ to the $y$-th Catalan numbers as that is good enough up to poly$(n,1/n)$ factors. We will also drop the factor 4 that arises in the transformation. Thus the number of beer connected proper interval graphs $N$ on $n+3$ vertices is at least
    
    \begin{align*}
        & \sum_{x=0}^n\sum_{y=0}^{(n-x)/2}\sum_{R_l,R_r\in S_{x,y}} g(R_l,R_r)\\
        &\approx \sum_{x=0}^n\binom{n}{x}f(n,x)\sum_{y=0}^{(n-x)/2}\binom{n}{x}4^y\\
        &\approx \sum_{x=0}^n\binom{n}{x}f(n,x)3^{n-x}\\
        &\ge 6^n \sum_{x=0}^n\binom{n}{x}(1/6)^x\\
        &= 6^n(7/6)^n = 7^n
    \end{align*}
    Taking the log we see that $\log(N) \ge n\log 7 \approx 2.81n$.
    
    On the other hand, for every beer proper interval graph, we may compute $g(R_l,R_r)$ which is exact if it is connected but is only an upper bound if not, thus we obtain the upper bound:
    \begin{align*}
        & \sum_{x=0}^n\sum_{y=0}^{(n-x)/2}\sum_{R_l,R_r\in S_{x,y}} g(R_l,R_r)\\
        &\approx \sum_{x=0}^n\binom{n}{x}f(n,x)\sum_{y=0}^{(n-x)/2}\binom{n}{x}4^y\\
        &\approx \sum_{x=0}^n\binom{n}{x}f(n,x)3^{n-x}\\
        &\le 6^n \sum_{x=0}^n\binom{n}{x}(1/4)^x\\
        &= 6^n(5/4)^n = (15/2)^n
    \end{align*}
    Again taking the log we see that $\log(N) \le \log(15/2)n \approx 2.91n$
\end{proof}

\subsection{Improved Lower bound}
In this section we will improve the lower bound attained in the previous section from $n\log 7$ to $n\log (6+\sqrt{2})$.

We begin with our counting identity:

\[C_{n+1} = \sum_{x = 0}^n\sum_{y=0}^{(n-x)/2}\binom{n}{x}\binom{n-x}{2y}C_y\]

and rewrite it by switching the order of summation:

\[C_{n+1} = \sum_{y=0}^{n/2}\sum_{x=0}^{n-2y}\binom{n}{2y}\binom{n-2y}{x}C_y\]

With the interpretation of first choosing the $2y$ elements of $R_l\cup R_r \setminus R_I$, then choosing which set of $R_l$ and $R_r$ each of these element goes - which again must be a Dyck path on their own. Finally among the remaining elements we choose $x$ of them to be in the intersection.

In this view, fix $R_l' = R_l\setminus R_I$ and $R_r' = R_r \setminus R_I$, with $R_I \subseteq [n] \setminus (R_l'\cup R_r')$.

Define $T = \{(R_l'\cup R_I, R_r'\cup R_I); R_I \subseteq [n] \setminus (R_l'\cup R_r')\}$ be the set of $R_l, R_r$ that we can obtain.

Let $k_1,k_2\ldots,k_{2y+1}$ be the composition defined by $R_I = [n]\setminus S$. Then for any smaller $R_I$, the effect on the partition is to split the parts $k_i$ into smaller parts. Viewing part separately, we see that over all $R_I$, we obtain all partitions of each $k_i$ independently.

Thus $\sum_{(R_l,R_r) \in T} g(R_l,R_r) = \Pi_{i=1}^{2y+1} (h(k_i))$ where $h(k_i)$ denotes the sum over all partitions of $(p_1,\ldots, p_j)$ of $k_i$ elements where the value of each partition is of course $(p_1+1)(p_2+1)\ldots(p_j+1)$.

$h(k)$ follows the recurrence $h(k) = \sum_{i=2}^{k+1}i\cdot h(k-i+1)$ by looking at the size of the last part of the composition. Furthermore it follows the recurrence $h(k) = 4h(k-1) - 2h(k+1)$ and has the close form formula $h(k) = \frac{(2+\sqrt{2})^{k+1}-(2-\sqrt{2})^{k+1}}{4\sqrt{2}}$. $h(k)$ is the sequence A003480 of OEIS \cite{oeis-sloane}.

As $\frac{(2+\sqrt{2})^{k+1}-(2-\sqrt{2})^{k+1}}{4\sqrt{2}} = \frac{(2+\sqrt{2})^{k+1} (1-\frac{1}{3\sqrt{2}})^{k+1}}{4\sqrt{2}} \ge \frac{(2+\sqrt{2})^{k+1} (1-\frac{1}{3\sqrt{2}}^{2})}{4\sqrt{2}} = (2+\sqrt{2})^k(2-\sqrt{2})$ we obtain a nice form for 

\[\sum_{(R_l,R_r) \in T} g(R_l,R_r) = \Pi_{i=1}^{2y+1} (h(k_i)) \ge (2+\sqrt{2})^n(2-\sqrt{2})^{2y+1}\].

\begin{theorem}
    To represent a beer proper interval graph $G$ which is able to support $\qadj$ and $\qbdist$ will require at least $(\log 6+\sqrt{2})n - o(n) \approx 2.89n$ bits in the worst case.
    
    Furthermore, the lower bound cannot be greater than $n\log 15/2 \approx 2.91n$ bits.
\end{theorem}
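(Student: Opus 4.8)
The plan is to assemble the identities established above into a single weighted count $N$ of beer connected proper interval graphs and read off its exponential growth rate. Using the reordered identity, I would parametrise every admissible barrier configuration $(R_l,R_r)$ by: the number $y$ with $|R_l\setminus R_I|=|R_r\setminus R_I|=y$; a choice of the $2y$ non-shared barrier positions $S=(R_l\setminus R_I)\cup(R_r\setminus R_I)$, of which there are $\binom{n}{2y}$; a Dyck-valid split of $S$ into $R_l\setminus R_I$ and $R_r\setminus R_I$, of which there are $C_y$; and finally the shared-barrier set $R_I$ among the remaining positions. The $2y$ barriers of $S$ cut the vertices into $2y+1$ blocks of sizes $k_1,\dots,k_{2y+1}$ with $\sum_i k_i=n$ (up to the harmless constant shift in the number of vertices), and summing the weight $g$ over all choices of $R_I$ refines each block independently, which is exactly what produces the product $\prod_{i=1}^{2y+1}h(k_i)$ already computed. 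Hence $N=\sum_{y}\binom{n}{2y}C_y\prod_{i}h(k_i)$.

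Next I would invoke the uniform estimate $\prod_i h(k_i)\ge (2+\sqrt2)^n(2-\sqrt2)^{2y+1}$, which follows from $h(k)\ge(2+\sqrt2)^k(2-\sqrt2)$ for every $k\ge1$ together with $\sum_i k_i=n$, and which crucially does not depend on the particular block sizes. Combining this with $C_y\ge 4^y/\mathrm{poly}(n)$ and pulling out the factors independent of $y$ gives, up to a $\mathrm{poly}(n)$ factor,
\[
  N\ \ge\ (2+\sqrt2)^n(2-\sqrt2)\sum_{y=0}^{n/2}\binom{n}{2y}\bigl[4(2-\sqrt2)^2\bigr]^y .
\]

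The decisive algebraic observation is that $4(2-\sqrt2)^2=\bigl(2(2-\sqrt2)\bigr)^2=(4-2\sqrt2)^2$, so the remaining sum is an even-indexed binomial sum and equals $\tfrac12\bigl[(5-2\sqrt2)^n+(2\sqrt2-3)^n\bigr]$; since $|2\sqrt2-3|<1$ the second term is negligible and the dominant term is $\tfrac12(5-2\sqrt2)^n$. This yields $N\ge \mathrm{const}\cdot\bigl[(2+\sqrt2)(5-2\sqrt2)\bigr]^n/\mathrm{poly}(n)$, and one checks $(2+\sqrt2)(5-2\sqrt2)=6+\sqrt2$ (equivalently $\sqrt{24-16\sqrt2}=4-2\sqrt2$). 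Because the dominant index is $y=\Theta(n)$, every $\mathrm{poly}(n)$ factor, from Stirling's estimate of $C_y$ and from the binomial tail, is absorbed into a lower-order term, so taking logarithms gives the claimed $\log N\ge n\log(6+\sqrt2)-o(n)\approx 2.89n$ bits. For the final sentence, the upper bound $N\le(15/2)^n\cdot\mathrm{poly}(n)$ is exactly the one derived in the preceding subsection using the exact value of $g$ (which over-counts beer vertex patterns only for disconnected graphs), so $\log N\le n\log(15/2)\approx 2.91n$ needs no new argument.

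The step I expect to require the most care is the factorisation in the first paragraph: one must verify that summing $g$ over all shared-barrier sets $R_I$ genuinely decouples across the $2y+1$ blocks, that is, that refining block $i$ by interior shared barriers ranges over all compositions of $k_i$ and contributes precisely $h(k_i)$, independently of the other blocks, and that the block sizes $k_i$ (hence the product lower bound) are determined by $S$ alone so that the estimate is uniform over the $\binom{n}{2y}C_y$ choices. Everything downstream is routine manipulation, the only non-obvious identity being $4(2-\sqrt2)^2=(4-2\sqrt2)^2$, which converts the generating-function bound into a closed-form binomial sum.
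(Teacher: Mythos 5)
Your proposal is correct and follows essentially the same route as the paper's proof: the same parametrisation by $y$, $\binom{n}{2y}$, $C_y$, and $R_I$; the same factorisation $\sum_{R_I} g = \prod_i h(k_i)$ with the uniform bound $h(k)\ge(2+\sqrt2)^k(2-\sqrt2)$; the same even-indexed binomial sum with $t=4-2\sqrt2$ giving $(2+\sqrt2)(5-2\sqrt2)=6+\sqrt2$; and the same appeal to the earlier $(15/2)^n$ upper bound. Your treatment is in fact slightly more careful than the paper's, which bounds the even-index sum by $(1+t)^n$ outright rather than via the exact identity $\frac{(1+t)^n+(1-t)^n}{2}$ (a constant-factor discrepancy absorbed in the lower-order terms either way).
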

\begin{proof}
    We have already proven the upper bound in the previous section. For the lower bound, we again consider all Dyck path representation for proper interval graphs on $n+1$ vertices and transform them into connected proper interval graphs on $n+3$ vertices. The number of beer connected proper interval graphs is at least
    
    \begin{align*}
        &\sum_{y=0}^{n/2}\sum_{x=0}^{n-2y}\binom{n}{2y}\binom{n-2y}{x}C_y g(R_l,R_r) \\
        & \ge \sum_{y = 0}^{n/2}\binom{n}{2y}2^{2y}(2+\sqrt{2})^n(2-\sqrt{2})^{2y}\\
        & \ge (2+\sqrt{2})^n (1+2(2-\sqrt{2}))^n\\
        & = (6 + \sqrt{2})^n
    \end{align*}
\end{proof}

\end{document}